  \declaretheorem[name=Theorem]{theorem}
  \declaretheorem[name=Proposition,numberlike=theorem]{proposition}
  \declaretheorem[name=Definition,numberlike=theorem]{definition}
  \declaretheorem[name=Lemma,numberlike=theorem]{lemma}
\newcommand{\ignore}[1]{}
  \DeclareMathOperator{\val}{val}
  \DeclareMathOperator{\OR}{OR}
  \newcommand{\Int}{{\mathbb Z}}
  \newcommand{\upRel}[2]{{#1}^{[#2]}}
\DeclareMathOperator{\arr}{\rightarrow}
\DeclareMathOperator{\larr}{\leftarrow}
\DeclareMathOperator{\sarr}{\leftrightarrow}
\newcommand{\relstr}[1]{\mathfrak{#1}}
\newcommand{\group}[1]{\mathcal{#1}}
\newcommand{\kpower}[1][k]{{#1} \arr  }
\newcommand{\plusarr}[1]{#1^{\arr}}
\newcommand{\minusarr}[1]{#1^{{\larr}}}
\newcommand{\orbiteq}{\sim}
\newcommand{\quotient}[1]{[#1]_{\orbiteq}}
\newcommand{\OG}{\relstr{U}}
\newcommand{\psarr}[1]{#1^{\sarr}}
\newcommand*\centermathcell[1]{\omit\hfil$\displaystyle#1$\hfil\ignorespaces}
\newcommand{\aut}{\operatorname{Aut}}
\newcommand{\pol}{\operatorname{Pol}}
\newcommand{\id}{\operatorname{id}}
\newcommand{\oc}{$\omega$-categorical }
\newcommand{\fa}{\mathfrak{A}}
\newcommand{\fb}{\mathfrak{B}}
\newcommand{\fc}{\mathfrak{C}}
\begin{document}
\title{Symmetries of structures\\
that fail to interpret something finite}%

\author{\IEEEauthorblockN{Libor Barto}
\IEEEauthorblockA{Charles University}
\and
\IEEEauthorblockN{Bertalan Bodor}
\IEEEauthorblockA{Charles University}
\and
\IEEEauthorblockN{Marcin Kozik}
\IEEEauthorblockA{Jagiellonian University}
\and
\IEEEauthorblockN{Antoine Mottet}
\IEEEauthorblockA{Technische Universität Hamburg}
\and 
\IEEEauthorblockN{Michael Pinsker}
\IEEEauthorblockA{Technische Universität Wien}
}

\maketitle

\thispagestyle{plain}
\pagestyle{plain}

\begin{abstract}
  We investigate structural implications arising from the condition that a given directed graph does not interpret, in the sense of primitive positive interpretation with parameters or orbits, every finite structure. Our results generalize several theorems   from the literature and yield further algebraic invariance properties that must be satisfied in every such graph. 
  Algebraic properties of this kind are tightly connected to the tractability of constraint satisfaction problems, and we obtain new such properties even for infinite countably categorical graphs. We balance these positive results by showing the existence of a countably categorical %
   hypergraph 
  that fails to interpret some finite structure, while still lacking some of the most essential %
  algebraic invariance properties known to hold for finite structures.
\end{abstract}

\IEEEpeerreviewmaketitle

\section{Introduction}

\subsection{The story}

  A major milestone in the theory of Constraint Satisfaction Problems (CSPs)  was a theorem due to Hell and Ne\v set\v ril~\cite{HellNesetril} which established a P/NP-complete complexity dichotomy for the computational problem of $H$-coloring of undirected graphs.
  Almost two decades later, Barto, Kozik and Niven~\cite{BartoKozikNiven} extended the dichotomy result to finite directed graphs with no sources and no sinks.
  Both results were subsumed by the CSP dichotomy theorem proven by Bulatov and Zhuk~\cite{Bulatov, ZhukFVConjecture}, which showed the same dichotomy holds for arbitrary finite directed graphs, or equivalently, arbitrary finite structures. 

  Both of the earlier results yield not only a complexity dichotomy, but a  structural dichotomy for the appropriate class of graphs,
  and derive the computational result as a direct consequence. 
  In the case of the theorem of Hell and Ne\v set\v ril, the structural result can be stated as follows: 
  any finite, undirected, loopless graph is either bipartite, in which  case the associated $H$-coloring problem is essentially just the $2$-coloring problem, or it \emph{pp-constructs} (in the sense of~\cite{wonderland}) the $3$-element clique $K_3$, and the $H$-coloring problem is as hard as $3$-coloring. If the graph is a core, i.e., the smallest template for its $H$-coloring problem, then it is either a single edge and defines precisely the $2$-coloring problem, or it \emph{pp-interprets} $K_3$, and consequently any finite structure, with parameters, providing a tangible witness of hardness. 
  
  By an observation of Siggers~\cite{Siggers}, the latter structural dichotomy for graphs also implies an algebraic invariance property for any finite core structure that fails to pp-interpret $K_3$ with parameters; similarly, the result of Barto, Kozik and Niven and its various extensions provide further such invariance properties, of which in particular the so-called weak near-unanimity (WNU) polymorphisms play a crucial role in Zhuk's proof of the CSP dichotomy theorem.

  While the proof of Hell and Ne\v set\v ril is purely combinatorial in nature,  the generalization of Barto, Kozik and Niven relies on the machinery developed in the context of the algebraic approach to CSP. The elegance  of this approach comes, so it seems,  at the cost of difficulties to generalize it to infinite countably categorical graphs. In fact, in the long line of research devoted to extending the CSP dichotomy theorem to the class
  of CSPs defined by first-order reducts of finitely bounded homogeneous structures (an important natural subclass of the class of $\omega$-categorical structures, see~\cite{Book, infinitesheep}), the only successful attempt so far to obtain structural dichotomies for $\omega$-categorical graphs similar to the ones mentioned above is due to Barto and Pinsker~\cite{Topo}. 
  Their approach builds on a streamlined  version, due to  Bulatov~\cite{BulatovHColoring},  of 
  the proof of Hell and Ne\v set\v ril, 
  and shows that any $\omega$-categorical graph which contains $K_3$ and which has no edge within an orbit of its automorphism group (a~\emph{pseudoloop}), pp-interprets together with the orbits of this group and parameters the clique $K_3$.  

  This result yields a non-trivial algebraic invariance property for any $\omega$-categorical structure which is a model-complete core and which fails to pp-interpret $K_3$ with parameters, and this property separates precisely tractable from intractable CSPs for in the realm of first-order reducts of finitely bounded homogeneous structures according to a conjecture of Bodirsky and Pinsker~\cite{BPP-projective-homomorphisms}. However, all attempts at a generalization of the more general, algebraic proof due to Barto, Kozik and Niven, not to mention the more advanced algebraic results available for general finite structures, 
  have failed. 
  The reason for that seems to be very elementary: 
  it is much easier to extend~%
   even very complicated 
  combinatorial constructions than to
  lift the algebraic notions tailored for finite algebras. As a result, no algebraic invariance properties except for the one obtained by Barto and Pinsker are known for the templates conjectured to have tractable CSPs.

In order to overcome this obstacle the following two-step approach is natural:%
  \begin{itemize}
      \item[(i)] provide combinatorial proofs of the known finite structural dichotomies; 
      \item[(ii)] lift the proofs to the countably categorical setting.
  \end{itemize}

  Upon closer inspection of step (ii),
  the following, obvious, difference between finite and infinite structures comes into focus.
  In the former, one can use \emph{all} elements of the structure as parameters in a single pp-interpretation, a standard trick applied throughout the entire theory. On the other hand, %
  countably categorical structures have finitely many tuples of fixed finite length only \emph{up to automorphisms}~%
  (i.e., the number of orbits of their automorphism group acting on tuples is finite). 
  Therefore generalizing results using orbits, rather than elements, as parameters seems more natural. 
  However, no results in this direction are known even for finite structures, adding another necessary step to the plan:   
    \begin{itemize}
      \item[(iii)] provide stronger finite dichotomies which use orbits of a permutation group rather than elements as parameters in an interpretation.
      \end{itemize}

  \subsection{Our contributions}
  We achieve combinatorial results in all three directions, and moreover contrast these with strong evidence that the algebraic methods from the finite do not lift to general countably categorical structures. 
  \subsubsection{Barto, Kozik and Niven, revisited}
    Our first contribution is a new, purely combinatorial proof of the dichotomy for finite digraphs with no sources and no sinks, 
     thus reproving the result of Barto, Kozik and Niven in the spirit of~(i) above. 
    In fact we provide two generalizations of the result, both for finite directed graphs. In our first generalization, we obtain a pp-interpretation with the orbits  of a subgroup of the automorphism group of the digraph, achieving~(iii) for the theorem of Barto, Kozik and Niven.
    \begin{theorem}%
    \label{thm:two}
Let $\relstr{A}$ be a finite smooth digraph, and let $\group G$ be a subgroup of its automorphism group $\aut(\relstr A)$. If $\relstr{A}$ is linked and without a loop, then $\relstr{A}$ expanded by  orbits of $\group G$ pp-interprets $K_3$ and hence EVERYTHING, i.e., every finite structure.%
\end{theorem}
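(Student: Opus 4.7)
My strategy is to give a combinatorial construction of the pp-interpretation for $K_3$ in the expansion of $\relstr A$ by the $\group G$-orbits, uniformly in the subgroup $\group G \leq \aut(\relstr A)$. The three hypotheses on $\relstr A$ drive the construction: smoothness ensures that every vertex is incident to arbitrarily long walks in either direction; linkedness (together with smoothness and loop-freeness) forces the algebraic length of $\relstr A$ to be $1$, so between any two vertices there exist walks of every sufficiently large algebraic length; and the absence of a loop provides the rigidity that rules out a bipartite-like collapse of any prospective interpretation.

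The core of the construction is to exhibit a pp-definable binary relation $R$ on a pp-definable subset $D \subseteq A^n$, together with a pp-definable equivalence $\sim$ on $D$, such that $(D/{\sim}, R/{\sim}) \cong K_3$. A natural attempt is to define $R(\bar x, \bar y)$ via walks of carefully prescribed shape whose intermediate and endpoint vertices are constrained to lie in prescribed $\group G$-orbits; the resulting pp-formulas would be built entirely from the edge relation and orbit unary relations, never from individual vertex parameters. Smoothness and linkedness guarantee the existence of walks satisfying such constraints, while the no-loop hypothesis ensures that the quotient $R/{\sim}$ is itself loop-free, hence genuinely three-chromatic rather than bipartite.

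The main obstacle is precisely the $\group G$-invariance of the construction: one cannot use individual elements as parameters, only $\group G$-orbits. In the classical Barto--Kozik--Niven proof parameters are used freely, which in our setting corresponds essentially to the case $\group G = \trivial$, in which orbits are singletons and the full power of parameters is regained. For a general $\group G$, and especially for the extremal case $\group G = \aut(\relstr A)$ where orbits are coarsest, every pp-formula in the interpretation must be preserved by $\group G$, ruling out the standard symmetry-breaking tricks. I expect the crux of the proof to be a rigidity argument establishing that a loop-free, linked, smooth digraph admits sufficient pp-definable asymmetry between distinct $\group G$-orbit configurations to seed a triangle canonically, without singling out any vertex. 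The no-loop hypothesis should play its role here by obstructing the 2-chromatic alternatives that would otherwise arise from a purely symmetric construction.
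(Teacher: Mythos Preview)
What you have written is a statement of intent, not a proof. You identify the target (a pp-interpretation of $K_3$ using only the edge relation and the $\group G$-orbits), you correctly flag the difficulty (no individual parameters, so every formula must be $\group G$-invariant), and then you say you \emph{expect} a rigidity argument to exist. That expectation is precisely the content of the theorem; no construction of the sets $D$, $R$, $\sim$, or even a single concrete pp-formula, is given. The phrase ``walks of carefully prescribed shape whose intermediate and endpoint vertices are constrained to lie in prescribed $\group G$-orbits'' does not by itself produce a three-class quotient, and nothing in your sketch explains why such a quotient should have exactly three classes rather than two or many. The no-loop hypothesis does rule out a loop in $\relstr A$, but it does not by itself rule out a bipartite collapse of whatever relation you build; that requires an actual argument about the specific formula.

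The paper's route is quite different from what you outline, and it is worth knowing. It does \emph{not} build $K_3$ directly. Instead it shows (Theorem~\ref{thm:two_refined}) that $\arr$ together with the $\group G$-orbits pp-defines either a nonempty $B\subseteq A$ with $B^2\subseteq{\arr}$ (impossible here since $\relstr A$ is loopless) or the relation $\OR(\alpha,\alpha)$ for some proper equivalence $\alpha$ on a subset $C\subseteq A$; the folklore Proposition~\ref{prop:OR} then gives the interpretation of every finite structure. Producing $\OR(\alpha,\alpha)$ is the real work, carried out in Theorem~\ref{thm:main_finite} via Rosenberg-style structure theory: one first extracts a proper totally symmetric, totally reflexive relation, upgrades it to a P-central or PQ-central relation, and from there manufactures $\OR(T,T)$ for a proper TSR $T$, which is then refined to $\OR(\alpha,\alpha)$. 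The $\group G$-orbit relation enters concretely as the orbit $O$ of the tuple $(1,\dots,|A|)$, used to cut the arity of P-central/PQ-central relations down (Lemmas~\ref{prop:centerispp} and~\ref{prop:PQtoQ}) and to build $\OR(T,T)$ (Lemma~\ref{lem:orfromQ}). None of this is walk-based triangle construction; if you want to complete your approach you would need to supply a genuinely new argument, and at present there is no indication of one.
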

As an immediate consequence of this theorem, we obtain for the first time a specific algebraic invariance property for any finite core structure which satisfies \emph{any} non-trivial algebraic invariance property, see~\Cref{thm:ugly_terms}. This result is a generalization of Sigger's result from~\cite{Siggers} which requires the structure to contain all singleton relations.

In the second generalization of Barto, Kozik, and Niven's result we are able to replace some of the assumptions on the digraph by ``pseudo-assumptions", i.e., assumptions on the graph after factorization by orbits of a subgroup of its automorphism group, making the proof amenable to the infinite setting.

    \begin{theorem}%
    \label{thm:one_a}
    Let $\relstr{A}$ be a finite smooth digraph, and let $\group G$ be a subgroup of its automorphism group $\aut(\relstr A)$.
    If the factor graph $\relstr{A}/\group G$ has algebraic length 1 and no loop, 
    then $\relstr{A}$  pp-interprets  with parameters EVERYTHING.
    \end{theorem}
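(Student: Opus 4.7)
The plan is to mirror the combinatorial proof of the Barto-Kozik-Niven theorem (reproved in this paper) with each appeal to loops or closed walks in $\relstr{A}$ replaced by the corresponding statement in the factor graph $\relstr{A}/\group G$. The endpoint is a reduction to Theorem~\ref{thm:two}: I aim to pp-define from $\relstr{A}$ with parameters a finite smooth linked loopless digraph $\relstr B$, and then apply Theorem~\ref{thm:two} with the trivial subgroup of $\aut(\relstr B)$, whose orbits are singletons and hence pp-definable from parameters.

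First, I reduce to the case that $\relstr{A}/\group G$ is weakly connected by choosing a parameter $a \in A$ whose weakly connected component in $\relstr A$ projects to a weakly connected component of $\relstr{A}/\group G$ of algebraic length $1$, and pp-isolating this component via the zig-zag reachability formula from $a$. Next, fixing a closed walk $W$ of algebraic length $1$ and total length $\ell$ in $\relstr{A}/\group G$, the fact that $\group G \leq \aut(\relstr A)$ implies that each quotient edge lifts to an edge out of every vertex of its source orbit, so $W$ yields a pp-formula $\phi_W(x,y)$ over $\relstr A$ expressing that $x$ and $y$ are the endpoints of an $\relstr A$-walk of pattern $W$; for every $x \in A$, some $y$ in the $\group G$-orbit of $x$ satisfies $\phi_W(x,y)$. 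From $\phi_W$ and the edge relation $E$ of $\relstr A$ I then construct a pp-power $\relstr B$ whose edge relation combines $E$ with $\phi_W$ so that (a) a loop of $\relstr B$ would project to a closed walk of algebraic length $1$ inside a single $\group G$-orbit, hence a loop of $\relstr{A}/\group G$, contradicting the hypothesis; and (b) iteration of $\phi_W$ provides enough algebraic-length flexibility to collapse distinct linkedness classes, making $\relstr B$ linked. Smoothness of $\relstr B$ is inherited from that of $\relstr A$. Theorem~\ref{thm:two} applied to $\relstr B$ with the trivial subgroup then delivers the desired pp-interpretation of EVERYTHING in $\relstr B$, and hence in $\relstr A$ with parameters.

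I expect the main obstacle to be the simultaneous realization of (a) and (b) in the construction of $\relstr B$: natural constructions yielding linkedness tend to introduce loops, and vice versa. This is the tension at the heart of the original BKN argument, and I anticipate the correct construction to parallel the most delicate step of the proof of Theorem~\ref{thm:two}, with the orbits of $\group G$ playing in the pseudo-setting the structural role of iterated smooth congruences in the classical finite analysis.
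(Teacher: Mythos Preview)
Your high-level plan is reasonable, but the proposal has a genuine gap at exactly the place you flag: you never construct $\relstr B$. The formula $\phi_W(x,y)$ only asserts the existence of an $\relstr A$-walk of pattern $W$; it does not constrain $y$ to lie in the $\group G$-orbit of $x$. Consequently, any edge relation you build by ``combining $E$ with $\phi_W$'' will contain many more pairs than those coming from lifts of the closed walk in $\relstr A/\group G$, and there is no reason your claim~(a) should hold. Concretely, take the undirected $6$-cycle with $\group G$ the rotation by $3$: the quotient is $K_3$, and any $\phi_W$ for a closed walk of odd length simply defines the complete bipartite relation between the two colour classes. Every pp-definable binary relation you can produce from $E$ and $\phi_W$ alone respects the bipartition, so none is simultaneously linked and loopless on $A$; you are forced to use parameters in a more delicate way, and you give no indication how. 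The last paragraph of your proposal essentially concedes this: you ``anticipate'' that the right construction parallels the hardest step of Theorem~\ref{thm:two}, but that step does not transfer, because in Theorem~\ref{thm:two} the digraph itself is already linked, whereas here only the quotient has algebraic length~$1$.

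The paper does \emph{not} reduce Theorem~\ref{thm:one_a} to Theorem~\ref{thm:two}. Both are derived from a common technical engine, Theorem~\ref{thm:main_finite}, and the passage from the pseudo-assumption to that engine is the main new idea. The trick is to choose $g\in\group G$ sending the endpoint of a lifted length-$1$ walk back to its start and to replace $\arr$ by $\arr' \coloneqq \arr + g$. The digraph $(A;\arr')$ genuinely has algebraic length~$1$, is smooth, and has no loop (a loop would give a loop in $\relstr A/\group G$), so it is $k$-linked on a component and Theorem~\ref{thm:main_finite} applies to it. The difficulty is that $\arr'$ is \emph{not} pp-definable from $\arr$; the paper introduces \emph{ranked pp-formulas} (rpp-formulas) over $\arr'_{01}$ and proves (Lemma~\ref{lem:pp_from_rpp}) that any $0$-ranked relation rpp-defined from $\arr'_{01}$ with parameters is already pp-definable from $\arr$ with parameters. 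Iterating Theorem~\ref{thm:main_finite} then yields a $0$-ranked $\OR(\alpha,\alpha)$, which transfers back to $\arr$ and finishes via Proposition~\ref{prop:OR}. This ranked-formula mechanism is precisely the missing ingredient in your proposal: it lets one reason about the non-pp-definable linked digraph $\arr'$ while retaining enough control to pull conclusions back to $\arr$.
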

    
    We remark that these two theorems are both consequences of a more general theorem we prove,~\Cref{thm:main_finite}. 

\subsubsection{Hell and Ne\v set\v ril, revisited}

    Our second main contribution is lifting~%
    the result of Hell and Ne\v set\v ril to the countably categorical case: 
    We show that if a graph without sources and sinks is such that
    the factor graph on the orbits of a subgroup of its automorphism group satisfies the assumptions in the theorem of Hell and Ne\v set\v ril, then the graph pp-interprets, using parameters and the  orbits of the group, all finite structures. This achieves~(ii) above for this theorem.
    
\begin{theorem}%
\label{thm:one_b}
    Let $\relstr{A}$ be a smooth digraph, and let $\group G$ be a subgroup of its automorphism group $\aut(\relstr A)$. Assume that $\group G$ has only  finitely many orbits in its action on pairs.  
    If $\relstr{A}/\group G$ is symmetric, loopless, and not bipartite, 
    then $\relstr{A}$ expanded by the orbits of $\group G$ pp-interprets  with parameters EVERYTHING. 
    \end{theorem}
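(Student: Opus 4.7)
The plan is to reduce Theorem~\ref{thm:one_b} to the finite \Cref{thm:one_a} via the factor digraph $\fb := \relstr{A}/\group{G}$, and then lift the resulting pp-interpretation to the expansion $\relstr{A}^+$ of $\relstr{A}$ by the orbits of $\group{G}$. Since $\group{G}$ has only finitely many orbits on pairs, $\fb$ is finite, and it is smooth because $\relstr{A}$ is. The hypotheses that $\fb$ is symmetric, loopless, and non-bipartite yield that $\fb$ has no loop and algebraic length~$1$: any edge of the symmetric $\fb$ is a closed walk of length~$2$, an odd cycle guaranteed by non-bipartiteness is a closed odd walk, and $\gcd(2, \text{odd}) = 1$. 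Applying \Cref{thm:one_a} to $\fb$ with the trivial subgroup of $\aut(\fb)$ then yields a pp-interpretation of EVERYTHING in $\fb$ with parameters, where the parameters are elements of $\fb$, i.e., orbits of $\group{G}$ on $A$.

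The remaining step is to lift this finite pp-interpretation to $\relstr{A}^+$ with parameters and orbit predicates. Choose for each orbit $O_k$ of $\group{G}$ a representative $a_k \in O_k$ to serve as a parameter in $\relstr{A}^+$. Translate each factor-level pp-formula into a pp-formula over $\relstr{A}^+$ by substitutions: each variable $y$ ranging over $\fb$ becomes a variable $x$ ranging over $A$; each atomic constraint $y = O_k$ becomes $O_k(x)$, using the orbit predicate; and each atomic factor-edge $E/\group{G}(y_1, y_2)$ becomes a pp-formula existentially witnessing an edge of $\relstr{A}$ between elements of the appropriate orbits.

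The principal obstacle is this last translation, since the pullback of $E/\group{G}$ to $A \times A$ is the union $\bigcup_{(i,j)\colon E/\group{G}(O_i, O_j)} O_i \times O_j$, a disjunction of pp-definable relations that is not directly pp-definable in $\relstr{A}^+$ in general. I plan to circumvent this by first refining the factor pp-interpretation so that every variable is pinned by parameter-equality atoms to a single specific orbit of $\group{G}$ — possible because $\fb$ is finite and all its elements may be declared parameters — and then decomposing the pp-interpretation into finitely many orbit-typed pieces. Within each piece every $E/\group{G}$-atom occurs between variables of fixed orbit type and translates faithfully to a pp-formula over $\relstr{A}^+$. The lifted pp-interpretation of EVERYTHING in $\relstr{A}^+$ is then assembled from these typed pieces, the orbit predicates serving to encode the type of each variable inside a single pp-formula rather than as a disjunctive union; showing that this assembly stays within primitive-positive logic is the delicate technical step, and will exploit both the finiteness of $\fb$ and the richer signature of $\relstr{A}^+$.
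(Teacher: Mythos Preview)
Your approach has a genuine gap at the lifting step, and that gap is the heart of the theorem. The quotient map $\pi\colon A\to\relstr{A}/\group{G}=\fb$ is a digraph homomorphism but not a pp-interpretation: its kernel is the ``same $\group{G}$-orbit'' relation $\bigcup_k O_k\times O_k$, a disjunction that is in general not pp-definable from $\arr$ and the $1$-orbits. Hence pp-interpretability of $K_3$ in $\fb$ does not transfer to $\relstr{A}^+$ by composition with $\pi$. Your proposed workaround of pinning each variable to a fixed orbit does not repair this. In a pp-formula over $\fb$, a single variable $y$ may participate in several atoms; after lifting to a variable $x$ over $A$, reusing the same $x$ forces a single \emph{element} where the $\fb$-formula only forced a single \emph{orbit}, while introducing fresh variables per occurrence loses the constraint entirely. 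Translating a factor-edge atom $E/\group{G}(y_1,y_2)$ (with $y_1,y_2$ pinned to $O_i,O_j$) as $x_1\arr x_2$ is too strong, since $E/\group{G}(O_i,O_j)$ only asserts that \emph{some} pair in $O_i\times O_j$ is an edge; translating it as $\exists z_1 z_2\,(O_i(z_1)\wedge O_j(z_2)\wedge z_1\arr z_2)$ is vacuously true and carries no information. Case-splitting over all orbit assignments to the quantified variables of a $\fb$-formula yields a disjunction of pp-formulas, and reassembling those into one pp-formula over $\relstr{A}^+$ is exactly the problem you started with.

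The paper's proof avoids the orbit quotient entirely. It first pp-defines from $\arr$ and the $1$-orbits, with parameters, a \emph{triangle configuration} $(P,P_0,P_1,P_2)$ for some power of $\arr$ (\Cref{thm:triangle_config}); it then takes a \emph{maximal pp-definable} equivalence $\sim$ on $P$ contained in $P_0^2\cup P_1^2\cup P_2^2$. Because $\sim$ is pp-definable, the quotient $(\relstr{A}|_P)/{\sim}$ is genuinely $1$-dimensionally pp-interpreted in $\relstr{A}^+$ with parameters. The substantial technical work (\Cref{lem:asdf}, using maximality of $\sim$ and finiteness of the number of $2$-orbits) shows that this quotient has all weak components finite, so the ``slightly infinite'' version of the finite theorem, \Cref{thm:one_a_refined}, applies to it. The disjunction you are hoping to assemble away is precisely what forces this longer route.
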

    This result, applied to a finite graph and the trivial group $\group G$, provides a new and purely relational proof of the theorem of Hell and Ne\v set\v ril. It also implies a variety of algebraic invariance properties for any $\omega$-categorical structure that is a model-complete core and fails to pp-interpret $K_3$ with parameters, thereby answering an open problem in~\cite[Section 5.3]{Pseudo-loop}, vastly generalizing the result of Barto and Pinsker, and ending the long and dark period %
    of the uniqueness of their result -- see~\Cref{thm:pseudo_conditions}. %
\subsubsection{Mar\'{o}ti and McKenzie, revisited}
    In our third main contribution, we go beyond the realm of graphs and consider hypergraphs. The  above-mentioned algebraic invariance properties take the form of \emph{polymorphisms}, i.e.,  multivariate functions on the domain of the structure which leave the  structure invariant, satisfying non-trivial  \emph{identities}. The identities which have proven to be the most directly applicable to  questions of computational complexity of the corresponding CSP generally stem from hypergraphs rather than graphs; in particular, this is true for the weak near-unanimity (WNU) polymorphisms used in Zhuk's proof of the CSP dichotomy theorem. The latter were shown to exist for any finite core structure not pp-interpreting $K_3$ with parameters using the machinery of finite algebras by Mar\'{o}ti and McKenzie~\cite{MarotiMcKenzie}; further   algebraic proofs were given by Barto and Kozik~\cite{Cyclic} as well as Zhuk~\cite{Strong-Zhuk}.
    
    We show that similar polymorphisms need not exist for $\omega$-categorical structures under the same conditions. In fact, we obtain a much more general result,~\Cref{thm:counterexample_full}: any algebraic invariance property  which is a countably infinite disjunction of statements asserting the satisfaction of identities by polymorphisms %
     and with the property that no single member of the disjunction is implied for finite core structures which do not pp-interpret $K_3$ with parameters, can be avoided by an $\omega$-categorical hypergraph which does not pp-interpret $K_3$ with parameters. The existence of a weak near-unanimity (WNU) polymorphism is such a condition since although any finite core structure not pp-interpreting $K_3$ with parameters has a WNU polymorphism, the arity of the WNU polymorphism varies for different finite structures.
    
    Our theorem solves, in particular,~\cite[Problem 14.2.6 (21)]{Book}. It provides evidence that the full algebraic machinery available for finite structures does not lift to the $\omega$-categorical setting, and indicates that finer methods as presented e.g. in~\cite{SmoothApproximations} have to be developed for the narrower context of first-order reducts finitely bounded homogeneous structures in order to obtain the same algebraic invariance properties as in the finite.

\begin{theorem}%
\label{thm:three}
    There exists a hypergraph $\relstr{A}$ with the following properties:
    \begin{itemize}
        \item $\relstr{A}$  is  $\omega$-categorical; %
        \item $\relstr{A}$ has no pseudo-WNU polymorphisms; 
        \item $\relstr{A}$ expanded by the orbits of its automorphism group does NOT pp-interpret with parameters EVERYTHING.
    \end{itemize}
    \end{theorem}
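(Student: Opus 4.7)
My plan is to build $\relstr A$ out of a countable family of finite core structures chosen to block pseudo-WNU identities one arity at a time. For each integer $n \geq 2$, a classical result from finite CSP theory gives a finite core structure $\relstr{A}_n$ such that (i) $\relstr{A}_n$ does not pp-interpret $K_3$ with parameters, and yet (ii) $\relstr{A}_n$ admits no WNU polymorphism of arity $n$. The arity of a WNU polymorphism on a tractable finite template cannot be uniformly bounded, so such $\relstr{A}_n$ indeed exists for every $n$. The hypergraph $\relstr A$ will be assembled from the family $\{\relstr{A}_n\}_{n \geq 2}$ into a single $\omega$-categorical structure.

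I would take $\relstr A$ to be the Fraïssé limit of an amalgamation class $\mathcal{K}$ of finite hypergraphs whose vertex sets are partitioned into labeled \emph{blocks}, each block being a copy of some $\relstr{A}_n$, with free amalgamation across blocks and standard substructure closure within each block. The hypergraph relation of $\relstr A$ is the union of the relations of all $\relstr{A}_n$'s, encoded into a single relation symbol of appropriate arity. Since $\mathcal{K}$ is uniformly locally finite, $\relstr A$ is $\omega$-categorical, and $\aut(\relstr A)$ acts transitively on the set of copies of each $\relstr{A}_n$ while permuting the free ``connective tissue" between blocks richly.

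The second step is to rule out pseudo-WNU polymorphisms of $\relstr A$. Given a candidate $f : A^n \to A$ and witnessing automorphisms $\alpha_1, \ldots, \alpha_n \in \aut(\relstr A)$, I restrict $f$ to a block $B \cong \relstr{A}_n$ where both $f|_{B^n}$ and the actions of the $\alpha_i$ can be controlled using free amalgamation together with the finiteness and rigidity of $\aut(\relstr{A}_n)$. The restricted map, post-composed with the relevant $\alpha_i|_B$'s, yields a genuine $n$-ary WNU polymorphism of $\relstr{A}_n$, contradicting (ii). The third step, the failure to pp-interpret EVERYTHING with parameters and orbits, is derived by localization: any such interpretation uses only finitely many orbit relations, pp-formulas, and parameters, and is therefore witnessed inside a finite substructure of $\relstr A$ which, by Fraïssé homogeneity and the design of $\mathcal{K}$, embeds into a single block $B \cong \relstr{A}_n$ for some large enough $n$. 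Orbits of $\aut(\relstr A)$ restrict on $B$ to unions of $\aut(\relstr{A}_n)$-orbits and are thus pp-definable in $\relstr{A}_n$ with parameters; pulling the interpretation back inside $B$ would therefore induce a pp-interpretation of $K_3$ from $\relstr{A}_n$ with parameters, contradicting (i).

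The principal obstacle is the design of $\mathcal{K}$ so that both the block-restriction and the localization arguments actually go through. The blocks must be rigid enough that pseudo-identities descend to honest WNU identities on each $\relstr{A}_n$ after restriction, which constrains how $\aut(\relstr A)$ is allowed to permute blocks and act within them; yet they must be loose enough for amalgamation to succeed and for the Fraïssé limit to have only finitely many orbits on tuples of each length. Most delicately, one must ensure that no unintended pp-definability arises from interactions \emph{between} blocks, since such cross-block definable relations could, in principle, allow $\relstr A$ to pp-interpret structures that no individual $\relstr{A}_n$ pp-interprets, thereby destroying the localization. Striking this balance, and verifying that the polymorphism clone of $\relstr A$ genuinely decomposes along the blocks as needed, is the heart of the construction.
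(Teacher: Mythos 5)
Your overall plan---assemble a single $\omega$-categorical structure from a countable family of finite cores $\relstr{A}_n$, each of which has no $n$-ary WNU but still avoids pp-interpreting $K_3$ with parameters---is the same high-level idea as the paper's, which superposes classes $\mathcal C(\relstr A_n, k_n)$ and takes a Fra\"iss\'e limit. But the two steps where you check the key properties each contain a genuine gap, and the paper gets around both by a different mechanism.

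First, the ``restrict to a block'' argument for ruling out pseudo-WNU does not work as stated. Given $f\colon A^n\to A$ and $\alpha_1,\dots,\alpha_n\in\aut(\relstr A)$ witnessing a pseudo-WNU, there is no reason that $f(B^n)\subseteq B$ for your chosen block $B\cong\relstr A_n$, nor that the $\alpha_i$ setwise stabilize $B$; indeed, in a free-amalgamation Fra\"iss\'e limit the automorphism group mixes blocks so thoroughly that no block is invariant, so $f|_{B^n}$ composed with the $\alpha_i|_B$ is not even a well-defined operation on $B$, let alone a polymorphism of $\relstr A_n$. What actually makes the transfer go through in the paper is that the constructed structure \emph{pp-interprets} each $\relstr A_n$ (via the factor map $\sim$ on injective $k$-tuples); this gives a uniformly continuous clone homomorphism $\pol(\relstr A)\to\pol(\relstr A_n)$, and one then invokes the transfer of non-satisfaction of conditions along such maps (the reference to \cite{Topo-Birk}). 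That is a semantic, not a restriction, argument, and it is essential: without a built-in pp-interpretation there is no way to descend a pseudo-WNU of $\relstr A$ to a WNU of $\relstr A_n$.

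Second, the localization argument for the third bullet is false in general. A primitive positive interpretation of $K_3$ in $\relstr A$ (even with orbits and parameters) quantifies over the entire infinite domain; the relations it defines need not agree with what the same formulas define when relativized to a finite substructure, so there is no compactness principle letting you pull the interpretation back inside one block. The paper avoids this entirely by arguing in the opposite direction: it arranges that the limit satisfies a nontrivial identity---the pseudo-Siggers condition, obtained from Proposition~\ref{prop:structontuples} and Proposition~\ref{prop:superposition}---and then, since the limit is a model-complete core (Proposition~\ref{prop:mc_core}), concludes from the Barto--Pinsker theorem~\cite{Topo} that it cannot pp-interpret $K_3$ with parameters. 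You would need an analogous positive argument, and your construction as described does not manifestly produce polymorphisms witnessing any nontrivial pseudo-condition. Finally, a smaller issue: turning the outcome into an actual hypergraph with a single relation of bounded arity requires more than ``encoding into a single relation symbol,'' since your signature a priori has unbounded arity; the paper uses the Hrushovski encoding~\cite{HrushovskiEncoding} for this, and some care to keep the model-complete-core and slow-orbit-growth properties through that encoding.
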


\section{Preliminaries}
\label{sec:prelim}

\subsection{Relations}

Our terminology for relational structures is fairly standard and we often call them just structures. We %
  abuse notation by using the same name for a relational symbol and its interpretation in a structure;
   $R(a_1,\dotsc,a_n)$ typically means that $(a_1,\dotsc,a_n)$ is in the interpretation of $R$ in the given structure (which should always be clear from the context), while $R(x_1,\dotsc,x_n)$ is an atomic formula with free variables $x_1,\dotsc,x_n$.
A \emph{digraph} is a structure with a single binary relation that we usually denote by $\rightarrow$, its inverse
is denoted by $\leftarrow$. 
A \emph{graph} is a digraph that is symmetric, i.e., ${\rightarrow} = {\leftarrow}$.
A \emph{hypergraph} is a structure with a single relation. The induced substructure of $\relstr{A}$ on a subset $B$ is denoted $\relstr{A}|_B$, the quotient structure modulo an equivalence $\sim$ on the domain is denoted $\relstr{A}/{\!\sim}$. For instance, the quotient of a digraph $(A; \arr)$ is 
the digraph $(A/{\!\sim}, \{([a]_{\sim},[b]_{\sim}): a\rightarrow b\})$.

A first-order formula $\varphi$ is \emph{primitive positive} (pp, for short) if it consists of existential quantifiers, conjunctions, and atomic formulas only.
We say that $\relstr{A}$ pp-defines $\relstr{B}$ (or $\relstr{B}$ is pp-definable in $\relstr{A}$) if every relation in $\relstr{B}$ can be  defined by a primitive positive formula. The same terminology is used for sets of relations. 

A \emph{pp-interpretation} of a structure $\relstr B$ in $\relstr A$ consists of a partial surjective map $h\colon A^d\to B$ for some $d\geq 1$ such that for every $R\subseteq B^n$ that is either $B$, or the equality relation on $B$, or a relation of $\relstr B$, the preimage $h^{-1}(R)$ seen as a relation of arity $nd$ on $A$ has a pp-definition in $\relstr A$.
The integer $d$ is called the \emph{dimension} of the interpretation.

We say that $\relstr A$ pp-interprets/pp-defines $\relstr B$ \emph{with parameters} if $\relstr A$ expanded by  unary relations $\{a\}$ (where $a$ is in the domain of $\relstr A$)  pp-interprets/pp-defines $\relstr B$.
It is a classical fact that the 3-element clique $K_3$ pp-interprets every finite structure with parameters.

  An $n$-ary relation $R \subseteq A^n$ is \emph{subdirect} in $A$ if its projection to any coordinate is equal to $A$. 
  For two binary relations $R$ and $S$ on $A$, we write $R+S$ for the composition of $R$ and $S$ pp-defined by
  $$(R+S)(x,z) \equiv \exists y \ R(x,y) \wedge R(y,z).$$
  Accordingly, we write $nR$ for the $n$-fold composition of $R$ with itself.
  For a binary $R$ and unary $B$ on $A$ we also use $B+R$ pp-defined by 
  $$(B+R)(y) \equiv \exists x \ B(x) \wedge R(x,y).$$
  For a binary relation $\arr$ of a digraph, we use 
  $\plusarr{B}$ instead of $B + \arr$, for better readability.

  For two relations $R,S$ on $A$ with arities $n,m$ respectively we
  define an $(n+m)$-ary relation $\OR(R,S)$ by
  $\OR(R,S)(a_1,\dotsc,a_n,b_1,\dotsc,b_m)$
  if $R(a_1,\dotsc,a_n)$ or $S(b_1,\dotsc,b_m)$.
  Note that this \emph{is not} a pp-definition from $R$ and $S$.
  
  In our proofs of~\Cref{thm:two} and~\Cref{thm:one_a}, we will produce a pp-definition of $\OR(\alpha,\alpha)$ for a nontrivial equivalence relation $\alpha$. The following folklore observation (see \Cref{app:aux}) will finish the proofs. 

  \begin{restatable}{proposition}{ORprop}
  \label{prop:OR}
  Let $\relstr{A}$ be a finite core structure containing $\OR(\alpha,\alpha)$ for a proper equivalence relation $\alpha$ on some $B \subseteq A$. Then $\relstr{A}$ pp-interprets every finite structure.
  \end{restatable}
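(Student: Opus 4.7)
The plan is to exploit $\OR(\alpha,\alpha)$ to pp-interpret, with parameters, a quotient structure whose polymorphism clone is essentially unary, and then invoke the standard core machinery to eliminate parameters.

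First, $\alpha$ itself is pp-definable by identifying variables: $\alpha(x, y) \equiv \OR(\alpha, \alpha)(x, y, x, y)$. Its field $B$ is then pp-definable as $B(x) \equiv \exists y\, \alpha(x, y)$. Fix parameters $a_0, a_1 \in B$ with $(a_0, a_1) \notin \alpha$, which exist since $\alpha$ is proper. I would then construct a one-dimensional pp-interpretation with parameters of the quotient structure $\relstr{B} := (B/\alpha,\, \OR(=, =))$ in $\relstr{A}$: the preimage of equality on $B/\alpha$ is $\alpha$, and the preimage of $\OR(=, =)$ is $\OR(\alpha, \alpha)$ restricted to $B^4$, both pp-definable from $\OR(\alpha, \alpha)$ together with $a_0, a_1$.

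The heart of the argument is showing that every polymorphism of $\relstr{B}$ is essentially unary. Suppose for contradiction that an $n$-ary polymorphism $f$ depends on two distinct coordinates $i \neq j$, witnessed by tuples $\bar a, \bar a' \in (B/\alpha)^n$ that differ only at coordinate $i$ with $f(\bar a) \neq f(\bar a')$, and analogously $\bar b, \bar b'$ differing only at $j$. I build $n$ inputs in $\OR(=, =)$: set $t_i := (\bar a_i, \bar a'_i, \bar b_i, \bar b_i)$ (positions $3$ and $4$ equal), $t_j := (\bar a_j, \bar a_j, \bar b_j, \bar b'_j)$ (positions $1$ and $2$ equal), and $t_k := (\bar a_k, \bar a_k, \bar b_k, \bar b_k)$ for $k \neq i, j$. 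Applying $f$ columnwise yields $(f(\bar a), f(\bar a'), f(\bar b), f(\bar b'))$, which fails both equality disjuncts of $\OR(=, =)$ by construction, a contradiction. Since $|B/\alpha| \geq 2$, an essentially unary clone cannot satisfy any nontrivial Taylor identity, so by the standard finite CSP characterisation (e.g., Siggers' theorem) $\relstr{B}$ pp-interprets $K_3$ with parameters.

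Composing the interpretations, $\relstr{A}$ pp-interprets $K_3$ with parameters, and hence every finite structure with parameters. To finish, use that $\relstr{A}$ is a finite core: the orbits of $\aut(\relstr A)$ are pp-definable, so each parameter can be replaced by a fresh coordinate constrained to lie in its orbit, yielding a parameter-free pp-interpretation of the desired dimension. The main potential obstacle is the clone-to-interpretation step, which I would invoke from the literature rather than redo; an alternative would be to restrict the quotient to a two-element subset using parameters and apply Schaefer's dichotomy directly.
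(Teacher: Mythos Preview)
Your core combinatorial step --- showing that $\OR(=,=)$ is preserved only by essentially unary operations --- matches the paper's. The difference lies in how the core hypothesis on $\relstr A$ is exploited, and your handling has a genuine wobble. The quotient $\relstr B=(B/\alpha,\OR(=,=))$ is \emph{not} a core: every constant map is an endomorphism, so its polymorphism clone contains all constants, and these satisfy plenty of nontrivial identities (e.g.\ $f(x,y)\approx f(y,x)$; a constant is even a Taylor operation in the non-idempotent sense). Thus ``an essentially unary clone cannot satisfy any nontrivial Taylor identity'' is false as stated, and Siggers' theorem does not apply to $\relstr B$ directly. The easy fix is to observe that the \emph{idempotent} polymorphisms of $\relstr B$ are exactly the projections (an idempotent operation depending on at most one argument is a projection); hence $\relstr B$ expanded by all constants has only projections as polymorphisms and pp-interprets $K_3$. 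This rescues your ``with parameters'' conclusion --- incidentally, the two parameters $a_0,a_1$ you introduced never get used. Your last step, removing parameters because $\relstr A$ is a core, is a known fact but is not literally ``replace each constant by an orbit-constrained extra coordinate''; what is actually needed is the equivalence, for finite cores, between satisfying some nontrivial identity and satisfying some nontrivial idempotent identity.

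The paper avoids the detour through parameters by using the core hypothesis \emph{before} quotienting: since $\relstr A$ is a core, the same factor map already interprets a richer structure $\relstr C$ on $B/\alpha$ (still containing $\OR(=,=)$) that is itself a core. Then every polymorphism of $\relstr C$ depends on \emph{exactly} one coordinate (no constants), and sending each polymorphism to the projection onto that coordinate is a clone homomorphism to the projection clone on any finite $D$; hence $\relstr C$, and thus $\relstr A$, pp-interprets every finite structure directly, with no parameters entering at any stage.
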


\subsection{Connectivity notions for digraphs} \label{subsec:prelim_digraphs}

Let $\relstr{A}=(A; \rightarrow)$ be a  digraph.
We say that $\relstr{A}$ is \emph{smooth} if $\arr$ is a subdirect relation on $A$, i.e., $\plusarr{A}=\minusarr{A}=A$. %

A \emph{walk}  in $\relstr{A}$ is a sequence $a_1 \ \epsilon_1 \ a_2\  \epsilon_2 \dots \epsilon_{n-1} \ a_n$, where $a_i \in A$ and each $\epsilon_i$ is either $\arr$ or $\larr$. The \emph{algebraic length} of such a walk is the number of forward arrows minus the number of backward arrows. We say that $\relstr{A}$ has \emph{algebraic length 1} if there exists a closed walk (i.e., $a_1=a_n$) of algebraic length~1.

A digraph is \emph{weakly connected} if there exists a walk between any two vertices. Weak components are defined accordingly as maximal induced subdigraphs that are weakly connected (or, abusing notation, the corresponding subsets of $A$).
  
  The $k$-fold composition of $\rightarrow$ with itself 
  is pp-defined by 
  \begin{equation*}
    x (\kpower) y \equiv \exists z_1,\dotsc,z_{k-1}\ x\rightarrow z_1 \rightarrow \dotsb \rightarrow z_{k-1}\rightarrow y.
  \end{equation*}
  Note that if $\rightarrow$ is subdirect in $A$, then so is $\kpower$ for any $k$.
  The \emph{link relation} for $\relstr{A}$ (or $\arr$) is defined as $L_{\rightarrow}(x,y) \equiv \exists z(x\rightarrow z \wedge y\rightarrow z)$. 
  Note that $L_{\rightarrow}$ is always symmetric, and if $\rightarrow$ is subdirect, then it is also reflexive.
  The transitive closure of $L_{\rightarrow}$ is called the \emph{linkness equivalence} associated with $\rightarrow$. 
  We call $\relstr{A}$ \emph{linked} if its linkness equivalence equals $A^2$.
  The \emph{$k$-link relation} for $\relstr{A}$ is $L_{(\kpower)}$, its transitive closure is the \emph{k-linkness equivalence}, and $\relstr{A}$ is \emph{$k$-linked} if $\kpower$ is linked.
  Since the link relation is reflexive for smooth $\relstr{A}$, its transitive closure can be pp-defined if $A$ is finite, namely by the formula 
  \begin{equation*}
    \exists z_1,\dotsc,z_{|A|}\ L_{\rightarrow}(x,z_1)\wedge L_{\rightarrow}(z_1,z_2) \wedge\dotsc\wedge L_{\rightarrow}(z_{|A|},y)\; .
  \end{equation*}
  The same formula then also works for $(\kpower)$ in place of $\rightarrow$. %
Note that a finite digraph $\relstr{A}$ is $k$-linked for some $k$ iff it is weakly connected and has algebraic length 1 (see, e.g., \cite[Claim 3.8]{Cyclic}). It also follows that a weak component of algebraic length 1 is pp-definable with parameters from $\arr$.
A graph is $k$-linked for some $k$ iff it is (weakly) connected and non-bipartite.

\subsection{Groups, orbits, $\omega$-categoricity, model-complete cores}
  
  Let $\group G$ be a permutation group acting on a set $A$. For $n\geq 1$, a equivalence relation ${\sim_{\group G}}$ on $A^n$ is defined by $x\sim_{\group G} y$ iff there exists an $\alpha\in \group G$ such that $\alpha(x)=y$. The equivalence classes of $\sim_{\group G}$ on $A^n$ are called \emph{$n$-orbits} of the group $\group G$. We say that $\group G$ is \emph{oligomorphic} if it has finitely many $n$-orbits for all $n\in \mathbb{N}$. A countable structure $\relstr A$ is called \emph{$\omega$-categorical}, or \emph{countably categorical}, if $\aut(\relstr A)$ is oligomorphic.
  
  For a digraph $\relstr A=(A,\rightarrow)$ and a group $\group G$ acting on $A$, we write $\relstr A/{\group G}$ instead of $\relstr{A}/{\sim_{\group G}}$ (where the equivalence is for $n=1$). 

  An $\omega$-categorical structure $\relstr A$ is a \emph{model-complete core} if for every endomorphism $e$ of $\relstr A$ and finite subset $F$ of $A$, there exists an automorphism $\alpha$ of $\relstr A$ such that $e$ and $\alpha$ coincide on $F$. A finite model-complete core is simply called a \emph{core}. Every $\omega$-categorical structure $\relstr A$ has an induced  substructure which is a model-complete core and which admits a homomorphism from $\relstr A$~\cite{Cores-journal}. This structure is unique up to isomorphism, and its isomorphism type is called the \emph{model-complete core of $\relstr A$}.

\subsection{Polymorphisms and identities}

A relation $R\subseteq A^n$ on $A$ is \emph{invariant} under an operation $f\colon A^m\to A$ if for all $r_1,\dots,r_m\in R$, the $n$-tuple $f(r_1,\dots,r_m)$ obtained by applying $f$ componentwise on $r_1,\dots,r_m$ is also in $R$. 
We say that an operation $f$ on $A$ is a \emph{polymorphism} of a structure $\relstr{A}$ with domain $A$ if each relation of $\relstr{A}$ is invariant under $f$. The set of all polymorphisms of $\relstr{A}$ is denoted $\pol(\relstr A)$.
Sets of polymorphisms are so-called \emph{clones}, and there is a tight connection between polymorphism clones of structures and their pp-definability and pp-interpretability strength; however, we do not expand on these in this paper and refer to~\cite{BoKaKoRo,CSPSurvey,BodirskyNesetrilJLC, Topo}. 

An \emph{equational condition} is a system of identities -- formal expressions of the form $s\approx t$, where $s$ and $t$ are terms over a common set of function symbols. 
We say that $\relstr{A}$ 
or  $\pol(\relstr{A})$ (or some set of operations on a common domain) satisfies an equational condition $\Sigma$, if the function symbols can be interpreted as members of $\pol(\relstr{A})$ so that, for each each $s \approx t$ in $\Sigma$, the equality $s = t$ holds for any evaluation of variables. 

An example is the 
\emph{weak near-unanimity (WNU) condition of arity $n$} given by $w(x,\ldots,x,y)\approx w(x,\ldots,x,y,x)\approx\dots\approx w(y,x,\ldots,x)$, for a symbol $w$ of arity $n$. It is satisfied in $\pol(\relstr{A})$ if $\relstr{A}$ has an $n$-ary polymorphism $w$  such that $w(x,\ldots,x,y) =  \dots\ = w(y,x,\ldots,x)$ for all $x,y$ in the domain; such a polymorphism is then called a WNU polymorphism. Similarly, the \emph{pseudo-WNU condition} is given by $u_1(w(x,\ldots,x,y))\approx u_2(w(x,\ldots,x,y,x))\approx\dots\approx u_n(w(y,x,\ldots,x))$, where the $u_i$ are unary symbols. Another example is the \emph{Siggers condition} $s(x,y,x,z,y,z) \approx s(y,x,z,x,z,y)$ and its pseudo-version
$u(s(x,y,x,z,y,z)) \approx v(s(y,x,z,x,z,y))$. 

An equational condition $\Sigma$ is called a \emph{minor condition} if for each identity $s \approx t$ in it, both $s$ and $t$ contain exactly one occurrence of a function symbol. For example, the WNU condition is minor, while pseudo-WNU is not. 
An equational condition is \emph{balanced} if in every $s \approx t$, 
  the same variables appear on the left- and right-hand side of the  identity; the examples above are such. 
 Finally, an equational condition is \emph{idempotent} if it entails $t(x,x, \dots, x) \approx x$ for every function symbol $t$. E.g., the \emph{idempotent WNU} condition $w(x,\ldots,x,y) =  \dots\ = w(y,x,\ldots,x), w(x,x, \ldots, x) \approx x$ is idempotent but not minor.

An equational condition is \emph{trivial} if it is satisfied in every polymorphism clone. %
 It is a folklore fact that if a structure pp-interprets $K_3$, then any equational condition it satisfies is trivial, and if it pp-interprets $K_3$ with parameters, then any idempotent condition it satisfies is trivial. %

\section{Loops and pseudoloops in finite digraphs} \label{sec:finite}
 
In this section we prove refined versions of~\Cref{thm:two,thm:one_a} and their consequences. Both theorems are derived from a single result, \Cref{thm:main_finite}, which we only state here. Its proof covers~\Cref{sec:main_finite_proof}.
  
  \subsection{Ranked relations and the main theorem}\label{subsec:ranked} %

  The crucial idea for the proof of the finite pseudoloop result stated in \Cref{thm:one_a} 
  is to change the digraph relation $\arr$, which has algebraic length 1 only modulo orbits, to a relation which truly has algebraic length 1 by appending the graph of an appropriate automorphism $g$. This new relation $\arr + g$ need not be pp-definable from the structure $\relstr{A}$, so pp-definability from $\arr + g$ does not have, in general, consequences on pp-definability from $\arr$. However, as we shall observe in~\Cref{lem:pp_from_rpp}, useful consequences can be recovered in case only restricted pp-formulas are used.
  This is the reason for the following definitions.
  
     A \emph{ranking} for a relation, or a relational symbol, $R$ of arity $n$ is a mapping $r:\{1,\dotsc,n\}\rightarrow\Int$. The pair $(R,r)$ is then called a 
     \emph{ranked relation}; for any coordinate $i\in \{1,\ldots,n\}$ (or variable,  in the case of a relational symbol), we call its value under $r$ the \emph{rank} of the coordinate.  A \emph{shift} of a ranked relation $(R,r)$  is any ranked relation $(R,r+k)$, where $k \in \Int$.
    \emph{Ranked relational} structures are defined accordingly.
    
    If $\rightarrow$ is a binary relation, then  $\rightarrow_{01}$ is the same relation together with the ranking which assigns rank $0$ to the first  and $1$ to the second coordinate.

    A \emph{ranked pp-formula}, \emph{rpp-formula} for short, over a set $\mathcal{R}$ of ranked relations is a pp-formula using ranked relations from $\mathcal{R}$,
    together with a ranking of the variables, which is a mapping $r:V\rightarrow\Int$ with the following properties:
    \begin{itemize}
        \item if $R(x_1,\dotsc,x_n)$ appears in the formula, then $R$ together with the ranking $i\mapsto r(x_i)$ is a shift of a ranked relation from $\mathcal{R}$;
        \item if $x=y$ appears in the formula, then $r$ is constant on $\{x,y\}$. %
    \end{itemize}
    Note that if the free variables of a ranked pp-formula are $y_1,\dotsc,y_n$,  then the formula defines a ranked relation  $R(y_1,\dotsc,y_n)$
    with ranking $i\mapsto r(y_i)$.

    If the rank of all the coordinates in a relation %
    is $0$, then %
    we call the relation \emph{$0$-ranked}. An rpp-formula is \emph{$0$-ranked} if all the free variables have rank $0$. %
    An important example of a $0$-ranked rpp-formula over $\arr_{01}$ is the formula defining $k$-linkness equivalence, and a rpp-formula over $\arr_{01}$ with parameters defining a weak component of algebraic length 1 discussed in \Cref{subsec:prelim_digraphs}. 
 
    Let $\relstr{A}$ and $\relstr{B}$ be ranked relational structures of the same signature, with domains $A$ and $B$, respectively. A
     sequence of functions $(f_i)_{i\in\Int}$, $f_i:A \to B$, 
    is a \emph{ranked homomorphism} from $\relstr{A}$ to $\relstr{B}$
    if for any shift $(R,r)$ of a ranked relation in $\relstr A$, say of arity $n$, we have that 
    $R(a_1,\dotsc,a_n)$ in $\relstr{A}$ implies $R(f_{r(1)}(a_1),\dotsc,f_{r(n)}(a_n))$ in $\relstr{B}$.
    Note that this property extends to rpp-definable relations; this is essentially the reason why the following lemma works. The proof is in~\Cref{app:aux}.

 \begin{restatable}{lemma}{PPfromRPP} \label{lem:pp_from_rpp}
Let $\relstr{A}=(A;\arr)$ be a digraph, $g \in \aut(\relstr{A})$,  and $S$ a relation on $A$. Let $\arr' = \arr + g$.
If $\arr'_{01}$  rpp-defines with parameters 0-ranked $S$, then $\arr$  pp-defines with parameters $S$. %
\end{restatable}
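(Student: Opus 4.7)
The plan is to prove the lemma by a syntactic translation of the given rpp-formula into an ordinary pp-formula, exploiting the fact that $g$ is an automorphism. The intuition is that the rank attached to each variable encodes a power of $g$ which will be absorbed into a change of variables once we move from $\arr'$ back to $\arr$.

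First, I would unfold the definition $\arr' = \arr + g$. Viewing $g$ as the graph of the function, one has $(a,b)\in {\arr'}$ iff $a\arr g^{-1}(b)$. Since $g\in\aut(\relstr A)$, applying the automorphism $g^{-k}$ to both sides yields, for every $k\in\Int$, the key equivalence
\[ (a,b)\in {\arr'}\ \iff\ g^{-k}(a)\arr g^{-(k+1)}(b). \]
Informally, the family $(g^{-i})_{i\in\Int}$ is a ranked isomorphism from $\arr'_{01}$ to $\arr_{01}$, which is exactly the ranked-homomorphism observation preceding the lemma.

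Given the rpp-formula $\varphi$ over $\arr'_{01}$ with parameters and with variable ranking $r$ defining the $0$-ranked $S$, I would construct a pp-formula $\psi$ over $\arr$ atom-by-atom. For each variable $x$ of $\varphi$, introduce a fresh variable $\tilde x$ intended to stand for $g^{-r(x)}(x)$. Translate a $\arr'$-atom $x\arr' y$ (necessarily with $r(y)=r(x)+1$) by $\tilde x\arr \tilde y$, an equality $x=y$ (necessarily with $r(x)=r(y)$) by $\tilde x=\tilde y$, and a parameter atom $\{a\}(x)$ with $r(x)=k$ by $\{g^{-k}(a)\}(\tilde x)$. Existential quantifiers are preserved, and the free variables of $\varphi$, having rank $0$, remain unchanged. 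Since $\varphi$ is finite, only finitely many ranks appear, so $\psi$ uses only finitely many new parameters, all of them elements of $A$.

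Finally I would verify that $\psi$ defines $S$ by showing that the map $\alpha\mapsto\tilde\alpha$ with $\tilde\alpha(\tilde x):=g^{-r(x)}(\alpha(x))$ is a bijection on assignments that preserves the satisfaction of each atom: the $\arr'$-atoms via the displayed equivalence, the equality and parameter atoms because $g^{r(x)}$ is a bijection on $A$, and existentials because this bijection ranges over all of $A$ at every rank. Since the free variables have rank $0$, $\alpha$ and $\tilde\alpha$ agree on them, so $\psi$ and $\varphi$ define the same relation. I do not anticipate a deep obstacle here: the argument is essentially bookkeeping around the displayed equivalence. The only subtlety is to translate equality and parameter atoms, which carry their own ranks in the rpp-framework, in a way compatible with the $\arr'$-atoms, so that the assignment bijection is globally well defined; once this is in place, the lemma follows.
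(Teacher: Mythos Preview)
Your proposal is correct and is essentially the same argument as the paper's: both apply $g^{-r(x)}$ at each variable of rank $r(x)$ to convert $\arr'$-atoms into $\arr$-atoms and shift each parameter $\{a\}$ at rank $k$ to $\{g^{-k}(a)\}$. The paper phrases this as applying the ranked homomorphisms $(g^{\pm i})_{i\in\Int}$ to a witnessing evaluation while leaving the variable names unchanged, whereas you phrase it as a syntactic change of variables $\tilde x:=g^{-r(x)}(x)$ together with an assignment bijection; these are two presentations of the same substitution, and the verification of each atom is identical.
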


    Note that the set of ranked homomorphisms is closed under shifts, i.e., if $(f_i)_{i\in\Int}$ is a ranked homomorphism from $\relstr A$ to $\relstr B$, then  so is $(f_{i+k})_{i\in \Int}$, for any $k\in\mathbb Z$. Moreover, ranked homomorphisms are closed under composition, i.e., if $(g_i)_{i\in\Int}$ is a ranked homomorphism from $\relstr B$ to $\relstr C$, then  $(g_i\circ f_i)_{i\in\Int}$ is a ranked homomorphism from $\relstr A$ to $\relstr C$. In particular, the set of \emph{ranked automorphims} of $\relstr{A}$ (i.e., invertible ranked homomorphisms from $\relstr A$ to $\relstr{A}$) forms a group,
    the \emph{ranked automorphism group of $\relstr{A}$}.
    By the \emph{projection} %
    of a ranked automorphism group $\group H$, we mean the group 
     $\{f_0: (f_i)_{i\in\Int} \in \group H\}$.
     It is equal to $\{f_j: (f_i)_{i\in\Int} \in \group{H}\}$ for any $j$.

    We are ready to state the main result for finite digraphs.
    
    \begin{theorem}\label{thm:main_finite}
      Let $\relstr{A} = (A; \arr)$ be a finite smooth digraph, let $\group H$ be a subgroup of the ranked automorphism group of $(A; \rightarrow_{01})$, let $\group G$ be the  projection of $\group H$, and let $k\geq 1$. If $\arr$ is  $k$-linked and $(\kpower) \neq A^2$, then $\arr_{01}$ together with the $0$-ranked orbits %
      of $\group G$ rpp-define
      \begin{enumerate}
        \item  $B\varsubsetneq A$ such  that $\relstr{A}|_B$ is smooth  and $k$-linked, or  \label{main:red}
        \item 0-ranked $\OR(\alpha,\alpha)$
          for some proper equivalence relation $\alpha$ on some subset $C\subseteq A$. \label{main:proj}
      \end{enumerate}
    \end{theorem}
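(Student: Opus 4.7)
The plan is to attack the dichotomy via a structural analysis of smooth $k$-linked digraphs, mimicking absorption-style arguments from Barto, Kozik, and Niven but translated into the ranked setting so as to incorporate the symmetries in $\group H$. The starting observation is that for smooth $\relstr A$, the $k$-link relation $L_{(\kpower)}$, its iterated compositions, and its transitive closure are all $0$-ranked rpp-definable from $\arr_{01}$; the hypothesis that $\arr$ is $k$-linked means this transitive closure is $A^2$, while the assumption $(\kpower)\neq A^2$ provides at least one pair of vertices that requires a genuine link-chain of length at least one.

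First I would attempt to produce a proper subset $B \varsubsetneq A$, rpp-definable from $\arr_{01}$ and the $0$-ranked orbits of $\group G$, such that $\relstr{A}|_B$ remains smooth and $k$-linked. Natural candidates arise by intersecting unions of $\group G$-orbits with sets closed under the operators $\plusarr{\cdot}$ and $\minusarr{\cdot}$; iterating these to a fixed point preserves smoothness, and judicious choices of starting orbits can preserve $k$-linkedness as well. If any such proper closed set is found, we are in case~(1). The interesting case is when every rpp-definable nonempty $\group G$-invariant subset of this type equals all of $A$, which is a strong rigidity hypothesis that can then be exploited.

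Under this rigidity, the task becomes the construction of $\OR(\alpha,\alpha)$ for a proper equivalence $\alpha$ on some $C \subseteq A$. Here the group $\group H$ plays the essential role: given a pair $(a,b)$ not related by $\kpower$, the $k$-linkedness assumption yields a chain $a = c_0, c_1, \ldots, c_m = b$ witnessing $L_{(\kpower)}(c_i, c_{i+1})$. Unfolding this chain into a pp-formula naturally places its free variables at different ranks, determined by the algebraic length of the underlying walk; a suitable element of $\group H$, whose projection to $\group G$ cancels the rank shift, then allows the formula to be rewritten as a $0$-ranked rpp-formula using an orbit of $\group G$ as a parameter. Running two such chains in parallel and sharing a branching point yields the $\OR$-structure, with $\alpha$ taken to be the equivalence generated by a suitable iterate of $L_{(\kpower)}$ restricted to a $\group G$-orbit, placing us in case~(2).

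The main obstacle, in my view, is to ensure that the equivalence $\alpha$ produced in this way is genuinely proper: neither the equality relation (so that $\OR(\alpha,\alpha)$ is nontrivial) nor the full relation $C^2$ (so that it is not everything). The $k$-linkedness assumption pushes $\alpha$ strictly above equality, while the failure $(\kpower)\neq A^2$ preserves some separation at the level of the link relation; the delicate part is arranging the walks, the ranking shifts, and the choice of $\group G$-orbit in tandem so that both obstructions are inherited by the same $\alpha$ on the same $C$, all while keeping the construction $\group G$-invariant. This is where the full flexibility of the ranked automorphism group $\group H$, beyond the bare group $\group G$, appears indispensable, and where I expect the bulk of the technical work to lie.
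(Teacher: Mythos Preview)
Your proposal is too schematic to count as a proof, and the key step---producing $\OR(\alpha,\alpha)$ from ``two link-chains sharing a branching point''---has a genuine gap. Primitive positive formulas are conjunctive; a disjunction like $\OR(\alpha,\alpha)$ does not arise from simply running two copies of a chain in parallel. The paper obtains the disjunction by a maximality trick: it first rpp-defines a \emph{central} or \emph{Q-central} relation (via Rosenberg-style TSR machinery applied to $\kpower$), then picks a pair $(a,b)\notin\alpha$ for which an auxiliary set $I$ is inclusion-maximal, and writes a formula whose conjuncts force one of two blocks of free variables to land in (a $\group G$-translate of) $I$, because any failure would contradict maximality. Nothing in your outline supplies this mechanism or any substitute for it.

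A second gap concerns the role of $\group G$ and $\group H$. You describe using an element of $\group H$ to ``cancel a rank shift'' so that a chain becomes $0$-ranked. But the paper's actual use of the symmetry is quite different: it works with the $0$-ranked orbit $O$ of the full tuple $(1,\dots,|A|)$, inserted into formulas as a conjunct $O(z_1,\dots,z_{|A|})$ that forces the $z_i$ to enumerate $A$ up to an element of $\group G$. This device is what allows one to strip the ``P'' from P-central and PQ-central relations (reducing arity while preserving the center), and also what makes the maximality argument $\group G$-equivariant. Your proposal never introduces such a relation and gives no alternative way to quantify over all of $A$ inside an rpp-formula. Similarly, your route to case~(1) via iterating $\plusarr{\cdot},\minusarr{\cdot}$ on unions of orbits is not how the paper proceeds: there the proper $B$ comes from the \emph{center} of a central relation, fed through a walking lemma that extracts a smooth part. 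Without these ingredients---central/TSR relations, the orbit-enumeration device $O$, and the maximality argument---the dichotomy does not follow.
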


\subsection{Loops without parameters}

\noindent
The following refined version of \Cref{thm:two} is a simple consequence of \Cref{thm:main_finite}. For this result, the rankings are not needed.

    \begin{theorem}%
    \label{thm:two_refined}
      Let $\relstr{A}=(A; \arr)$ be a finite smooth digraph, and let $\group G$ be a subgroup of $\aut(\relstr A)$. If %
      $\relstr{A}$
      is linked, then $\rightarrow$ and the orbits of $\group G$ pp-define 
        \begin{itemize}
        \item some nonempty $B \subseteq A$ such that $B^2 \subseteq \arr$, or
        \item $\OR(\alpha,\alpha)$ for some proper equivalence relation  $\alpha$  on some subset $C \subseteq A$. 
        \end{itemize}
    \end{theorem}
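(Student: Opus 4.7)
The plan is to induct on $|A|$, using \Cref{thm:main_finite} at $k=1$ to drive the inductive step. Since that theorem is phrased via ranked automorphism groups, I first recast the given $\group G \leq \aut(\relstr A)$ as a ranked group by letting $\group H$ consist of all constant sequences $(f_i)_{i\in\Int}$ with $f_i = f$ for some common $f \in \group G$ and all $i$. Each such sequence is a ranked homomorphism of $(A; \rightarrow_{01})$ precisely because $f$ preserves $\rightarrow$, so $\group H$ is a subgroup of the ranked automorphism group of $(A; \rightarrow_{01})$ whose projection is $\group G$; in particular, the orbits of $\group G$ on $A$ are exactly the $0$-ranked $1$-orbits to which the theorem refers.

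For the induction, the base case $\rightarrow = A^2$ is immediate: take $B = A$. Otherwise $\rightarrow \neq A^2$ and all hypotheses of \Cref{thm:main_finite} are satisfied at $k=1$. The theorem delivers an rpp-definition, from $\rightarrow_{01}$ and the $0$-ranked orbits of $\group G$, of either (a) a proper subset $B \subsetneq A$ making $\relstr A|_B$ smooth and linked, or (b) a $0$-ranked $\OR(\alpha,\alpha)$ on some $C \subseteq A$. Because we have not passed to a modified arrow $\rightarrow + g$, \Cref{lem:pp_from_rpp} is not needed here: every atom in the rpp-formula is either a unary $0$-ranked relation (imposing no constraint on satisfying assignments) or $\rightarrow_{01}$ (requiring only that its two arguments carry consecutive ranks), and all free variables are $0$-ranked, so simply forgetting the ranking yields a genuine pp-formula over $\rightarrow$ and the orbits of $\group G$ that defines the same tuple set. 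In particular, case (b) immediately gives the second alternative of the theorem.

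In case (a), the crux of the induction step is that $B$ is $\group G$-invariant: since $\group G$ preserves $\rightarrow$ and setwise preserves each of its own orbits, it preserves everything pp-defined from these, $B$ included. Hence $\group G|_B$ is a well-defined subgroup of $\aut(\relstr A|_B)$, and its orbits on $B$ are exactly the $\group G$-orbits contained in $B$, all of which remain pp-definable in $\relstr A$ using the unary predicate $B$. Applying the inductive hypothesis to the strictly smaller smooth linked digraph $\relstr A|_B$ with this restricted group yields either $B' \subseteq B$ with $(B')^2 \subseteq {\rightarrow}$, or $\OR(\alpha,\alpha)$ for a proper equivalence on some $C \subseteq B$; relativizing through the pp-definable predicate $B$, both conclusions lift to pp-definitions in $\relstr A$ from $\rightarrow$ and the orbits of $\group G$. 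The only substance beyond bookkeeping lies in the two transparent observations above --- the $\group G$-invariance of $B$ and the soundness of dropping ranks when $\rightarrow$ is untouched --- so I do not expect any genuine obstacle here, the real work being fully absorbed into \Cref{thm:main_finite}.
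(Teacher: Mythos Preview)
Your proof is correct and follows essentially the same approach as the paper: define $\group H$ as the group of constant sequences from $\group G$, apply \Cref{thm:main_finite} with $k=1$, and iterate on the smaller subset $B$ in case~\ref{main:red}. Your additional remarks --- that forgetting ranks turns an rpp-definition over $\rightarrow_{01}$ into a pp-definition over $\rightarrow$, and that $B$ is $\group G$-invariant so the restriction is legitimate --- make explicit what the paper's terse proof leaves implicit.
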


    \begin{proof} 
       We apply~\Cref{thm:main_finite} with $k=1$, the same  $\relstr{A}$, and $\group H = \{ (g)_{i\in\Int}: g \in \group G\}$. Both the failure of the assumption $(\kpower) \neq A^2$ and \cref{main:proj} give the desired conclusion; if~\cref{main:red} holds, then we restrict $\relstr{A}$ 
       and $\group G$ to $B$ and apply~\Cref{thm:main_finite} again.
      In the end we either get $\OR(\alpha,\alpha)$ for a proper equivalence  relation $\alpha$ on some $C \subseteq A$, or we obtain a full subdigraph of $\relstr{A}$, as required.
    \end{proof}

Note that \Cref{thm:two} is an immediate consequence of the last theorem and \Cref{prop:OR}. 

A standard procedure for obtaining identities from structural results such as~\Cref{thm:two_refined} gives us the following corollary. The proof is given in~\Cref{apps:uglyterms} and the result further discussed in~\Cref{sec:conclusion}.

\begin{restatable}{theorem}{ThmUglyTerms} \label{thm:ugly_terms}
Let $\relstr{A}$ be a finite core structure and let $m$ be greater or equal to the number of elements of $\aut(\relstr A)$. Then the following are equivalent: 
\begin{itemize}
    \item $\relstr{A}$ does not pp-interpret all finite structures; in other words, $\relstr{A}$ has polymorphisms that satisfy \emph{some} nontrivial system of identities,
    \item $\pol(\relstr{A})$ satisfies 
\begin{alignat*}{8}
h(&&\alpha_1(x)&,\ldots,&\alpha_m(x)&,%
x,y,x,z,y,z)\approx\\
h(&&\centermathcell{y}&,\ldots,&\centermathcell{y}&,%
y,x,z,x,z,y).
\end{alignat*}
\end{itemize}
\end{restatable}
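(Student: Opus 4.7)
The equivalence has an easy and a hard direction. For the easy direction, supposing $\pol(\relstr{A})$ satisfies the displayed identity, I would just check that the identity is \emph{non-trivial} and then apply the folklore fact recalled in the preliminaries. Non-triviality reduces to showing that no projection clone satisfies it: in any projection clone, each unary $\alpha_i$ must be the identity, and for every projection $\pi_j$ with $j\in\{1,\dotsc,m+6\}$ the $j$-th entries of the two sides of the identity are two distinct variables among $\{x,y,z\}$, so $\pi_j$ does not satisfy the identity.

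The converse direction is the substantive one, and my plan is to follow the standard procedure that converts the loop dichotomy~\Cref{thm:two_refined} into polymorphism identities. Fix an enumeration $\alpha_1,\dotsc,\alpha_m$ of $\aut(\relstr A)$, with repetitions allowed since $m\geq|\aut(\relstr A)|$. For each $i\in\{1,\dotsc,m+6\}$ form the $|A|^3$-tuples $\mathbf{l}_i,\mathbf{r}_i\in A^{A^3}$ whose $(x,y,z)$-coordinate equals the $i$-th entry of the LHS, respectively RHS, of the displayed identity; concretely, $\mathbf{l}_i=\alpha_i\circ\pi_1$ for $i\leq m$, while the remaining twelve tuples are among $\pi_1,\pi_2,\pi_3$ according to the Siggers pattern. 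All of $\mathbf{l}_i,\mathbf{r}_i$ lie in the subalgebra $\mathbf{S}\leq\relstr{A}^{A^3}$ generated by the three projections, i.e., the set of ternary term operations of $\pol(\relstr A)$. The key observation is that the identity is satisfied by some $(m+6)$-ary polymorphism $h$ if and only if the $\pol(\relstr A)$-invariant binary relation $\arr$ on $\mathbf S$ generated by the $m+6$ pairs $(\mathbf{l}_i,\mathbf{r}_i)$ contains a loop $(\mathbf{f},\mathbf{f})$, since such a loop is precisely witnessed by $\mathbf{f}=h(\mathbf{l}_1,\dotsc,\mathbf{l}_{m+6})=h(\mathbf{r}_1,\dotsc,\mathbf{r}_{m+6})$.

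To produce such a loop I would apply~\Cref{thm:two_refined} to the finite digraph $(\mathbf S,\arr)$, taking the diagonal action of $\aut(\relstr A)$ as the subgroup. Both $\mathbf S$ and $\arr$ are pp-definable from $\relstr A$ with parameters (the generators), so any conclusion of~\Cref{thm:two_refined} other than ``$\arr$ has a loop on $\mathbf S$'' would---via~\Cref{prop:OR} in the $\OR(\alpha,\alpha)$ case or directly in the full-subgraph case---let $\relstr A$ pp-interpret every finite structure, contradicting the hypothesis. The main obstacle I anticipate is verifying the smoothness and linkedness hypotheses of~\Cref{thm:two_refined} for $(\mathbf S,\arr)$: smoothness should follow because each $\pi_k$ appears both as some $\mathbf{l}_i$ and as some $\mathbf{r}_j$ among the generators, while linkedness relies crucially on the first $m$ ``automorphism-twisted'' pairs $(\alpha_i\circ\pi_1,\pi_2)$ to tie all $\aut(\relstr A)$-translates of $\pi_1$ into a single $\arr$-component. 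This is precisely the role of the hypothesis $m\geq|\aut(\relstr A)|$: it guarantees that every automorphism of $\relstr A$ contributes such a pair.
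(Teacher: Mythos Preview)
Your plan matches the paper's proof: build the $\pol(\relstr A)$-invariant digraph on ternary term operations generated by the Siggers edges plus the twisted edges $(\alpha_i\pi_1,\pi_2)$, apply~\Cref{thm:two_refined} with the diagonal $\aut(\relstr A)$-action, and exclude the $\OR(\alpha,\alpha)$ alternative via~\Cref{prop:OR}. Two small points to tighten: the ``full-subgraph'' conclusion of~\Cref{thm:two_refined} already yields a loop (each $b\in B$ satisfies $b\arr b$), so it is the desired case rather than a second route to a contradiction; and you do not need parameters---$\mathbf S$, $\arr$, and the $\group G$-orbits are invariant under $\pol(\relstr A)$ and hence pp-definable from the core $\relstr A$ outright, which is what the contradiction with the first bullet actually requires.
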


\subsection{Pseudoloops with pseudo assumptions}

    A refined version of~\Cref{thm:one_a} follows from~\Cref{thm:main_finite} by employing the trick mentioned in~\Cref{subsec:ranked}. We state a ``slighly infinite'' version that will be required in \Cref{sec:infinite}. We give a proof-sketch; the full proof is in~\Cref{apps:one_a_refined}.

    \begin{restatable}{theorem}{OneARefined} \label{thm:one_a_refined}
    Let $\relstr{A} = (A; \arr)$ be a smooth digraph and $\group G$ a subgroup of $\aut(\relstr A)$.
    If all weak components of $\relstr{A}$ are finite, and $\relstr{A}/\group G$ has algebraic length 1, 
    then 
    \begin{itemize}
        \item $\relstr{A}/\group{G}$ has a loop, or 
        \item 
        $\arr$ pp-defines with parameters $\OR(\alpha,\alpha)$ for some proper equivalence $\alpha$ on a finite $C \subseteq A$.
    \end{itemize}
    \end{restatable}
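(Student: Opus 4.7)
The plan is to reduce to a finite weak component of $\relstr{A}$, replace $\arr$ by a ``twisted'' relation $\arr'$ that genuinely has algebraic length $1$, apply \Cref{thm:main_finite} to $\arr'$, and pull the conclusion back via \Cref{lem:pp_from_rpp}.

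First I would lift a closed walk of algebraic length $1$ in $\relstr{A}/\group G$ to $\relstr{A}$, producing $a \in A$, $g \in \group G$, and an $\arr$-walk from $a$ to $g(a)$ of algebraic length $1$. The weak component $\relstr{A}'$ of $a$ is finite by hypothesis, pp-definable in $\arr$ with the parameter $a$, and setwise stabilized by $g$; let $\group G'$ denote the restriction to $A'$ of the setwise stabilizer of $\relstr{A}'$ in $\group G$. If $\relstr{A}'/\group G'$ contains a loop, then so does $\relstr{A}/\group G$ and we are done. Otherwise I introduce the twisted digraph $(A', \arr')$ with $\arr' := \arr + g$, so that $x \arr' y$ iff $x \arr g^{-1}(y)$. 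The map $(x, i) \mapsto (g^{-i}(x), i)$ is an isomorphism between the unwrappings of $\arr'$ and $\arr$ on $A' \times \Int$, so a closed $\arr'$-walk at $y$ of algebraic length $\ell$ corresponds to an $\arr$-walk from $y$ to $g^{-\ell}(y)$ of algebraic length $\ell$. Since $g$ has finite order on $A'$, concatenating shifted copies of the lifted walk yields $\arr$-walks that translate (via the isomorphism) to closed $\arr'$-walks at $a$ of coprime algebraic lengths; hence $(A', \arr')$ has algebraic length $1$ and is $k$-linked for some $k$.

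I would then apply \Cref{thm:main_finite} to $(A', \arr'_{01})$ with $\group H := \{(g^i h g^{-i})_{i\in\Int} : h \in \group G'\}$, whose projection is $\group G'$ and whose conjugation structure matches the rank shifts of $\arr'_{01}$. Iterating, each round either produces $0$-ranked $\OR(\alpha,\alpha)$ rpp-defined from $\arr'_{01}$ and the $0$-ranked orbits of $\group G'$, or reduces to a proper smooth $k$-linked subset $B \subsetneq A'$. Termination happens either with an $\OR$-conclusion or because at some $B$ the $k$-fold composition of $\arr'$ saturates to $B^2$. In the first case, an extension of \Cref{lem:pp_from_rpp} handling $0$-ranked auxiliary unary predicates, combined with the finiteness of $A'$ to absorb orbit predicates into element parameters via the disjunctive form of $\OR$, delivers a pp-definition of $\OR(\alpha,\alpha)$ in $\arr$ with parameters. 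In the second case, the $k$-fold composition of $\arr'$ equaling $B^2$ translates under the twist into the $k$-fold composition of $\arr$ being $B^2$ on $B$; combined with an algebraic-length-$1$ walk in $\relstr{A}|_B/\group G'$, this extracts an edge $x \arr h(x)$ for some $x \in B$ and $h \in \group G'$, i.e., a loop in $\relstr{A}/\group G$.

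The main obstacle is the failure case: turning the saturation of the $k$-fold composition of $\arr$ on $B$ into an \emph{actual} loop in the quotient, rather than merely a closed walk of length $k$, requires a careful combinatorial argument exploiting the cyclic action of $\langle g \rangle$ together with the pseudo-algebraic-length-$1$ hypothesis. A secondary difficulty is the bookkeeping that shows $\group H$ acts as ranked automorphisms of $(A', \arr'_{01})$, and the technical extension of \Cref{lem:pp_from_rpp} needed to eliminate the $0$-ranked orbit predicates in favor of ordinary parameters in the final pp-definition.
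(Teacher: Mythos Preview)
Your overall strategy---lift the algebraic-length-$1$ walk, twist by $g$ to define $\arr' = \arr + g$, apply \Cref{thm:main_finite} iteratively to the twisted digraph, and pull back via \Cref{lem:pp_from_rpp}---matches the paper's. However, two choices diverge from the paper, and one of them is a genuine gap.

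\textbf{The saturation case.} Your ``main obstacle'' is real, and the paper does not resolve it the way you attempt. When the iterated application of \Cref{thm:main_finite} hits a subset $B$ on which the $k$-fold composition of $\arr'$ is all of $B^2$, the paper does \emph{not} try to extract a loop in $\relstr{A}/\group G$. Instead it invokes an external result, \cite[Claim~3.11]{Cyclic}, which shows that $\arr'_{01}$ rpp-defines with parameters a proper subset of $B$ on which there is again a smooth weak component of algebraic length~$1$; the iteration then continues on that smaller set. Since $\arr'$ has no loops (a loop $a \arr' a$ would mean $a \arr g^{-1}(a)$, a pseudoloop), the shrinking process cannot bottom out at a single vertex, so it must eventually produce $\OR(\alpha,\alpha)$. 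Your proposed translation ``$(\arr')^k = B^2$ implies $(\arr)^k = B^2$ on $B$'' is not correct: $(\arr')^k = (\arr)^k + g^k$, so saturation of $(\arr')^k$ on $B$ only says that $(\arr)^k$ is full on $B \times g^{-k}(B)$, and after several rounds of reduction there is no reason for $B$ to be $g$-invariant. Even granting that translation, getting from ``$(\arr)^k = B^2$'' to an actual edge $x \arr h(x)$ with $h \in \group G$ is not explained.

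\textbf{The choice of $\group H$.} Your ``secondary difficulty'' is self-inflicted. The paper applies \Cref{thm:main_finite} with the \emph{trivial} ranked automorphism group $\group H = \{(\id)_{i\in\Int}\}$, whose projection has singleton orbits. These orbits are then just parameters, so \Cref{lem:pp_from_rpp} applies directly with no extension needed. Your conjugation group $\group H = \{(g^i h g^{-i})_{i\in\Int} : h \in \group G'\}$ is a valid subgroup of the ranked automorphism group, but using it buys nothing for this theorem (you want parameters in the conclusion anyway) and forces the extra bookkeeping you flag.
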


    \begin{proof}[Sketch of proof]
      Representatives of a  closed walk in $\relstr{A}/\group G$ of algebraic length 1 can be shifted using automorphisms from $\group G$ to a walk of algebraic length 1 from some $a$ to $b$ in the same orbit. We take $g \in \group G$ so that $g(b)=a$, define $\arr' = \arr + g$, and observe that the component of $a$ wrt. $\arr'$ has algebraic length 1 and is finite. Such components are known to be pp-definable with parameters. 
      We are in the position to keep applying~\Cref{thm:one_a_refined} as in the proof of~\Cref{thm:one_b_refined} with the caveat that the case $(\kpower) = A^2$ needs to be dealt with (but this is possible by a result from~\cite{Cyclic}).
      Application of   \Cref{lem:pp_from_rpp} finishes the proof. 
\end{proof}

    We provide two examples to illustrate the main theorems. The first one is an observation that has been proved and reproved repeatedly in the literature: the undirected 6-cycle is not invariant under any idempotent weak near-unanimity operation. Now this result follows from our general theorem:
    Take the automorphism $g$ of the 6-cycle according to the red arrow in~\Cref{fig:exampleone} and apply~\Cref{thm:one_a} to the graph and  $\group G = \{\mathrm{id},g\}$. We get that the graph with parameters pp-interprets every finite structure; the polymorphisms thus do not satisfy any nontrivial idempotent equational condition.

\begin{figure}
\centering
\begin{minipage}{.25\textwidth}
  \centering
        \begin{tikzpicture}[scale=0.7,rotate=90] 
          \Vertices[Lpos=90,unit=2]{circle}{0,5,4,3,2,1} 
          \Edges(1,2,3,4,5,0,1)
          \SetUpEdge[style={->,-stealth,bend left=20,ultra thick},color=red!40]
          \Edges(0,3,0)
          \Edges(1,4,1)
          \Edges(2,5,2)
        \end{tikzpicture}
  \caption{First example}
  \label{fig:exampleone}
\end{minipage}%
\begin{minipage}{.25\textwidth}
  \centering
        \begin{tikzpicture}[scale=0.8] 
         
          \Vertex{0} \Vertex[x=2,y=2]{1} \Vertex[x=4,y=0]{2} 
          \Edges(1,2,0,1))
        
          \Vertex[x=0,y=-1.4]{3} \Vertex[x=2,y=.6]{4} \Vertex[x=4,y=-1.4]{5} 
          \Edges(4,5,3,4)
        
          \SetUpEdge[style={->,-stealth,bend left,ultra thick},color=red!40]
          \Edges(0,3,0) \Edges(1,4,1) \Edges(2,5,2)
        \end{tikzpicture}
  \caption{Second example}
  \label{fig:exampletwo}
\end{minipage}
\end{figure}

    The second example shows that some of the assumptions in our results cannot be removed. We consider the graph $\relstr T_{3,3}$ in~\Cref{fig:exampletwo} and $\group{G} = \{\mathrm{id},g\}$, where $g$ is in red.
    On the one hand,~\Cref{thm:one_a} still shows that
    polymorphisms do not satisfy any nontrivial \emph{idempotent} identities. On the other hand, $\relstr T_{3,3}/\group G$ is linked and has no loop, but $\relstr T_{3,3}$ satisfies some non-trivial identities, see \cite[Example 6.3]{TopologyIsRelevant}.
    This shows, e.g., that one cannot switch in \Cref{thm:two} "$\relstr{A}$ linked" to "$\relstr{A}/\group G$ linked" or "$\relstr{A}$ symmetric non-bipartite",  and that parameters are necessary in \Cref{thm:one_a}.

\section{Proof of~\Cref{thm:main_finite}}

\label{sec:main_finite_proof}
  The entire section is devoted to the proof of~\Cref{thm:main_finite}.  We fix a finite smooth digraph $\relstr{A} = (A; \arr)$ and  a subgroup $\group H$ of the ranked automorphism group of $(A; \rightarrow_{01})$ whose projection is $\group G$. We assume that $\relstr{A}$ is  $k$-linked but  $(\kpower) \neq A^2$.

  As all the relations we work with in this section are on $A$,  we do not usually explicitly specify it, e.g., ``$R$ is subdirect'' means $R$ is subdirect in $A$. For convenience, we sometimes assume $A = \{1,2, \dots, |A|\}$.
  Let $O$ be the 0-ranked $\group G$-orbit of the tuple $(1,2, \dots, |A|)$. Note that for any $g:A \rightarrow A$, the tuple $(g(1), g(2), \dots, g(|A|))$ is in $O$ iff $g \in \group G$.

  We will, in three steps, show that $\arr_{01}$ and $O$ rpp-define a proper unary $B$ such that $\relstr{A}|_B$ is smooth and $k$-linked (\cref{main:red}) or 0-ranked $\OR(\alpha,\alpha)$ for a proper equivalence $\alpha$ on $C \subseteq A$ (\cref{main:proj}).

  \subsection{Constructing central or Q-central relations}

    We start with definitions, move on to a few auxiliary facts, and conclude the subsection with a first episode of the proof of~\Cref{thm:main_finite}. 

The concepts we now introduce play a significant role in Rosenberg's classification of maximal clones~\cite{Ros70}; we take them from Pinsker's presentation in~\cite{Pin02} of Quackenbush's proof of the classification~\cite{Qua71}. %
A relation $R$ is \emph{totally symmetric} if
  for every $(a_1,\dotsc,a_n)\in R$ and every permutation $\sigma$ on $\{1, 2, \dots, n\}$, the tuple 
  $(a_{\sigma(1)},\dotsc,a_{\sigma(n)})$ is in $R$.
  Similarly, a relation $R$ is \emph{totally reflexive} if
  any tuple from $A^n$ with at least two repeating entries belongs to $R$.
  A relation which is totally symmetric and totally reflexive will be called a \emph{TSR-relation}.

  The next concept, a center, is additionally very useful in the  theory of CSPs, e.g., in Zhuk's dichotomy proof~\cite{ZhukFVConjecture} or in absorption theory~\cite{absorption} (cf.~\cite{MinimalTaylor}). We remark that in some of the mentioned literature, the terminology slightly differs.
    
    \begin{definition}[center]
      Let $n\geq 2$. We call a relation $R\subseteq A^n$ \emph{P-central}~%
       if it is subdirect and
      the set
      \begin{equation*}
        \{a\in A: \forall a_2,\dotsc,a_{n}\ (a,a_2,\dotsc,a_{n})\in R\}
      \end{equation*}
      is nonempty. 
      In such a case, the above set is called a \emph{P-center}. The P stands for ``power"; for $n=2$ we call a P-central relation \emph{central} and P-center a \emph{center}. 
    \end{definition}
    
    \begin{definition}[central equivalence]\label{def:2center}
      A relation $R\subseteq A^n$~($n>2$) is \emph{PQ-central}~%
      (eQuivalence-central)
      if its projection to any two coordinates is full and the binary relation
      \begin{equation*}
        \alpha=\{(a,a')\in A^2: \forall a_3,\dotsc,a_n\ (a,a',a_3,\dotsc,a_{n})\in R\}
      \end{equation*}
      is an equivalence relation.
      The above equivalence relation is then called \emph{P-central}.
      For $n=3$ we talk about Q-central relations and central equivalence relations.
    \end{definition}

    The first fact is an easy observation whose proof is postponed to~\Cref{Pinsker}.
    
    \begin{restatable}{lemma}{gettingTSR}\label{prop:gettingTSR}
      Every subdirect and linked but not central ranked relation 
      rpp-defines a proper $0$-ranked TSR relation. 
      Moreover, if the new relation is binary, then it is additionally linked. %
    \end{restatable}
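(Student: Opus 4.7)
I will focus on the main case where $R$ is binary with ranking $(0,1)$ (after shifting if necessary); the general-arity case follows by substituting a tuple of existential variables for the single $z$ below, with no change in the overall structure. The basic rpp-definable building blocks are the forward and backward link relations
\begin{align*}
L_R(x,y) &\equiv \exists z\,\bigl(R(x,z) \wedge R(y,z)\bigr),\\
L_R^{\leftarrow}(x,y) &\equiv \exists z\,\bigl(R(z,x) \wedge R(z,y)\bigr),
\end{align*}
both $0$-ranked ($L_R^{\leftarrow}$ after using $R$ with the shift $(-1,0)$), visibly symmetric, and reflexive because $R$ is subdirect; both are therefore TSR.

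The argument splits into three cases. \emph{Case 1.} If $L_R \ne A^2$, take $L_R$ itself. Linkness follows from the hypothesis that $R$ is linked (so the transitive closure of $L_R$ already equals $A^2$) combined with reflexivity of $L_R$, which gives $L_{L_R}(x,y) \supseteq L_R(x,y)$ using the witness $z=x$; hence the transitive closure of $L_{L_R}$ equals $A^2$ as well. \emph{Case 2.} If $L_R = A^2$ but $L_R^{\leftarrow} \ne A^2$, take $L_R^{\leftarrow}$. Given $x,y \in A$, subdirectness supplies $u \in R^{-1}(x)$ and $v \in R^{-1}(y)$, and the assumption $L_R = A^2$ supplies a common out-neighbor $z$ of $u$ and $v$; this witnesses $L_R^{\leftarrow}(x,z) \wedge L_R^{\leftarrow}(y,z)$, so $L_{L_R^{\leftarrow}} = A^2$ and $L_R^{\leftarrow}$ is in fact $1$-linked.

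\emph{Case 3.} If $L_R = L_R^{\leftarrow} = A^2$, the non-centrality hypothesis enters via higher arity. For $n \ge 1$, define the $0$-ranked symmetric $n$-ary relation
\[
L_R^{(n),\leftarrow}(x_1,\ldots,x_n) \equiv \exists z\,\bigwedge_{i=1}^{n} R(z,x_i).
\]
Non-centrality says that no $z\in A$ satisfies $R(z,a)$ for every $a\in A$, so taking $(x_1,\ldots,x_{|A|})$ to enumerate $A$ shows $L_R^{(|A|),\leftarrow} \ne A^{|A|}$. Let $n_0$ be the smallest $n$ with $L_R^{(n),\leftarrow} \ne A^n$; the case assumption forces $n_0 \ge 3$. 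Total reflexivity follows from minimality of $n_0$, since whenever two coordinates coincide the defining formula collapses to $L_R^{(n_0-1),\leftarrow} = A^{n_0-1}$. Because $n_0 \ge 3$, the ``moreover'' clause about binary outputs is vacuous in this case.

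The main difficulty, and the reason the moreover clause works so cleanly, is Case~2: one might expect to need an extra combinatorial argument to secure linkness of $L_R^{\leftarrow}$, but the very failure of the forward link to be proper ($L_R = A^2$) is what powers the one-step linkness of $L_R^{\leftarrow}$. The extension to higher-arity $R$ replaces the existential variable $z$ by a tuple $\bar z$ throughout; the same trichotomy applies, with the non-centrality assumption (no universal head with respect to the first coordinate) supplying properness at some arity $n_0 \le |A|$.
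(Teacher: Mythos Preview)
Your proof is correct and close to the paper's, but with one noteworthy twist. The paper works uniformly with the family $R^{[k]}(x_1,\dots,x_k)\equiv\exists y\,\bigwedge_i R(y,x_i)$ (your $L_R^{(k),\leftarrow}$) and simply takes the least $k\ge 2$ with $R^{[k]}\ne A^k$; for $k=2$ it invokes the property ``$R$ linked $\Rightarrow R^{[2]}$ linked'' without proof. Your trichotomy replaces that invocation by a case split: if $L_R$ is already proper you output $L_R$ (where linkedness is immediate from the definition), and only when $L_R=A^2$ do you pass to $L_R^{\leftarrow}=R^{[2]}$, at which point the extra hypothesis $L_R=A^2$ makes the linkedness of $L_R^{\leftarrow}$ a one-line computation. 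This is a genuinely more self-contained argument for the ``moreover'' clause; the paper's route is shorter but leans on a folklore fact (that the in-link and out-link equivalences coincide for finite subdirect relations).

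One minor remark: the lemma is only about binary $R$, since ``linked'' and ``central'' are defined in the paper solely for binary relations, so your aside about the general-arity case is unnecessary (though harmless).
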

    
    We will use the following fact stated in~\cite{Pin02}; 
    we provide the proof in~\Cref{Pinsker}. In this lemma we need not worry about the ranking of the relations,
    as all the applications will use $0$-ranked relations exclusively (so pp-definitions will automatically give rise to $0$-ranked rpp-definitions).
    
    \begin{restatable}{lemma}{PorPQ}\label{prop:Rosenberg}
      Each proper TSR relation of arity at least 3,
      and each proper linked TSR relation of arity 2
      pp-defines a proper TSR relation which is P-central or PQ-central.
    \end{restatable}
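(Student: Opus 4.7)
My plan is to prove the lemma by induction on the arity $n$ of $R$, modeled on the Rosenberg--Quackenbush argument as presented by Pinsker~\cite{Pin02}. Whenever $R$ is a proper TSR relation (of arity $n \ge 3$, or of arity $2$ and linked) that is not already P-central or PQ-central, I will aim to pp-define from $R$ a proper TSR relation of strictly smaller arity; the inductive hypothesis will then finish. Each intermediate relation I build will automatically be TSR because total symmetry and total reflexivity are preserved under intersections, symmetric existential projections, and identifications of variables, which is what lets the induction go through cleanly.

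For the inductive step with $n \ge 3$, the first candidate is the symmetric projection
\[
R'(x_1,\ldots,x_{n-1}) \equiv \exists y\; R(x_1,\ldots,x_{n-1},y),
\]
which pp-defines a TSR relation of arity $n-1$. If $R'$ is proper, the induction hypothesis applies immediately. Otherwise $R' = A^{n-1}$, and I will exploit the assumed failure of both P-centrality and PQ-centrality of $R$ to assemble, via pp-intersection with identified coordinates, a proper TSR relation of arity $n-1$ from witnessing tuples lying outside of $R$. For the base case $n = 2$, a proper linked TSR relation is either itself central (and we are done), or else I will use linkedness to pass to a strictly increasing chain $R \subsetneq 2R \subsetneq \cdots \subsetneq mR = A^2$ of reflexive symmetric relations, and analyze the maximal proper member $(m-1)R$ together with $R$ to extract either a central binary relation or a PQ-central ternary one.

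The hard part will be precisely the degenerate sub-case $R' = A^{n-1}$ of the inductive step, where the naive projection returns no information. Here the plan is to consider the binary ``full-completion'' relation
\[
\alpha_R(a,b) \equiv \forall a_3, \ldots, a_n\; R(a, b, a_3, \ldots, a_n),
\]
which is reflexive and symmetric by TSR-ness of $R$, and which is an equivalence precisely when $R$ is PQ-central. A careful case analysis on the transitivity defect of $\alpha_R$, combined with the failure of P-centrality, should then manufacture the desired proper TSR subrelation of arity $n-1$ via explicit pp-intersections and coordinate identifications; this is the main technical content of the proof, and it is where the full Pinsker case analysis must be carried out.
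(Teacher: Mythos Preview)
There is a genuine gap in your plan, and it stems from the direction of your induction. You propose to \emph{decrease} arity, but the argument in the paper (which is precisely Pinsker's presentation of the Quackenbush argument you cite) goes the \emph{other way}: it iterates the construction
\[
R^{[k]}(x_1,\dots,x_k)\;\equiv\;\exists y\ \bigwedge_{1\le i_1<\dots<i_{n-1}\le k} R(y,x_{i_1},\dots,x_{i_{n-1}})
\]
for $k\ge n$, so arity stays the same or \emph{increases}. One stops either when $R^{[n]}=R$ (which forces the central-equivalence $\alpha$ to be transitive, hence $R$ is PQ-central), or when every $R^{[k]}$ is full (then the witness $y$ for the tuple enumerating $A$ lies in the P-center). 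Linkedness of the arity-$2$ starting relation is only used to rule out hitting a fixed point while the arity is still $2$.

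Your downward scheme already breaks at the first step: for any TSR $R$ of arity $n\ge 3$, the projection $R'(x_1,\dots,x_{n-1})=\exists y\,R(x_1,\dots,x_{n-1},y)$ is \emph{always} all of $A^{n-1}$, since setting $y:=x_1$ produces a tuple with a repeated entry and total reflexivity applies. (The paper records this explicitly as $R^{[n-1]}=A^{n-1}$.) The same obstruction kills variable identification: any atom $R(\dots)$ with a repeated argument is vacuously true. So you are \emph{always} in your ``hard case'', and the tool you propose there, the relation $\alpha_R(a,b)\equiv\forall a_3,\dots,a_n\,R(a,b,a_3,\dots,a_n)$, is defined with a universal quantifier and is not pp-definable from $R$; you give no mechanism for turning a transitivity defect of this non-pp object into an honest pp-definition of a proper TSR relation of arity $n-1$. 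Concretely, on $A=\{0,1,2,3,4\}$ the ternary TSR relation $R=A^3\setminus\{\text{permutations of }(0,1,2)\text{ and }(2,3,4)\}$ is neither P-central nor PQ-central, and the paper's iteration on it jumps from arity $3$ straight to a proper $R^{[5]}$ of arity $5$; there is no evident proper binary TSR relation in its pp-closure, and your proposal does not supply one. You should abandon the downward induction and instead run the $R\mapsto R^{[k]}$ iteration upward.
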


    The next order of business is to get rid of the powers.
    This can be achieved by using any relation containing only ``surjective'' tuples, such as $O$.
    \begin{lemma}\label{prop:centerispp}
      A ranked TSR P-central relation $R$ and $O$ rpp-define the center of $R$ and a TSR central (binary) $R'$ with the same center (the ranking of $R'$ is inherited from the first two coordinates of $R$).
    \end{lemma}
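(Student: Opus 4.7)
The plan hinges on the fact that every tuple in $O$ is a permutation of $A$: since $O$ is the $\group G$-orbit of $(1, 2, \dots, |A|)$ and $\group G$ consists of permutations of $A$, each $(y_1, \dots, y_{|A|}) \in O$ satisfies $\{y_1, \dots, y_{|A|}\} = A$. This will let me use $O$ inside an rpp-formula to range implicitly over all of $A$ via finite conjunctions of $R$-atoms indexed by $\{1, \dots, |A|\}$.

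To define the center $C$, I would write the formula
\[
C(x) \equiv \exists y_1, \dots, y_{|A|}\ O(y_1, \dots, y_{|A|}) \wedge \bigwedge_{(i_2, \dots, i_n) \in \{1, \dots, |A|\}^{n-1}} R(x, y_{i_2}, \dots, y_{i_n}),
\]
with $x$ as the only free variable. Since $(y_1, \dots, y_{|A|})$ lists every element of $A$, this formula holds exactly when $(x, a_2, \dots, a_n) \in R$ for all $(a_2, \dots, a_n) \in A^{n-1}$, which is the definition of the P-center. For the binary $R'$ I apply the same trick but retain two free variables,
\[
R'(x_1, x_2) \equiv \exists y_1, \dots, y_{|A|}\ O(y_1, \dots, y_{|A|}) \wedge \bigwedge_{(i_3, \dots, i_n)} R(x_1, x_2, y_{i_3}, \dots, y_{i_n}),
\]
so that $(x_1, x_2) \in R'$ iff $(x_1, x_2, a_3, \dots, a_n) \in R$ for every $(a_3, \dots, a_n) \in A^{n-2}$. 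Verifying that $R'$ is a TSR central relation with center $C$ is then routine: symmetry in $x_1, x_2$ comes from the total symmetry of $R$ in its first two coordinates; total reflexivity holds because $(x, x, a_3, \dots, a_n)$ already has a repeated entry and hence lies in $R$; subdirectness follows from total reflexivity; and the center of $R'$ unfolds directly to $C$, which is non-empty since $R$ is P-central.

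The one delicate point, and the main technical obstacle, is the bookkeeping of rankings. If the ranking $r = (r_1, \dots, r_n)$ of $R$ assigns different values to different coordinates, then a single family $(y_j)$ cannot be legally plugged into positions of differing rank. The remedy is to introduce, for each value $s$ appearing in $\{r_2, \dots, r_n\}$, a separate family $y^s_1, \dots, y^s_{|A|}$ of variables all of rank $s$, together with the atom $O(y^s_1, \dots, y^s_{|A|})$, which is legal because it is the $s$-shift of the $0$-ranked relation $O$. Placing $y^{r_j}_{i_j}$ in position $j$ of every $R$-atom then produces an unshifted copy of $(R, r)$, so the overall formula is a valid rpp-formula; the free variable $x$ of $C$ carries rank $r_1$ and the free variables of $R'$ carry ranks $r_1, r_2$, exactly as the lemma asks.
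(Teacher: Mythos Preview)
Your argument is correct and uses the same core idea as the paper: exploit that every tuple in $O$ enumerates all of $A$, so conjoining over $R$-atoms with $y$-variables from an $O$-tuple in the remaining coordinates has the effect of universally quantifying those coordinates.

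The only difference is organizational. You eliminate all the extra coordinates in one shot, which forces the multi-family bookkeeping in your final paragraph to reconcile the possibly differing ranks $r_2,\dots,r_n$. The paper instead reduces the arity one coordinate at a time via
\[
R^{n-1}(x_1,\dots,x_{n-1})\equiv\exists y_1,\dots,y_{|A|}\ O(y_1,\dots,y_{|A|})\wedge\bigwedge_i R(x_1,\dots,x_{n-1},y_i),
\]
shows that $R^{n-1}$ is again TSR and P-central with the same center, and iterates down to $R^2=R'$ and $R^1=C$. Because each step touches only the single rank $r_n$, a single $O$-family suffices and the ranking issue disappears entirely. Your version is equally valid; the paper's is just a bit tidier on the ranking side.
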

    
    \begin{proof}
      Let $n$ be the arity of $R$. 
      The relation
      \begin{multline*}
        R^{n-1}(x_1,\dotsc,x_{n-1}) = \exists y_1,\dotsc, y_{|A|}\\ O(y_1,\dotsc,y_{|A|}) \wedge \bigwedge_i R(x_1,\dotsc,x_{n-1}, y_i)
      \end{multline*}
      is a ranked TSR P-central relation of arity $n-1$ with the same center as $R$ whenever $n \geq 3$.
      Repeating the construction~%
      (substituting $R^{n-1}$ for $R$)
      we can get down to $R^2$, which can be taken for $R'$, 
      and to $R^1$, which is an rpp-definition of the center.
    \end{proof}
 
    A similar result holds for PQ-central relations. In this case we again need not worry about the ranking of the relations. The proof is almost identical to the proof of~\Cref{prop:centerispp} and we skip it.

    \begin{lemma}\label{prop:PQtoQ}
      A TSR PQ-central relation $R$ and $O$
      pp-define the central equivalence of $R$ and a TSR Q-central (ternary) relation $R'$ with the same central equivalence.      
    \end{lemma}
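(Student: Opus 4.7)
The plan is to mimic the construction from the proof of \Cref{prop:centerispp}, reducing the arity of $R$ by one at each step using $O$ to quantify out a coordinate, and then verify that the PQ-centrality together with the central equivalence are preserved throughout the reduction. Specifically, starting from a TSR PQ-central relation $R$ of arity $n \geq 3$, I define
\begin{multline*}
R^{n-1}(x_1,\dotsc,x_{n-1}) \equiv \exists y_1,\dotsc,y_{|A|}\\
O(y_1,\dotsc,y_{|A|}) \wedge \bigwedge_i R(x_1,\dotsc,x_{n-1},y_i),
\end{multline*}
and iterate: $R^3$ will be the desired Q-central ternary relation $R'$, and one more application yields the binary relation $R^2$, which I will show equals the central equivalence $\alpha$ of $R$.

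Since every $|A|$-tuple $(y_1,\dotsc,y_{|A|})$ in $O$ lists all elements of $A$, the formula defining $R^{n-1}$ amounts to $(x_1,\dotsc,x_{n-1}) \in R^{n-1}$ iff $(x_1,\dotsc,x_{n-1},y) \in R$ for all $y \in A$. From this, total symmetry and total reflexivity of $R^{n-1}$ are immediate (for the latter, a repeat among the $x_i$ makes $(x_1,\dotsc,x_{n-1},y)$ totally reflexive in $R$). For PQ-centrality when $n-1 \geq 3$: the projection to any two coordinates is full because, by total symmetry, one may place a repeat in the remaining $n-3$ coordinates to reduce to total reflexivity of $R$. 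Moreover, the central equivalence of $R^{n-1}$ coincides with $\alpha$: unwinding the definitions, $(a,a')$ lies in the central equivalence of $R^{n-1}$ iff $(a,a',a_3,\dotsc,a_{n-1},y) \in R$ for all choices of $a_3,\dotsc,a_{n-1},y$, which is exactly the condition $(a,a') \in \alpha$.

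At the bottom of the recursion, I reach the ternary relation $R^3$, which by the above is TSR and Q-central with central equivalence $\alpha$, so I set $R' := R^3$. Applying the same construction one more time yields the binary relation $R^2$, and $(a,a') \in R^2$ iff $(a,a',y) \in R^3$ for every $y \in A$, which is precisely the central equivalence of $R^3$, namely $\alpha$. Hence $R^2 = \alpha$, giving the required pp-definition of the central equivalence.

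I do not anticipate a genuine obstacle: the only point deserving care is verifying that the central equivalence is preserved exactly (not merely contained in one direction) under the arity reduction, but this falls out directly from the fact that $O$ forces the quantified tuple to range over a generating set of $A$. Everything else is a routine check of TSR and projection-fullness, paralleling the argument in \Cref{prop:centerispp}.
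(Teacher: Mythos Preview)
Your proposal is correct and follows exactly the approach the paper intends: the paper explicitly states that the proof is ``almost identical to the proof of~\Cref{prop:centerispp}'' and skips it, and you have faithfully carried out that iteration, correctly verifying that TSR, PQ-centrality, and the central equivalence are preserved at each arity-reduction step. The only cosmetic remark is that your phrase ``place a repeat in the remaining $n-3$ coordinates'' is slightly loose (the repeat is between one remaining coordinate and one of the two projected ones), but the argument itself is sound.
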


    We are ready to proceed with the first step of the proof of~\Cref{thm:main_finite}.

   \begin{lemma}[First step]\label{lem:first_step}
   The ranked relations $\arr_{01}$ and $O$ rpp-define a proper ranked central or a proper 0-ranked Q-central relation.
   \end{lemma}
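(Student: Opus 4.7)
The plan is to start from the binary ranked relation $\kpower$ -- viewed with rank $0$ on its first coordinate and rank $k$ on its second -- which is subdirect, linked (because $\relstr A$ is $k$-linked), a proper subset of $A^2$ (by the hypothesis $(\kpower)\neq A^2$), and rpp-defined from $\arr_{01}$ alone. From this relation I will produce the desired proper ranked central or proper $0$-ranked Q-central relation by applying \Cref{prop:gettingTSR}, \Cref{prop:Rosenberg}, and then either \Cref{prop:centerispp} or \Cref{prop:PQtoQ} in succession, with the orbit tuple $O$ entering only at the last step.

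The case split runs as follows. If $\kpower$ is already central, then it is itself a proper ranked central relation and there is nothing more to do. Otherwise \Cref{prop:gettingTSR} applied to $\kpower$ rpp-defines a proper $0$-ranked TSR relation $T$, which is additionally linked whenever it happens to be binary. Hence the hypotheses of \Cref{prop:Rosenberg} are met (either ``proper linked TSR of arity~2'' or ``proper TSR of arity at least~3''), and we pp-define from $T$ a proper TSR relation $T'$ that is P-central or PQ-central; necessarily P-central in the binary case, since PQ-centrality is defined only for arity at least three. Because all relations in this segment are $0$-ranked, every pp-step is automatically an rpp-step.

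Finally, I bring in $O$ to collapse arities. If $T'$ is P-central, \Cref{prop:centerispp} applied to $T'$ and $O$ rpp-defines a $0$-ranked TSR central binary relation with the same center; if $T'$ is PQ-central, \Cref{prop:PQtoQ} applied to $T'$ and $O$ pp-defines a $0$-ranked TSR Q-central ternary relation with the same central equivalence. In each sub-case the derived relation is proper, because otherwise the center (respectively the central equivalence) would be the whole of $A$ (respectively $A^2$), forcing $T'$ to equal a full power of $A$ and contradicting its properness. In all branches we obtain a proper $0$-ranked, and hence proper ranked, central or Q-central relation, as required.

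The main obstacle is the careful tracking of rankings. The starting relation $\kpower$ is not $0$-ranked when $k\geq 2$, and the Rosenberg-style lemmas \Cref{prop:Rosenberg}, \Cref{prop:centerispp} and \Cref{prop:PQtoQ} are stated without reference to rankings, so they cannot be applied to $\kpower$ directly. The role of \Cref{prop:gettingTSR} is precisely to flatten the ranking to $0$, after which the remaining tools can be used in the $0$-ranked world while still yielding an rpp-definition from $\arr_{01}$ and $O$. Everything else -- subdirectness, linkedness in the binary case, and properness -- travels cleanly through the chain of lemmas and amounts to routine verification.
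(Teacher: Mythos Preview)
Your proposal is correct and follows essentially the same approach as the paper's proof: case-split on whether $\kpower$ is central, and if not, chain \Cref{prop:gettingTSR}, then \Cref{prop:Rosenberg}, then \Cref{prop:centerispp} or \Cref{prop:PQtoQ}. Your added remarks on the ranking bookkeeping and the explicit verification of properness of the final relation are accurate and make explicit details that the paper leaves implicit.
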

    
    \begin{proof}
      If $\kpower$ is central, we've already accomplished our goal.
      Otherwise we apply~\Cref{prop:gettingTSR} to obtain a TSR relation,
      then~\Cref{prop:Rosenberg} to obtain a $0$-ranked TSR relation which is P-central or PQ-central.
      We finish by applying~\Cref{prop:centerispp} or~\Cref{prop:PQtoQ}~%
      (depending on the case we are in)
      to end up with the required central or 0-ranked Q-central relation.
    \end{proof}

  \subsection{Constructing an OR relation}
    The plan for this subsection is to either rpp-define 0-ranked $\OR(T,T)$ for a proper TSR $T$, or $B$ as in \cref{main:red} of~\Cref{thm:main_finite}. 
    The proof is split in two parts depending on the relation obtained from the first step. 
    The case of 0-ranked Q-central relation is dealt with in the following lemma (where, again, we need not care about rankings).

      \begin{lemma}[Second step, Q-central case]\label{lem:orfromQ}
        A proper Q-central relation $R$ invariant under $\group G$ and $O$ pp-define 
        $\OR(T,T)$ for a proper TSR $T$.
      \end{lemma}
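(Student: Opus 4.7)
The natural candidate for the proper TSR $T$ is the central equivalence $\alpha$ of the Q-central relation $R$; by definition of Q-centrality (and because $R$ is proper) $\alpha$ is a proper equivalence relation on $A$, and equivalence relations are TSR as binary relations (symmetric and reflexive), so $\alpha$ qualifies. The plan is to first pp-define $\alpha$ itself from $R$ and $O$, and then to exploit the ternary ``OR-like'' shape of the Q-central $R$ to pp-define $\OR(\alpha,\alpha)$ (or $\OR(T,T)$ for a closely related proper TSR $T$).

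The first step is easy: any tuple $(y_1,\dots,y_{|A|})\in O$ is the image of $(1,2,\dots,|A|)$ under some $g\in\group G$, so its entries collectively enumerate $A$. Consequently the pp-formula
\begin{equation*}
\exists y_1,\dots,y_{|A|}\ O(y_1,\dots,y_{|A|})\wedge\bigwedge_{i=1}^{|A|}R(x,y,y_i)
\end{equation*}
asserts that $R(x,y,z)$ holds for every $z\in A$, which is exactly the definition of $\alpha(x,y)$. Note that this exact same trick has just been used in the proofs of \Cref{prop:centerispp} and \Cref{prop:PQtoQ}, so the move is entirely in line with the surrounding development.

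The second step is the heart of the lemma. The key structural observation is that, by total symmetry and the central property, $R$ contains every triple $(a,b,c)$ for which at least one of the pairs $(a,b),(a,c),(b,c)$ lies in $\alpha$; in particular, $R$ already encodes a \emph{ternary} $\OR$ of $\alpha$. The plan is to reduce this ternary $\OR$ to a binary one by using $O$ (together with the $\group G$-invariance of $R$, which ensures $\alpha$ is a union of $\group G$-orbits on pairs) to restrict one of the coordinates of $R$ to behave as a parameter-like witness. Concretely, one looks at formulas of the form
\begin{equation*}
\exists z\ R(x_1,y_1,z)\wedge R(x_2,y_2,z)\wedge\psi(z),
\end{equation*}
where $\psi(z)$ is a pp-formula built from $O$ and $\alpha$ that pins $z$ down modulo $\group G$ in such a way that $R(x,y,z)$ reports on $\alpha(x,y)$ alone, modulo a controlled residue that will constitute $T$.

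The main obstacle is precisely that disjunctions are not directly pp-expressible: any $\OR$ in the output must be imported from an atomic relation that already exhibits it, and here the only such source is the ternary $R$. Consequently the formula $\psi$ above must be calibrated with care so that the resulting relation is exactly $\OR(T,T)$ for a \emph{proper} TSR $T$, neither degenerating to a conjunction (too restrictive, collapsing $T$ to $A^2$ or emptying the relation) nor being too permissive (yielding the full $4$-ary relation); the naive candidate with $\psi\equiv\mathrm{true}$ already fails in the latter direction, as is easy to see on small examples. Resolving this is where the $\group G$-invariance of $R$ together with the enumerating tuple $O$ must be carefully leveraged, in the spirit of Pinsker's treatment of Rosenberg's classification of maximal clones used elsewhere in this section.
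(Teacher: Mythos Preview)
Your proposal correctly identifies the first, easy step (pp-defining $\alpha$ from $R$ and $O$), but it never produces the crucial construction, and the strategy you sketch for the second step is off target.

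The suggested formula shape $\exists z\ R(x_1,y_1,z)\wedge R(x_2,y_2,z)\wedge\psi(z)$ cannot manufacture an $\OR$. For any fixed witness $z$ this defines a \emph{conjunction} of two slices $R_z(x_1,y_1)\wedge R_z(x_2,y_2)$; quantifying $z$ existentially only takes a union of such conjunctions, which is still far from $\OR(T,T)$ for any proper $T$. The ternary disjunction encoded in $R$ does not survive fixing one coordinate --- doing so simply selects a single slice. Your closing sentence acknowledges that the ``calibration'' is the whole difficulty, but no calibration is offered.

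The paper's argument is entirely different. It does \emph{not} take $T=\alpha$. Instead it chooses a pair $(a,b)\notin\alpha$ whose slice $I=\{c:R(a,b,c)\}$ is inclusion-maximal among all such pairs; then it picks the least $k$ such that some $k$-element subset of $A$ is not contained in any $\group G$-translate of $I$, and sets $T=\{(c_1,\dots,c_k):\{c_1,\dots,c_k\}\subseteq g(I)\text{ for some }g\in\group G\}$, a $k$-ary TSR relation. The $\OR$ is created not by a single $R$-atom but by a chain $y_0,y_1,y_2$: one forces (via $O$) that $(y_0,y_2)\notin\alpha$, so at least one of $(y_{j-1},y_j)$ lies outside $\alpha$, and for that $j$ the \emph{maximality} of $I$ drives all the free variables $x_1^j,\dots,x_k^j$ into a $\group G$-translate of $I$, i.e.\ into $T$. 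Both ingredients --- the maximal-slice choice and the two-step chain that converts ``at least one pair is outside $\alpha$'' into the binary disjunction --- are absent from your outline, and without them there is no proof.
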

      
      \begin{proof} 
        We assume that $A=\{1,\dotsc,n\}$, let $R$ and $O$ are as in the statement.
        Denote the central equivalence of $R$ by $\alpha$~%
        and note that by~\Cref{prop:PQtoQ} $\alpha$ is pp-definable in $R$ and $O$.

        We choose $a,b\in A$ such that
        \begin{itemize}
          \item $(a,b)\notin \alpha$ and
          \item the set $I=\{i:(a,b,i)\in R\}$ is maximal~(under inclusion) among similar sets defined for other $(a,b)\notin\alpha$.
        \end{itemize}
        We assume that $a=1$ and $b=2$~%
        (this can be obtained by renaming the elements of $A$).

        Next, we find a minimal number $k$ such that
        \begin{itemize}
          \item every $(k-1)$-element subset of $A$ is included in $g(I)$ for some $g \in \group G$ and
          \item some $k$-element subset of $A$ is \emph{not included} in $g(I)$ for any $g \in \group G$.
        \end{itemize}
        We can always find such a $k$ and it will satisfy $1\leq k \leq |I|+1\leq n$;
        this is a consequence of $(a,b)\notin\alpha$.

        Our $T$ will consists of tuples $(c_1,\dotsc,c_k)$ such that
        $\{c_1,\dotsc,c_k\}\subseteq g(I)$ for some $g$ from $\group{G}$. %
        Note that  $T$ is totally symmetric by definition and totally reflexive by the choice of $k$.
        The choice of $k$ ensures, at the same time, that $T\neq A^{k}$.

        Our formula will have free variables $x_1^1,\dotsc,x_k^1$
        and $x_1^2,\dotsc,x_k^2$.
        The formula is
        \begin{align*}
          &\exists y_0,y_1,y_2,z_1,\dotsc,z_n\\  
          &y_0 = z_1 \wedge y_2 = z_2 \wedge
          O(z_1,\dotsc,z_n) \wedge \\
          &\bigwedge_{j=1}^2 \biggl(\bigwedge_{i=1}^k R(y_{j-1},y_j,x_i^j) \wedge 
          \bigwedge_{i\in I} R(y_{j-1},y_j,z_i)\biggr) \enspace.
        \end{align*}

        It remains to verify that the formula works.
        First take $(a_1,\dotsc,a_{k},b_1,\dotsc,b_k)\in\OR(T,T)$.
        Say that $\{a_{1},\dotsc,a_k\} \subseteq g(I)$ for some $g \in \group G$~%
        (the case $\{b_1,\dotsc,b_k\} \subseteq g(I)$ is symmetric).
        We choose a witnessing evaluation of quantified variables as follows: $y_0 \mapsto g(1)$, $y_1,y_2 \mapsto g(2)$,
        and $z_i$ to $g(i)$.

        The first three (simple) conjuncts hold by construction.
        Let's focus on the complex conjunct. 
        For $j=2$, the first two arguments of $R$ are identical, thus in $\alpha$, which is central and therefore $x_i^2$'s can be arbitrary.
        For $j=1$, we recall that $R(1,2,i)$ for every $i\in I$ and, since $g$ is an automorphism, we get  $R(g(1),g(2),g(i))$ as required.

        For the opposite direction, let $\val$ be an evaluation of variables $x_i^j, y_j, z_i$ making the quantifier-free part true. 
        The third conjunct ensures that $g: i\mapsto \val(z_i)$ is from $\group G$. We can therefore
        define a new valid evaluation of variables by $\val'(x) = g^{-1}(\val(x))$.
        The new evaluation satisfies $\val'(z_i)=i$, $\val'(y_0)=1$ and $\val'(y_2)=2$.
        If, for $j=1$ or $j=2$, $\{\val'(x_1^j),\dotsc,\val'(x_k^j)\}\subseteq I$ we achieved our goal.

        Let $j$ be such that $(\val'(y_{j-1}),\val'(y_j))\notin\alpha$ and note that 
        $R(\val'(y_{j-1}),\val'(y_j),\val'(z_i)=i)$ holds for every $i\in I$.
        But then the existence of $i$ with $\val'(x_j^i)\notin I$ would contradict the maximality of $I$~%
        (as the formula ensures that $R(\val'(y_{j-1}),\val'(y_j),\val'(x^j_i))$.
        Therefore no such $i$ exists and the proof is concluded. 
      \end{proof}

    The case of ranked central relation is more complex. 
       We start with a ranked central relation, and use it, together with $\rightarrow_{01}$, to construct
      another a subset $B$ with $\relstr{A}|_B$ smooth; such a construction appears in, e.g., \cite{Cyclic}. A proof is provided in~\Cref{apps:step_two_b}. 

      \begin{restatable}{lemma}{LemmaWalking}\label{lem:walking}
      A nonempty $C \varsubsetneq A$ and $\arr_{01}$ rpp-define  a nonempty  $B \varsubsetneq A$ such that $\relstr{A}|_B$ is smooth.      
      \end{restatable}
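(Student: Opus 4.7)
The plan is to use an iterated-image construction, adapting the approach of~\cite{Cyclic}. Given nonempty $C \varsubsetneq A$, I would consider the sequence of forward images $C_n := n\plusarr{C}$ for $n \geq 0$. Since $A$ is finite and $\plusarr{}$ preserves nonemptiness on the smooth digraph $\relstr{A}$, this sequence of subsets of $A$ is eventually periodic: there exist $m \geq 0$ and $p \geq 1$ with $C_{n+p} = C_n$ for all $n \geq m$. The sets $C_m, C_{m+1}, \ldots, C_{m+p-1}$ then form a cyclic orbit under $\plusarr{}$, and my candidate is $B := \bigcup_{i=m}^{m+p-1} C_i$.

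Smoothness of $\relstr{A}|_B$ follows directly from the cyclic structure: each $b \in C_i$ has an in-neighbor in $C_{i-1}$ (interpreting indices cyclically in $\{m, \ldots, m+p-1\}$, so that the predecessor of $C_m$ is $C_{m+p-1}$, using the identity $C_m = \plusarr{C_{m+p-1}}$) and an out-neighbor in $C_{i+1}$, and both lie in $B$. Ensuring $B \varsubsetneq A$ requires additional care, since in a strongly connected digraph the forward orbit of a nonempty $C$ may well cover all of $A$; here I would rely on the standing hypothesis $\kpower \neq A^2$ together with the properness of $C$, possibly modifying the construction by intersecting with an analogous backward-iterate construction based on $\minusarr{C}$, or by extracting a proper sub-orbit within the cycle, to avoid collapse.

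The main technical obstacle is the 0-ranked rpp-definability of $B$: the natural definition of $C_n$ via a forward walk of length $n$ from $C$ places the free variable at rank $n$, not $0$. To obtain a 0-ranked description I would exploit the periodicity $C_{m+p} = C_m$ by encoding membership $x \in B$ through a round-trip walk of algebraic length zero that starts at $x$ (rank~$0$), proceeds upward through the cyclic orbit via forward steps, and returns downward to $x$ via backward steps, with $C$-constraints placed at strategic rank-0 visits to force the walk to remain inside the orbit. Since $m$ and $p$ are bounded by $|A|$, the resulting walk has bounded length, yielding a finite rpp-formula over $\arr_{01}$ with parameter $C$. The delicate point is arranging the walk pattern so that the rank assignments forced by the $C$-visits align consistently with the cyclic structure of the $\plusarr{}$-orbit, and I expect this synchronization, together with the properness argument, to constitute the bulk of the technical work.
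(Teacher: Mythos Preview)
There are two genuine gaps. First, your candidate $B = \bigcup_{i=m}^{m+p-1} C_i$ is a \emph{union}, and primitive-positive definitions do not express disjunctions. Your round-trip walk does not encode this union: a walk of a fixed shape anchored in $C$ defines a single image $C_n$, not several simultaneously, and placing $C$-constraints at several rank-$0$ positions of a longer walk yields an \emph{intersection} of such images rather than a union. (Incidentally, your rank concern is a red herring: since shifts of ranked relations are always allowed, a unary relation defined with its free variable at rank $n$ is automatically available at rank $0$ as well.) Second, even granting definability, the union may well be all of $A$, as you acknowledge; your proposed fixes --- intersecting with backward iterates, appealing to $\kpower \neq A^2$ --- are not carried out, and it is not clear they can be made compatible with both smoothness and definability.

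The paper's argument is organized differently and uses the standing hypothesis that $\relstr{A}$ is $k$-linked, which your proposal does not invoke. One first observes that the \emph{smooth part} of any $B' \subseteq A$ --- the largest subset on which the induced digraph is smooth --- is itself rpp-definable: $x$ lies in it iff there exist forward and backward walks of length $|A|$ within $B'$ through $x$. This reduces the task to exhibiting a proper $B'$ with nonempty smooth part. For that, one walks along the variables of the $k$-linkness formula $\varphi$ (a fixed alternating sequence of $\plusarr{}$ and $\minusarr{}$ steps) starting from $C$: since $\relstr{A}$ is $k$-linked, $\varphi$ defines $A^2$, so the sets $C_0 = C, C_1, \ldots, C_n$ obtained along the way end at $A$. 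The last proper $C_i$ then satisfies $\plusarr{C_i} = A$ or $\minusarr{C_i} = A$; in either case every element of $C_i$ has an in-neighbour (respectively out-neighbour) in $C_i$, forcing a directed cycle inside $\relstr{A}|_{C_i}$ and hence a nonempty smooth part.
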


      We remark that for any rpp-formula $\phi$ in a unary $C$ and $\arr_{01}$, the formula obtained by removing all conjuncts $C(x)$ defines $A$. This is because $\arr$ is subdirect: witnesses for quantified variables can be obtained from infinite walks to and from a given vertex (evaluate all variables of rank $r$ to the $r$th vertex in this bi-infinite walk).

      \begin{restatable}[Second step, central case]{lemma}{StepTwoB}\label{lem:orfromcenter}
      Let $R$ be a ranked central relation invariant under $\group{H}$. Then $R$, $O$, and $\arr_{01}$ rpp-define
      \begin{itemize}
        \item $\emptyset \neq B\varsubsetneq A$ such that $\relstr{A}|_B$ is smooth  and $k$-linked, or  
        \item 0-ranked $\OR(T,T)$ for some proper TSR $T$.
    \end{itemize}
      \end{restatable}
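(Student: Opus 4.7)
My plan is to proceed in two stages: first shrink $\relstr{A}$ using the center of $R$ and \Cref{lem:walking}, then branch on whether the resulting induced subdigraph is $k$-linked. To shrink, I would extract the center $C$ of $R$ via the standard formula
\[
  C(x)\;\equiv\;\exists y_1,\ldots,y_{|A|}\; O(y_1,\ldots,y_{|A|}) \wedge \bigwedge_{i=1}^{|A|} R(x,y_i),
\]
after shifting the ranking of $R$ so that its first coordinate has rank $0$. Since $R$ is proper and central, $C$ is a nonempty proper $0$-ranked unary relation, and applying \Cref{lem:walking} yields a nonempty $B \varsubsetneq A$ with $\relstr{A}|_B$ smooth. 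If $\relstr{A}|_B$ is $k$-linked, the first conclusion is achieved. Otherwise, the $k$-linkness equivalence $\alpha$ on $B$ is a proper $0$-ranked equivalence and, by the bounded composition formula recalled in \Cref{subsec:prelim_digraphs}, it is rpp-definable from $\arr_{01}$ and $B$.

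In the non-$k$-linked case, my plan is to produce $\OR(T,T)$ for the proper TSR $T$ obtained by extending $\alpha$ to $A$ with all reflexive pairs outside $B$. The construction would mimic \Cref{lem:orfromQ}: pick $(a,b) \in B^2$ with $(a,b) \notin \alpha$ together with a set $I \subseteq A$ maximal, among all $\group G$-translates, with respect to some compatibility condition relating $(a,b)$ and $R$; then write an rpp-formula with two free ``halves'' $x_1^1,\ldots,x_k^1$ and $x_1^2,\ldots,x_k^2$ connected via relay variables $y_0, y_1, y_2$ that are pinned down to a $\group G$-translate of $\{1,\dots,|A|\}$ by $O$. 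The maximality of $I$, combined with the $\group G$-translation freedom supplied by $O$, should force in any satisfying assignment that one of the two halves lies entirely in a $\group G$-translate of $I$, hence in $T$, yielding the $\OR$.

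The hardest part will be executing this $\OR$-construction cleanly. Unlike in the Q-central case, $\alpha$ does not live inside $R$ as a central equivalence; it is extrinsic, coming from the $k$-link structure of $\relstr{A}|_B$. This means the maximality argument cannot be run directly on tuples of $R$ and must be adapted, likely by first rpp-defining an intermediate binary relation that pairs elements in a $\group G$-invariant way with $\alpha$-classes on $B$, and then feeding this into the relay-chain construction. As a fallback, one could iterate the entire procedure — applying \Cref{lem:walking} and \Cref{prop:centerispp} inside $\relstr{A}|_B$ (after checking that the relevant invariants still hold), shrinking $B$ further until either $k$-linkedness emerges or $\alpha$ degenerates to a form on which an $\OR$-producing formula can be written verbatim in the style of \Cref{lem:orfromQ}. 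Throughout, careful tracking of the $\group H$-invariance of the produced relations, as well as of their rankings, will be essential so that the resulting formulas remain rpp-definitions respecting the ranked automorphism structure.
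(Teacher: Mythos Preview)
Your skeleton is right — extract the center $C$, apply \Cref{lem:walking} to get $B$, branch on $k$-linkedness — and you correctly identify that the hard part is bridging the extrinsic equivalence $\alpha$ (coming from $\arr_{01}$) to the central relation $R$. But the bridge you sketch is the gap. Mimicking \Cref{lem:orfromQ} requires a \emph{ternary} relation in which the first two coordinates carry the $\alpha$-information and the third is a witness that can be tested against $C$; your binary $R$ alone cannot supply this, and your ``intermediate binary relation'' is the wrong arity. The paper manufactures the needed ternary $S$ by a conjunct-releasing trick: write the $k$-linkness formula $\varphi$ on all of $A$ (it defines $A^2$), add $B(x)$ on every variable to get $\varphi'$ defining $\alpha$, then replace each $B(x)$ on a quantified variable by its defining formula $\psi(x)$ (which ultimately uses conjuncts $C(\cdot)$). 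Removing all resulting $C(\cdot)$ conjuncts collapses the formula to $B^2$; removing none gives $\alpha$. So some single conjunct $C(x)$ is critical, and freeing that $x$ yields a ternary $S(y,y',x)$ with $\exists x\,S(y,y',x)\wedge C(x)$ equal to $\alpha$ but $\exists x\,S(y,y',x)$ strictly larger. Now one picks $(a,b)\notin\alpha$ and $d$ with $S(a,b,d)$ so that $I=d+R$ is maximal, and the relay chain in the $\OR$-formula uses $S(y_{j-1},y_j,X_j)\wedge R(X_j,x_i^j)$ in place of the single ternary $R$ from \Cref{lem:orfromQ}. The TSR $T$ is then the same $\group G$-translate construction as there, not your extended $\alpha$.

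Your fallback of iterating inside $\relstr A|_B$ is not a safe escape: $R$ restricted to $B$ need not be subdirect, let alone central, so neither \Cref{prop:centerispp} nor \Cref{lem:walking} is available for a second round without further work. Tracking rankings also needs more care than ``shift so the first coordinate of $R$ has rank $0$'': in the actual formula two copies of $O$ appear at different ranks (one pinning $y_0,y_2$ at rank $0$, one pinning the $z_i'$ at the rank of the freed variable), and the invariance you need is of $R$ and $S$ under $\group H$, not just $\group G$.
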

      \begin{proof}[Proof beginning] 
        Let $C$ be the center of $R$ and $\psi$ be the rpp-formula defining $B$ from \Cref{lem:walking}, i.e., $B$ is a proper nonempty subset of $A$ and $\relstr{A}|_B$ is smooth.
        We assume that $\relstr{A}|_B$ is not $k$-linked; let $\alpha$ be the $k$-linkness equivalence relation on $B$. Our aim is to rpp-define $\OR(T,T)$ for a proper TSR $T$. 
        
        Take a 0-ranked rpp-formula $\varphi$ with two free variables defining the $k$-linkness relation~%
        (with $|A|$-many links). %
        Since $\relstr{A}$ is linked, it defines $A^2$. 
        Let $\varphi'$  be obtained from $\varphi$ by adding a conjunct $B(x)$ for every variable~%
        (i.e., both the quantified and the free variables).
        Now $\varphi'$ means ``being $k$-linked in $\rightarrow$ restricted to $B$'', so it is a $0$-ranked pp-definition of $\alpha$.

        Next, we define $\varphi''$ by replacing in $\varphi'$  each conjunct $B(x)$ on a quantified variable $x$ by the formula $\psi(x)$. Clearly, the formula $\varphi''$ still defines  $\alpha$. Also note that if we remove all the conjuncts $C(x)$ (they all come from quantified variables), then all the restrictions on the original quantified variables of $\varphi'$ are dropped (see the remark before the lemma), so the obtained formula defines $B^2$. 
        
        In $\varphi''$ we remove, one by one, the conjuncts $C(x)$. %
        At some point we arrive to a formula with a selected, quantified variable $x$ that defines a subset of $B^2$ strictly larger than $\alpha$, but if we added back the conjunct $C(x)$, it would define $\alpha$. By making $x$ free, we get an rpp-definition (using $B$, $C$, $\arr_{01}$) of a ternary relation $S$ such that   

        \begin{itemize}
          \item $\exists x\ S(y,y',x)$ is a subset of $B^2$ %
          larger than $\alpha$, and
          \item $\exists x\ S(y,y',x)\wedge C(x)$ is $\alpha$.
        \end{itemize}
        With this ternary relation in hand,
        the remaining reasoning is somewhat similar to  what was done in \Cref{lem:orfromQ}. 
        We provide the full proof in~\Cref{apps:step_two_b}
    \end{proof}

    \subsection{Third step}

      \noindent A plan for this section is as follows.
      We start with $0$-ranked $\OR(T,T)$ for a TSR relation $T$,
      next we improve $T$ to a P-central or PQ-central TSR,
      and then get $\OR(C,C)$ for $C\subseteq A$ or $\OR(\alpha,\alpha)$ for an equivalence on $A$.
      The second case is already~\cref{main:proj} of~\Cref{thm:main_finite}.
      In the first case we need to still work with $\rightarrow_{01}$ to end up in~\cref{main:red} or in~\cref{main:proj}. %
      Until the last lemma, all the relations and formulas  are 0-ranked.

      The first lemma is proved in~\Cref{apps:step_three}. %
      
      \begin{restatable}{lemma}{GettingPorPQor}\label{prop:gettingPorPQor}
        Let $T$ be a proper TSR relation.
        The relation $\OR(T,T)$ pp-defines $\OR(S,S)$ with a nonempty proper $S$ such that 
        $S$ is unary, or $S$ is an equivalence on $A$, or $S$ is TSR and P-central,
        or $S$ is TSR and PQ-central.
      \end{restatable}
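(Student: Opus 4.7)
The plan is to reduce the statement to Rosenberg's classification (\Cref{prop:Rosenberg}) via a general lifting principle for $\OR$.

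The key observation to establish is that whenever a relation $S$ is pp-definable from $T$, the relation $\OR(S,S)$ is pp-definable from $\OR(T,T)$. Given a pp-definition
\[
S(\bar y) \equiv \exists \bar z \ \bigwedge_{i=1}^N T(\bar t_i(\bar y, \bar z))
\]
(with equality atoms eliminated by variable substitution), the candidate pp-definition is the CNF-style expansion
\[
\exists \bar z, \bar z'\ \bigwedge_{i,j=1}^N \OR(T,T)\bigl(\bar t_i(\bar y, \bar z),\, \bar t_j(\bar y', \bar z')\bigr).
\]
The forward inclusion is immediate: if $S(\bar y)$ holds, pick witnesses $\bar z$ for it and arbitrary $\bar z'$, so every conjunct is satisfied through its left disjunct (symmetrically for $S(\bar y')$). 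For the reverse inclusion, suppose witnesses $\bar z,\bar z'$ exist but $S(\bar y)$ fails. Then some $T(\bar t_{i^*}(\bar y,\bar z))$ is false, and the atoms indexed by all pairs $(i^*, j)$ force $T(\bar t_j(\bar y',\bar z'))$ to hold for every $j$, yielding $S(\bar y')$.

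Given this lifting, the lemma reduces to pp-defining from $T$ some nonempty proper $S$ of one of the four prescribed forms. I would split on the arity of $T$. If $T$ is unary, set $S:=T$. If $T$ is binary but not linked, then $T$ is reflexive hence smooth, so the linkness equivalence of $T$ is pp-definable from $T$ via the iterated-link formula recalled in \Cref{subsec:prelim_digraphs}, and is a proper equivalence on $A$; take $S$ to be it. In the remaining cases---$T$ of arity at least $3$, or $T$ binary and linked---\Cref{prop:Rosenberg} directly yields a proper TSR relation $S$ that is P-central or PQ-central, pp-definable from $T$.

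The only delicate point to verify is the OR-lifting observation itself, but this rests solely on the distributive identity $(A_1\wedge\dots\wedge A_N)\vee(B_1\wedge\dots\wedge B_N)\equiv\bigwedge_{i,j}(A_i\vee B_j)$; no special property of TSR relations is used in the lifting step, so it is mechanical once stated correctly.
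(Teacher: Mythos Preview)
Your approach matches the paper's: both reduce to \Cref{prop:Rosenberg} via an $\OR$-lifting observation, and your case split on the arity of $T$ is identical. The paper factors the lifting through a one-sided lemma (\Cref{prop:changeOR}: if $R'$ has an \emph{equality-free} pp-definition from $R$, then $\OR(R,S)$ pp-defines $\OR(R',S)$) applied twice, whereas you do both sides at once via the CNF expansion; this is a cosmetic difference.

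One caveat: your parenthetical ``with equality atoms eliminated by variable substitution'' is too casual. Substitution removes an equality atom when at least one side is bound, but an atom $y_i=y_j$ between two distinct \emph{free} variables cannot be eliminated without changing the relation being defined, so not every pp-definition can be brought to your displayed equality-free form. The paper is explicit that the lifting requires equality-free definitions and verifies this in each case. Your argument still goes through because the three definitions you actually invoke---the identity $S=T$, the transitive closure of a reflexive symmetric $T$, and the iterated construction inside the proof of \Cref{prop:Rosenberg}---are all equality-free by inspection; you should say so rather than appeal to a general elimination that does not hold.
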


      \begin{lemma}\label{prop:noP}
        Let $R$ be  %
        PQ-central with P-central equivalence $\alpha$.
        Then $\OR(R,R)$ and $O$ pp-define $\OR(\alpha,\alpha)$.

        Let $R$ be %
        P-central with P-center $C$. %
        Then $\OR(R,R)$ and $O$ pp-define $\OR(C,C)$.
      \end{lemma}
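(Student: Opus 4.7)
The plan is to mimic the constructions from \Cref{prop:centerispp,prop:PQtoQ} in the disjunctive setting: $O$ simulates a universal quantifier over $A$, and we simply place $\OR(R,R)$ atoms in parallel for the two disjuncts. For the PQ-central case, assuming $A = \{1, \dots, |A|\}$ and writing $n\geq 3$ for the arity of $R$, I would take
\begin{align*}
  \OR(\alpha, \alpha)&(x_1, x_2, x'_1, x'_2) \equiv \exists y_1, \dots, y_{|A|}\ O(y_1, \dots, y_{|A|}) \\
  &\wedge \bigwedge_{\vec i, \vec j \in \{1,\dots,|A|\}^{n-2}} \OR(R, R)(x_1, x_2, y_{\vec i}, x'_1, x'_2, y_{\vec j}),
\end{align*}
with the shorthand $y_{\vec i} = y_{i_3}, \dots, y_{i_n}$. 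The P-central case has exactly the same shape, with index tuples of length $n-1$ and $C$ in place of $\alpha$.

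Soundness is immediate in both cases: if $(x_1,x_2) \in \alpha$ (respectively $x_1 \in C$), every left $\OR(R,R)$-disjunct holds for any choice of $y$'s by the very definition of $\alpha$ (respectively $C$), and symmetrically on the right. For completeness, any witness of $O$ has the form $(g(1), \dots, g(|A|))$ for some $g \in \group G$, and $g$ is a permutation of $A$; after the substitutions $\vec c = (g(i_3),\dots,g(i_n))$ and $\vec c' = (g(j_3),\dots,g(j_n))$, the conjunction becomes
$$\forall \vec c, \vec c' \in A^{n-2}\colon R(x_1, x_2, \vec c)\ \text{or}\ R(x'_1, x'_2, \vec c').$$
The elementary equivalence $\forall\vec c\,\forall\vec c'\,(P(\vec c)\vee Q(\vec c'))\Leftrightarrow(\forall\vec c\,P(\vec c))\vee(\forall\vec c'\,Q(\vec c'))$ then yields $(x_1,x_2)\in\alpha$ or $(x'_1,x'_2)\in\alpha$; the P-central case is identical with tuples in $A^{n-1}$ and $C$ in place of $\alpha$.

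I do not expect any real obstacle here. The construction is a direct disjunctive adaptation of the arity-reducing formulas used for \Cref{prop:centerispp,prop:PQtoQ}, and the only substantive ingredient is the $\forall\forall\vee$ trick, which works because the two universally quantified blocks of variables are fully independent. Notably, $R$ need not be $\group G$-invariant: the sole role of $O$ is to provide, in any satisfying assignment, a bijection of $A$ that reindexes the $y$'s so that the conjunction effectively ranges over all of $A^{n-2}$ (or $A^{n-1}$ in the P-central case).
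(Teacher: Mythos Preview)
Your proof is correct and essentially the same as the paper's. The only difference is cosmetic: the paper reduces one side at a time --- first defining $\OR(\alpha,R)$ from $\OR(R,R)$ and $O$, then iterating on the other side to obtain $\OR(\alpha,\alpha)$ --- whereas you handle both sides in one shot with a shared $O$-witness and the $\forall\vec c\,\forall\vec c'\,(P(\vec c)\vee Q(\vec c'))\Leftrightarrow(\forall\vec c\,P(\vec c))\vee(\forall\vec c'\,Q(\vec c'))$ observation that the two-step route leaves implicit.
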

      \begin{proof}
        We deal with the PQ-central case only, 
        the proof for the P-central relations is analogous.
        Let $n>2$ be the arity of $R$. We define
        \begin{multline*}
          \OR(\alpha,R)(x_1,x_2,y_1,\dotsc,y_n) \equiv \\
          \exists z_1,\dotsc,z_{|A|}\ 
          O(z_1,\dotsc,z_{|A|})\ \wedge \\
          \bigwedge_{f:\{3,\dotsc,n\}\rightarrow |A|} \OR(R,R)(x_1,x_2, z_{f(3)},\dotsc,z_{f(n)},y_1,\dotsc,y_n)
        \end{multline*}
        We repeat the construction
        with $\OR(R,\alpha)$ in place of $\OR(R,R)$ and appropriately permuted coordinates, to obtain $\OR(\alpha,\alpha)$.
      \end{proof}

   The final proof, given in~\Cref{apps:step_three},  begins with a similar process to the proof of \Cref{lem:orfromcenter} to obtain a suitable ternary relation, which in combination with $\OR(C,C)$ quite easily gives $\OR(\alpha,\alpha)$.

      \begin{restatable}{lemma}{FinalOR}\label{prop:finalOR}
      Let $C$ be a nonempty proper unary relation. %
      Then 0-ranked $\OR(C,C)$ and $\arr_{01}$ rpp-define
      \begin{itemize}
        \item $\emptyset \neq B\varsubsetneq A$ such that $\relstr{A}|_B$ is smooth  and $k$-linked, or  
        \item 0-ranked $\OR(\alpha,\alpha)$ for some proper equivalence $\alpha$ on $B \subseteq A$.
    \end{itemize}
      \end{restatable}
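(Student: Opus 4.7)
The plan is to mirror the proof of \Cref{lem:orfromcenter}: extract a critical ternary relation from $\OR(C,C)$ and $\arr_{01}$, then combine it with $\OR(C,C)$ itself to produce $\OR(\alpha,\alpha)$. First, observe that the unary $C$ is itself rpp-definable from $\OR(C,C)$ by identifying variables: since both coordinates of $\OR(C,C)$ have rank~$0$, the formula $\OR(C,C)(x,x)$ is a valid $0$-ranked rpp-formula and defines exactly $C$. Applying \Cref{lem:walking} with this $C$ yields a nonempty proper $B \varsubsetneq A$ such that $\relstr{A}|_B$ is smooth, via some rpp-formula $\psi$ defining $B$. If $\relstr{A}|_B$ happens to also be $k$-linked, we are done through the first bullet of the conclusion; otherwise let $\alpha$ denote the $k$-linkness equivalence on $B$, which is then a proper equivalence.

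Second, following \Cref{lem:orfromcenter}, I construct a ternary relation $S(y,y',x)$ with
\[
  \exists x\, S(y,y',x) \supsetneq \alpha \text{ inside } B^2, \qquad
  \exists x\, S(y,y',x) \wedge C(x) = \alpha.
\]
This is achieved by starting from the rpp-formula for $k$-linkness in $\arr_{01}$ (which defines $A^2$ because $\relstr{A}$ is $k$-linked), conjoining $B$ on every variable---via $\psi$ on the quantified ones---so as to restrict $k$-linkness to $\relstr{A}|_B$, and then removing the resulting $C(x)$ conjuncts one at a time until a critical removal first pushes the defined relation strictly above $\alpha$; the corresponding quantified variable, once promoted to free, becomes the third argument of~$S$.

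Third, I combine $S$ with $\OR(C,C)$ through the formula
\[
  \varphi(y_1,y_1',y_2,y_2') \equiv \exists x_1, x_2\; S(y_1,y_1',x_1) \wedge S(y_2,y_2',x_2) \wedge \OR(C,C)(x_1,x_2).
\]
The key idea is that $\OR(C,C)(x_1,x_2)$ forces at least one of $x_1, x_2$ into $C$, and by the critical property of $S$ this in turn forces the corresponding $(y_i,y_i')$ to lie in $\alpha$; conversely, if $(y_1,y_1')\in \alpha$ then we can pick $x_1 \in C$ witnessing $S(y_1,y_1',x_1)$, after which any $x_2$ witnessing $S(y_2,y_2',x_2)$ completes $\varphi$ (since $\OR(C,C)$ is already satisfied through $x_1$). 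The main obstacle is to verify that $\varphi$ defines $\OR(\alpha,\alpha)$ rather than only a strict subrelation: a priori $\exists x\, S(y,y',x)$ is only a proper superset of $\alpha$ in $B^2$ and may be strictly smaller than $B^2$, so a witness $x_2$ need not exist for a pair $(y_2,y_2')$ in $B^2$ outside this projection even when $(y_1,y_1')\in\alpha$. I expect the resolution---parallel to the bookkeeping already required in the proof of \Cref{lem:orfromcenter}---to consist of either refining the choice of critical removal so that the projection of $S$ covers all of $B^2$, or enlarging $\varphi$ with auxiliary witnesses from a further critical step; this is presumably the technical content deferred to~\Cref{apps:step_three}.
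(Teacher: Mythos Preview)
Your setup through the second paragraph is correct and matches the paper: recover $C$ from $\OR(C,C)$ via $\OR(C,C)(x,x)$, apply \Cref{lem:walking} to get $B$, and assume $\relstr A|_B$ is not $k$-linked. The gap you yourself flag in the third paragraph is real, and your formula $\varphi$ does not define $\OR(\alpha,\alpha)$ in general, precisely because the projection $\exists x\, S(y,y',x)$ need not be all of $B^2$. Neither of your speculated fixes is what the paper actually does.

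The paper's resolution changes two things relative to your attempt. First, the stopping criterion for releasing conjuncts $C(x)$ is different: rather than stopping when the defined binary relation first strictly exceeds the $k$-linkness equivalence, one stops when it first becomes \emph{linked}, i.e., not contained in any proper equivalence on $B$. Making the critical variable free then yields $S$ with $\exists X\, S(y,y',X)\wedge C(X)$ contained in some proper equivalence on $B$, while $\exists X\, S(y,y',X)$ is linked. Second, since both of these binary relations are reflexive on $B$ (each contains the $k$-linkness equivalence), there is a single pp-formula in a binary placeholder $Q$ that computes the symmetric--transitive closure of either one. Replacing each atom $Q(y,y')$ in that formula by $S(y,y',X_i)$ with a fresh $X_i$ produces a $(p+2)$-ary $U$ such that $\exists X_1,\dots,X_p\, U(y,y',X_1,\dots,X_p)=B^2$, while adding $\bigwedge_i C(X_i)$ yields a proper equivalence $\alpha$ on $B$ (possibly strictly larger than the $k$-linkness equivalence you denoted by the same letter). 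Finally, since $\OR(C,C)$ pp-defines $\OR(C^p,C^p)$ via $\bigwedge_{i,j}\OR(C,C)(x_i,y_j)$, one sets
\[
\OR(\alpha,\alpha)(y,y',z,z') \equiv \exists \bar X,\bar Y\; \OR(C^p,C^p)(\bar X,\bar Y)\wedge U(y,y',\bar X)\wedge U(z,z',\bar Y),
\]
which is $0$-ranked. The idea you were missing is that a single critical removal cannot in general be arranged to make the projection of $S$ hit all of $B^2$; instead one composes several copies of $S$ through a transitive-closure formula, paying with several auxiliary $X$-variables, and absorbs those via $\OR(C^p,C^p)$ rather than the bare $\OR(C,C)$.
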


This lemma finishes the proof of \Cref{thm:main_finite}. To summarize,
\Cref{lem:first_step} together with \Cref{lem:orfromQ} or \Cref{lem:orfromcenter}  produces, unless we are already done, a 0-ranked $\OR(T,T)$ for some proper TSR $T$. \Cref{prop:gettingPorPQor} improves $T$ and \Cref{prop:noP} further improves it, when necessary, to a nontrivial unary $T$ or $T = \alpha$ which is an equivalence. In the latter case, we are done, and in the former one, \Cref{prop:finalOR} finishes the proof.

\section{Pseudoloops in infinite digraphs} \label{sec:infinite}

\noindent In this section we state and prove~\Cref{thm:one_b_refined}~(below) which is a refined version of \Cref{thm:one_b}.

\begin{restatable}{theorem}{OneBRefined} \label{thm:one_b_refined}
Let $\relstr{A} = (A,\arr)$ be a smooth digraph and $\group G$ a subgroup of $\aut(\relstr A)$. If $\relstr{A}/\group{G}$ is symmetric, non-bipartite, and $\group{G}$ has finitely many 2-orbits, then
\begin{itemize}
\item $\relstr{A}/\group{G}$ has a loop, or
\item $\arr$ and 1-orbits of $\group{G}$ 1-dimensionally pp-interpret with parameters $\OR(\alpha,\alpha)$ for some proper equivalence $\alpha$ on a finite $C$.
\end{itemize}
\end{restatable}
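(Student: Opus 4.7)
The plan is to reduce \Cref{thm:one_b_refined} to \Cref{thm:one_a_refined} applied to the finite quotient $\relstr{A}/\group{G}$, and then to lift the resulting pp-definition to $\relstr{A}$ using the $1$-orbits of $\group{G}$.

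First I would verify that $\relstr{A}/\group{G}$ satisfies the hypotheses of \Cref{thm:one_a_refined}. The finiteness of the number of $2$-orbits of $\group{G}$ implies finiteness of the number of $1$-orbits (diagonal $2$-orbits correspond bijectively to $1$-orbits), so $\relstr{A}/\group{G}$ is a finite digraph; it inherits smoothness from $\relstr{A}$. Since $\relstr{A}/\group{G}$ is symmetric and non-bipartite, some weak component contains an odd cycle, and by labeling edges of this cycle as forward or backward so that the net count equals $1$ we obtain a closed walk of algebraic length $1$ in that component. Restricting to it, either the component contains a loop---yielding the first alternative of the theorem---or we apply \Cref{thm:one_a_refined} with the trivial group acting on $\relstr{A}/\group{G}$. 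Since weak components of the finite $\relstr{A}/\group{G}$ are trivially finite and the chosen component has algebraic length $1$ and no loop, the conclusion is that $\arr/\group{G}$ pp-defines with parameters a relation $\OR(\alpha', \alpha')$ for some proper equivalence $\alpha'$ on a finite $C' \subseteq A/\group{G}$.

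To lift this to $\relstr{A}$, I would use that each parameter $\{[p]\}$ in $\relstr{A}/\group{G}$ corresponds to the $1$-orbit $O_{[p]}$ in $\relstr{A}$, and each atom $\arr/\group{G}([u], [v])$ admits the pp-expression $\exists xy(O_{[u]}(x) \wedge O_{[v]}(y) \wedge x \arr y)$ when the orbits of $u, v$ are fixed. Taking $C \subseteq A$ to be a set of orbit representatives of $C'$ (included as parameters) and $\alpha$ on $C$ to be the pullback of $\alpha'$, a careful translation of the pp-formula defining $\OR(\alpha', \alpha')$ should yield a pp-formula over $\arr$, the $1$-orbits, and parameters which defines $\OR(\alpha, \alpha)$ on $C$, giving the desired $1$-dimensional pp-interpretation.

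The hard part will be the faithfulness of this lift. A direct term-by-term substitution of each edge atom $\arr/\group{G}([u], [v])$ by $\arr(u, v)$ defines a strictly smaller relation than intended, because distinct edge atoms may require incompatible orbit witnesses for the same quantified variable. To overcome this, I would introduce fresh existentially quantified variables for each edge atom of the quotient formula, each constrained to the correct $1$-orbit via the $1$-orbit predicate, rather than reusing variables across atoms. The finiteness of $A/\group{G}$ ensures that this expanded pp-formula coheres and defines exactly $\OR(\alpha, \alpha)$ on $C$, completing the proof.
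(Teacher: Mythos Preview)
There is a genuine gap in the lifting step. A pp-definition in the quotient $\relstr{A}/\group{G}$ cannot in general be pulled back to a pp-definition in $\relstr{A}$ using $\arr$, the $1$-orbits, and parameters, because the quotient map itself is not a pp-interpretation: its kernel $\sim_{\group G}$ (``same $1$-orbit'') is a disjunction $\bigvee_i (O_i(x)\wedge O_i(y))$ and is typically \emph{not} pp-definable. Your proposed fix---introduce a fresh witness for every edge-atom and ``constrain it to the correct $1$-orbit''---does not work, since the ``correct'' orbit of a quantified quotient variable $z$ is determined only by the witnessing assignment and varies with it; encoding this without a disjunction amounts precisely to having $\sim_{\group G}$ available. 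A concrete obstruction is the graph $\relstr{T}_{3,3}$ (two disjoint triangles) with $\group G=\{\id,g\}$ swapping the triangles: here $\relstr{T}_{3,3}/\group{G}\cong K_3$, but one can check that $(\relstr{T}_{3,3};\arr,O_0,O_1,O_2)$ admits idempotent binary polymorphisms that do not preserve $\sim_{\group G}$, so $\sim_{\group G}$ is not pp-definable even with parameters. Note also that your argument uses only finiteness of the number of $1$-orbits; the hypothesis that $\group{G}$ has finitely many $2$-orbits is never invoked, which is a strong sign that something is missing.

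The paper follows a genuinely different route that avoids the quotient map altogether. It first pp-defines, \emph{inside} $\relstr{A}$ and from $\arr$ and the $1$-orbits, a triangle configuration $(P,P_0,P_1,P_2)$ (\Cref{thm:triangle_config}); this is a combinatorial argument on the finite orbit graph $\relstr{A}/\group{G}$, but all definitions are lifted one step at a time so that they remain pp in $\relstr{A}$. It then passes to a quotient of $\relstr{A}|_P$ by a \emph{maximal pp-definable} equivalence contained in $P_0^2\cup P_1^2\cup P_2^2$ (not $\sim_{\group G}$). The crucial step---and the only place where the $2$-orbit finiteness is used---is \Cref{lem:asdf}, which shows that in the resulting digraph every weak component is finite. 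Only then does \Cref{thm:one_a_refined} apply, yielding the desired $\OR(\alpha,\alpha)$.
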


Just like~\Cref{thm:one_a_refined}, 
the last theorem also has a consequence on identities, proved in \Cref{apps:pseudo_conditions}. Given a finite digraph $\relstr{B} = (\{1, 2, \dots, n\}; \{(i_1,j_1), (i_2,j_2), \dots, (i_m,j_m)\})$,
the \emph{$\relstr{B}$-pseudoloop condition} is
$u(s(x_{i_1}, \dots, x_{i_m})) \approx v(s(x_{j_1}, \dots, x_{j_m}))$. E.g., pseudo-Siggers condition is the $K_3$-pseudoloop condition.

\begin{restatable}{theorem}{ThmPseudoConditions}\label{thm:pseudo_conditions}
  Let $\relstr{A}$ be an $\omega$-categorical model-complete core structure. Then the following are equivalent.
\begin{enumerate}
\item $\relstr{A}$ \emph{does not} pp-interpret with parameters all finite structures;
\item $\pol(\relstr{A})$ satisfies some nontrivial pseudoloop condition; 
\item $\pol(\relstr{A})$ satisfies $\relstr{B}$-pseudoloop condition for every symmetric non-bipartite  $\relstr{B}$. 
\end{enumerate} 
\end{restatable}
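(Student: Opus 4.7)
The plan is to prove $(3) \Rightarrow (2) \Rightarrow (1) \Rightarrow (3)$. The first implication is immediate on taking $\relstr{B} = K_3$, which is a symmetric non-bipartite graph and for which the resulting $K_3$-pseudoloop is precisely the pseudo-Siggers condition, which is nontrivial. For $(2) \Rightarrow (1)$ I would invoke the standard correspondence available for $\omega$-categorical model-complete cores between pp-interpretation of every finite structure with parameters and the existence of a uniformly continuous minor-preserving map from $\pol(\relstr{A})$ to the clone of projections: such a map cannot exist when $\pol(\relstr{A})$ satisfies a nontrivial pseudo-minor identity, and every nontrivial $\relstr{B}$-pseudoloop condition (which forces $\relstr{B}$ to be loopless) is such an identity.

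For the main direction $(1) \Rightarrow (3)$, fix a symmetric non-bipartite finite digraph $\relstr{B} = (\{1,\dots,n\}; E)$ with directed edges $(i_1,j_1),\dots,(i_m,j_m)$; if $\relstr{B}$ has a loop then the $\relstr{B}$-pseudoloop condition is trivial and there is nothing to prove, so I may assume $\relstr{B}$ is loopless. I would construct an indicator smooth digraph $\relstr{D}$ whose domain is $A^n$ and whose binary edge relation is pp-definable in $\relstr{A}$ from the combinatorial data of $\relstr{B}$: roughly, an edge of $\relstr{D}$ from $\bar a$ to $\bar b$ records that $(\bar a, \bar b)$ arises as the ``source'' and ``target'' projections along the edges of $\relstr{B}$ of a common coloring $c \in A^n$. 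With $\group{G} = \aut(\relstr{A})$ acting diagonally, $\omega$-categoricity of $\relstr{A}$ yields that $\group{G}$ has finitely many $2$-orbits on $A^n$, and the symmetry and non-bipartiteness of $\relstr{B}$ transfer to $\relstr{D}/\group{G}$. Then \Cref{thm:one_b_refined} applies and gives one of two outcomes. If $\relstr{D}/\group{G}$ has a loop, this loop translates, via the model-complete core property and a compactness/Ramsey argument across all input tuples, into polymorphisms $s, u, v$ of $\relstr{A}$ witnessing the $\relstr{B}$-pseudoloop condition. Otherwise, $\to_{\relstr{D}}$ together with the $\group{G}$-orbits pp-interprets with parameters a relation $\OR(\alpha,\alpha)$ for a proper equivalence $\alpha$ on a finite set $C \subseteq A^n$; since $\to_{\relstr{D}}$ is pp-definable in $\relstr{A}$ by construction, and since $\group{G}$-orbits are pp-definable in $\relstr{A}$ with parameters (a property of $\omega$-categorical model-complete cores), $\relstr{A}$ pp-interprets $\OR(\alpha,\alpha)$ with parameters, and \Cref{prop:OR} applied to the finite core structure obtained from $\relstr{A}|_C$ by expansion with the relevant constants shows that $\relstr{A}$ pp-interprets every finite structure with parameters, contradicting (1).

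The principal obstacle is selecting the indicator digraph $\relstr{D}$ so that $\relstr{D}/\group{G}$ truly satisfies the three hypotheses of \Cref{thm:one_b_refined}: smoothness is straightforward, but ensuring that $\relstr{D}/\group{G}$ inherits non-bipartiteness (rather than having the odd closed walks of $\relstr{B}$ collapse modulo $\aut(\relstr{A})$) requires careful design, and may force a refinement of the indicator construction above. A second, more technical obstacle is the passage from a single loop in $\relstr{D}/\group{G}$ to genuine polymorphisms $s, u, v$: the loop a priori yields only a local, tuple-by-tuple witness, so one must combine an $\omega$-categorical compactness step (using the finiteness of the $(n+m)$-orbits of $\aut(\relstr{A})$) with the model-complete core hypothesis in order to produce the arity-$m$ polymorphism $s$ globally and to extract uniform endomorphisms $u$ and $v$ rather than pointwise-defined automorphisms.
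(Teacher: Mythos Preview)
Your overall strategy for $(1)\Rightarrow(3)$ matches the paper's: construct an indicator digraph, apply \Cref{thm:one_b_refined}, extract the identity via compactness in the loop case, and derive a contradiction from $(1)$ in the $\OR$ case. The gap is in the indicator. As written it is not even internally consistent --- the domain is $A^n$ but the ``source and target projections along the edges'' of a coloring $c\in A^n$ are $m$-tuples --- and under any reading the edge relation does not involve $\pol(\relstr A)$, so a pseudoloop in your $\relstr D/\group G$ only says that some $c\in A^n$ and $\alpha\in\aut(\relstr A)$ satisfy $\alpha(c_{i_k})=c_{j_k}$ for all edges $k$; this is a statement about a single tuple and produces no polymorphism $s$. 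The correct indicator (as in the paper, following~\cite{Topo}) has domain $A^{F^n}$ for a finite $F\subseteq A$, i.e., the set of $n$-ary functions $F^n\to A$: the generating edges are the projection pairs $(\pi_{i_k},\pi_{j_k})$ for each edge of $\relstr B$, and the edge relation is their closure under the componentwise action of $\pol(\relstr A)$. A pseudoloop then yields an $m$-ary polymorphism $s$ with $s(\pi_{i_1},\dots,\pi_{i_m})$ and $s(\pi_{j_1},\dots,\pi_{j_m})$ in the same $\aut(\relstr A)$-orbit on $F^n$, which is exactly the $\relstr B$-pseudoloop condition on $F$; compactness then globalizes, as you anticipated. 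The paper further simplifies by first reducing to odd cycles $C_{2k+1}$ (since the $\relstr C$-condition implies the $\relstr C'$-condition whenever $\relstr C\subseteq\relstr C'$) and obtains smoothness by iterating $S\mapsto S\cap\plusarr{S}\cap\minusarr{S}$ on the indicator before invoking \Cref{thm:one_b_refined}.

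A small correction to $(2)\Rightarrow(1)$: pseudoloop conditions are not minor conditions (they involve the outer unaries $u,v$), so a minor-preserving map is not the right tool; one needs that a nontrivial pseudo-identity obstructs a uniformly continuous clone homomorphism to the projections, which the paper simply takes from~\cite{Topo}.
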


\subsection{Triangle configurations}

\noindent
In this section we prove \Cref{thm:one_b_refined} by reducing it to \Cref{thm:one_a_refined}.
Our proof~%
(details in~\Cref{apps:one_b_refined})
uses a key concept of triangle configuration~%
(defined below) and splits into two parts:
in part one we prove~\Cref{thm:triangle_config}
and in part two we adapt the proof of Proposition 3.8 in~\cite{Topo} to conclude the reasoning.
This section contains part one:
we work under the assumptions of \Cref{thm:one_b_refined}, 
to pp-define a special configuration of three subsets, which we now introduce.

A \emph{triangle configuration} for $\relstr{A}=(A;\arr)$ and $\group G$ is a quadruple $(P,P_0,P_1,P_2)$ of subsets of $A$ such that
\begin{enumerate}
\item[(i)] %
$P = \bigcup P_i$ and $P_0$, $P_1$, and $P_2$ are pairwise disjoint.
\item[(ii)] $\plusarr{P_i}\cap P_i=\emptyset$ for $i=0,1,2$, i.e., $P_i$ are independent.
\item[(iii)] $\plusarr{P_i} \supseteq  P_j$ and $\minusarr{P_i} \supseteq P_j$  for all $i,j=0,1,2$ with $i \neq j$.
\item[(iv)] $\relstr{A}|_P/\group{G}$ is non-bipartite  
\end{enumerate}
We say a set of relations pp-defines a triangle configuration if it pp-defines all the four subsets.
The next section is devoted to a proof of the following Proposition:

\begin{proposition} \label{thm:triangle_config}
	Let $\relstr A = (A; \arr)$ be a smooth digraph and $\group G$ a subgroup of $\aut(\relstr A)$.  
	If $\relstr{A}/\group{G}$ is symmetric, without a loop, and non-bipartite,  and $\group G$ has finitely many 1-orbits, 
	then $\arr$ and $1$-orbits of $\group G$ pp-define, for some $k$, a triangle configuration for $(A;\kpower)$ and $\group G$.
\end{proposition}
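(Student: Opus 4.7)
My plan for proving \Cref{thm:triangle_config} is to construct $P, P_0, P_1, P_2$ via explicit pp-formulas in $\arr$ and the 1-orbits of $\group G$. Any such formula (without parameters) defines a $\group G$-invariant relation, and since $\group G$ acts transitively on each of its 1-orbits, each $P_i$ must be a union of 1-orbits of $\group G$. The problem thus reduces to exhibiting an odd $k$ and three disjoint nonempty unions of orbits $P_0, P_1, P_2$ such that in $(A; \kpower)$ each $P_i$ is independent, every pair $(x,y) \in P_i \times P_j$ with $i \neq j$ is a $\kpower$-edge, and the factor $(A;\kpower)|_{P}/\group G$ is non-bipartite.

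Since $\group G$ has finitely many 1-orbits, the factor $\relstr{A}/\group G$ is a finite symmetric loopless non-bipartite graph, hence contains an odd cycle; let $2\ell + 1$ be the length of a shortest such cycle. For $k = 2\ell - 1$ (the largest odd integer strictly less than $2\ell + 1$), a standard combinatorial analysis of powers of non-bipartite graphs shows that three appropriately chosen orbits $O_0, O_1, O_2$ on this shortest odd cycle form a loopless triangle at the factor level of $(A;\kpower)$: pairwise $\kpower$-adjacent, with no $\kpower$-self-loop on any of them. Some care will be needed to rule out spurious loops arising from additional edges of $\relstr{A}/\group G$ outside the chosen cycle; if they appear, I would either shift to a different odd cycle of the same length or iterate the choice of $k$ until the configuration stabilizes. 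Setting $P_i := O_i$ then automatically delivers conditions (i), (ii), and (iv) (the last witnessed by the closed walk $O_0 \to O_1 \to O_2 \to O_0$).

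The main obstacle is condition (iii). A factor-level $\kpower$-edge between $O_i$ and $O_j$ only asserts that \emph{some} representative pair lies in $\kpower$, whereas (iii) demands that the full product $O_i \times O_j$ be contained in $\kpower$. Since no proper $\group G$-invariant subset of $O_i$ is pp-definable from $\arr$ and the 1-orbits, the set $P_i$ cannot be shrunk to repair this; instead $\kpower$ itself must be enlarged. I would attempt to close this gap by either passing to a further odd power of $\arr$ — progressively absorbing the 2-orbits of $\group G$ inside $O_i \times O_j$ into the composed relation while keeping each $O_i$ loopless — or by adapting the twisting technique of \Cref{thm:one_a_refined}: enlarge $\arr$ by a suitable $g \in \group G$, carry out the construction in the algebraic-length-one twisted structure using the ranked-relation formalism of \Cref{sec:finite}, and pull pp-definability back via \Cref{lem:pp_from_rpp}. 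The delicate point is coordinating the choice of $k$ with the 2-orbit structure of $\group G$, so that saturating the cross-products $O_i \times O_j$ by $\kpower$ does not simultaneously destroy $\kpower$-independence of the $O_i$.
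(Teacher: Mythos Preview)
Your proposal correctly isolates the real difficulty: condition~(iii) demands that for $i\neq j$ \emph{every} element of $P_j$ has a $\kpower$-neighbour in $P_i$, not just that some pair across $O_i\times O_j$ lies in $\kpower$. But neither of your two repair mechanisms closes this gap. Raising $k$ further has no reason to saturate the cross-products $O_i\times O_j$ before it introduces a factor-level self-loop on some $O_i$; indeed there are easy examples (e.g.\ two disjoint triangles linked at the quotient level by a single orbit-edge) where no odd power of $\arr$ will ever make one orbit completely see another while all three remain independent. The twisting/ranked-homomorphism machinery of \Cref{lem:pp_from_rpp} is likewise orthogonal to the problem: it converts rpp-definitions back to pp-definitions, but it does nothing to force every vertex of an orbit to acquire a neighbour in another orbit.

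The paper's route is structurally different from choosing three orbits outright. After passing to $\kpower[(k-2)]$ so that the orbit graph $\OG$ contains a triangle $u_0,u_1,u_2$, it argues by minimal counterexample on $|U|$ and introduces \emph{tree-definable} subsets of $U$ (built from singletons by neighbourhoods $\psarr{\cdot}$ and intersections). Minimality forces every proper tree-definable subset to induce a bipartite subgraph of $\OG$ (\Cref{lem:tree_def_bipartite}), which in turn gives that any two vertices of $\OG$ share a neighbour. Starting from the triple $(\{u_0\},\{u_1\},\{u_2\})$ one then \emph{grows} the sets via the blow-up $U_i\mapsto \psarr{U_{i-1}}\cap\psarr{U_{i+1}}$; this definition is precisely what forces condition~(iii) by construction, while independence and connectivity are preserved using the bipartiteness lemma. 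When blow-up stabilises short of covering $U$, a second, more delicate enlargement (\Cref{lem:INFsecond}, resting on \Cref{cutoff}, which is the one place the argument leaves $\OG$ and goes back to $\relstr A$) finishes the job. The upshot is that the $P_i$ are not single orbits but carefully grown unions of orbits, and condition~(iii) is obtained \emph{by the shape of the defining formula}, not by a power-of-$\arr$ saturation argument.
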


\subsection{Proof of~\Cref{thm:triangle_config}}
\label{subsec:triangle_config}

  \noindent We fix a smooth digraph $\relstr A = (A; \arr)$ and a subgroup $\group G$ of  $\aut(\relstr A)$ 
  satisfying assumptions of~\Cref{thm:triangle_config}. 
  Most of the work takes place on the 1-orbit graph $\OG = \relstr{A}/\group{G}$.
  It is, by the assumptions, indeed a graph, which is finite, non-bipartite and without a loop.   The domain of $\OG$, denoted $U$, is the set of 1-orbits. 
We use $\sarr$ to denote its relation, and vertices $u \sarr v$ are called \emph{adjacent} or \emph{neighbors}.

  The goal is to prove that $\arr$ and  1-orbits of $\group G$ pp-define a triangle configuration for $(A, \kpower)$ and $\group G$ for some $k$. The proof is by contradiction; we assume that $\relstr{A}$ and $\group G$ is a counterexample with the smallest possible $|U|$.
 
  The strategy of the proof is to pp-define a triangle configuration in the orbit graph $\OG$. In order to be able to lift pp-definability back, we need to restrict the allowed pp-definitions as follows.
  
\begin{definition}\label{ppt_alter}
	A subset $S\subseteq U$ is \emph{tree-definable}%
	\footnote{The term stems from the fact that tree-definability is the same as definability with parameters by a tree formula, but this is irrelevant for this paper.}
	if $S$ is contained in the smallest family $\mathcal{S}$ of subsets of $U$ such that
\begin{enumerate}
\item $\mathcal{S}$ contains $U$ and all singletons, and
\item if $B_1,B_2\in \mathcal{S}$, then $\psarr{B_1} \in \mathcal{S}$ and $B_1\cap B_2\in \mathcal{S}$.
\end{enumerate}
\end{definition}

\noindent
	It is easy to prove by induction that  if $S\subseteq U$ is tree-definable, then $\bigcup{S} \subseteq A$ is pp-definable from $\arr$ and 1-orbits of $\group G$. 
  
 As the first step, we replace $\arr$ by $\kpower[(k-2)]$, where $k$ is the length of the shortest odd cycle in $\OG$. This new digraph $\relstr{A}$ is still  a counterexample to the proposition, since the new $\OG$ is  non-bipartite and without a loop. Additionally, it contains a triangle.
 
 As the next step, we derive some consequences of the minimality of our counterexample.

\begin{lemma}\label{lem:tree_def_bipartite}
If $S \varsubsetneq U$ is tree-definable, then $\OG|_S$ is bipartite.
\end{lemma}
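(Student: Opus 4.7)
My plan is to argue by contradiction, exploiting the minimality of $(\relstr A, \group G)$ as a counterexample. Suppose $S \subsetneq U$ is tree-definable but $\OG|_S$ is non-bipartite. I would first normalize $S$ so that every vertex of $S$ has a $\OG$-neighbor inside $S$: iterate the operation $S \mapsto \psarr{S} \cap S$, which stays inside the family $\mathcal S$ of tree-definable subsets (closure under $\psarr$ and intersection) and which preserves the non-bipartiteness of $\OG|_S$ (every odd cycle is automatically contained in $\psarr{S} \cap S$). Finiteness of $U$ ensures stabilization at a tree-definable $S^* \subseteq S$ with $\psarr{S^*} \cap S^* = S^*$.

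Next I would move to $A^* := \bigcup S^*$ and the induced digraph $\relstr A|_{A^*} = (A^*; \arr \cap (A^*)^2)$ equipped with $\group G\restriction_{A^*}$. The normalization step, combined with the transitivity of $\group G$ on each orbit, makes $\relstr A|_{A^*}$ smooth. Its orbit graph under the restricted group is $\OG|_{S^*}$, which inherits symmetry and losslessness from $\OG$, remains non-bipartite by the normalization, and has $|S^*| < |U|$ orbits. By the minimality of $(\relstr A, \group G)$, the pair $(\relstr A|_{A^*}, \group G\restriction_{A^*})$ cannot itself be a counterexample to \Cref{thm:triangle_config}: for some $k$ it admits a triangle configuration $(P, P_0, P_1, P_2)$ pp-definable in $\relstr A|_{A^*}$ expanded by its 1-orbits. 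Since $A^* = \bigcup S^*$ is pp-definable in $\relstr A$ expanded by the 1-orbits of $\group G$ (by induction on the tree-definition of $S^*$), and therefore so is $\arr \cap (A^*)^2$, the same formulas yield $P, P_0, P_1, P_2$ as pp-definable subsets of $A$ using only $\arr$ and 1-orbits.

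The final step is to show that this lifted quadruple is a triangle configuration for $(A; \kpower)$, thereby contradicting the assumption that $(\relstr A, \group G)$ is a counterexample. Conditions (i), (iii), and (iv) carry over because $(\arr \cap (A^*)^2)^{+k} \subseteq \kpower$ and all four sets lie inside $A^*$. The main obstacle, which I expect to require the most care, is the independence condition (ii) $\plusarr{P_i} \cap P_i = \emptyset$: in the restricted structure the $P_i$ are independent with respect to length-$k$ walks staying within $A^*$, yet in the original digraph an $\arr$-walk of length $k$ between two elements of $P_i$ could in principle escape to $A \setminus A^*$ and return. Overcoming this is the technical heart of the argument; I expect it to rely both on the preliminary replacement of $\arr$ by $\kpower[(\ell - 2)]$ made at the start of \Cref{subsec:triangle_config} (where $\ell$ is the shortest odd cycle of $\OG$) and on a further pp-definable tightening of each $P_i$, for example by intersecting it with a suitable iterate of $\psarr{\cdot}$, that removes vertices admitting such an external return walk while preserving the completeness condition (iii).
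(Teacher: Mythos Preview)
Your architecture matches the paper's: assume $\OG|_S$ non-bipartite, pass to $B=\bigcup S$, invoke minimality of $|U|$, and lift the resulting configuration back to contradict the counterexample hypothesis. The paper does this in three sentences and asserts without comment that the restricted structure ``satisfies the assumptions'' and that the configuration obtained ``is a triangle configuration for $\relstr A$''. Your normalization $S\mapsto S\cap\psarr S$ is a clean way to secure smoothness of $\relstr A|_{A^*}$, which the paper tacitly takes for granted; this part of your proposal is a genuine refinement.

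Where your plan fails is the treatment of condition~(ii). The concern you raise is real: minimality hands you a triangle configuration for $(A^*;(\arr\cap(A^*)^2)^{+k'})$, and a length-$k'$ $\arr$-walk in $A$ between two points of $P_i$ might leave $A^*$. But your proposed remedy---pruning each $P_i$ by iterated neighborhoods to eliminate such external return walks---has no mechanism to protect condition~(iii): every vertex you delete from $P_i$ may be the only witness that some element of $P_j$ lies in $\plusarr{P_i}$ or $\minusarr{P_i}$. Nothing controls this tradeoff, and the ``preliminary replacement'' you invoke was performed once at the top level and does not touch the relationship between $(\arr\cap(A^*)^2)^{+k'}$ and $\arr^{+k'}$.

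The correct resolution is not to force the edge relation back to a pure power. The quadruple you obtain already \emph{is} a triangle configuration for $(A;R)$ with $R=(\arr\cap(A^*)^2)^{+k'}$, and this $R$ is pp-definable from $\arr$ and the $1$-orbits of $\group G$; that is what the paper means by ``a triangle configuration for $\relstr A$'', and it is all that the downstream application in \Cref{thm:one_b_refined} ever uses (the edge relation is immediately renamed there). To make the contradiction honest you should run the minimal-counterexample argument against the weaker statement ``$\arr$ and the $1$-orbits pp-define a triangle configuration for $(A;R)$ for some pp-definable binary $R$''; then the lifting is immediate and condition~(ii) poses no obstacle. Insisting on a literal $\arr^{+k}$ is fighting a battle the proof neither needs nor wins.
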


\begin{proof}
	Suppose $\OG|_S$ is not bipartite. Since $S$ is tree-definable, the set $B = \bigcup S \varsubsetneq A$ is pp-definable from $\arr$ and 1-orbits of $\group{G}$, in particular, it is a union of 1-orbits of $\group{G}$. Moreover, the digraph $\relstr{A}|_S$ together with $\group{G}|_S$ satisfies the assumptions of~\Cref{thm:triangle_config} and it has strictly smaller number of orbits. Since our counterexample is minimal, we obtain a pp-definable triangle configuration  for this restricted digraph, which is a triangle configuration for $\relstr{A}$, a contradiction.
\end{proof}

\begin{lemma} \label{lem:common_neighbors}
 Any two vertices of $U$ have a common neighbor. %
\end{lemma}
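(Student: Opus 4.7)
The plan is to exploit the bipartiteness of proper tree-definable subsets (\Cref{lem:tree_def_bipartite}) together with the triangle in $\OG$ that is available after the preliminary replacement of $\arr$ by $\kpower[(k-2)]$. For each $u \in U$, consider the set $T_u := \psarr{\psarr{\{u\}}}$, which is tree-definable by construction. Since $\OG$ is symmetric, $v \in T_u$ holds if and only if $\psarr{\{u\}} \cap \psarr{\{v\}} \neq \emptyset$, that is, if and only if $u$ and $v$ have a common neighbor in $\OG$ (note that $u \in T_u$ by smoothness, since $u$ shares every neighbor of itself with itself). The lemma therefore reduces to the claim that $T_u = U$ for every $u \in U$.

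First I would prove this claim in the special case that $u$ is a vertex of a triangle $abc$ in $\OG$. Fix $w \in \{a,b,c\}$ and suppose towards contradiction that $T_w \subsetneq U$; then by \Cref{lem:tree_def_bipartite}, $\OG|_{T_w}$ is bipartite. However, $\{a,b,c\} \subseteq T_w$: indeed $w \in T_w$ trivially, while the other two vertices lie in $T_w$ because within the triangle any two vertices admit the third as a common neighbor. Hence $\OG|_{T_w}$ contains the triangle $abc$, contradicting its bipartiteness. Therefore $T_w = U$.

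For arbitrary $u, v \in U$, if one of them belongs to $\{a,b,c\}$ the previous step already supplies a common neighbor. Otherwise I again argue by contradiction: suppose $T_u \subsetneq U$, so that $\OG|_{T_u}$ is bipartite by \Cref{lem:tree_def_bipartite}. The first step applied to each of $a, b, c$ shows that each of them has a common neighbor with $u$, which means $a, b, c \in T_u$. The triangle $abc$ therefore sits inside $\OG|_{T_u}$, a contradiction. Consequently $T_u = U$, and in particular $v \in T_u$, yielding the desired common neighbor.

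The main difficulty is the initial choice of tree-definable set to feed to \Cref{lem:tree_def_bipartite}. The doubly-iterated neighborhood $\psarr{\psarr{\{u\}}}$ is the right candidate because its very definition encodes the existence of a common neighbor; forcing it to be a proper subset of $U$ then drags the already-established triangle into a supposedly bipartite set, producing the contradiction. Once this set is identified, both the triangle-vertex case and the general case use the exact same mechanism.
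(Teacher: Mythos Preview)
Your proof is correct and follows the same approach as the paper: both use the tree-definable set $\psarr{\psarr{\{u\}}}$ together with \Cref{lem:tree_def_bipartite}, first for a vertex lying in the fixed triangle and then for an arbitrary vertex. Your second step is in fact slightly cleaner than the paper's, which takes a small detour to show that \emph{every} vertex of $U$ lies in a triangle before concluding; you instead observe directly that the fixed triangle $abc$ lands inside $T_u$ by symmetry of the common-neighbor relation.
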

\begin{proof}
Take arbitrary vertices $u,u' \in U$ and a vertex $u_1$ in a triangle. Since $\{u_1\}^{\sarr\sarr}$ contains that triangle, we get $\{u_1\}^{\sarr\sarr} = U$ by~\Cref{lem:tree_def_bipartite}. In particular, $u_1 \sarr u_2 \sarr u$ for some $u_2$, and then also $u_1 \sarr u_3 \sarr u_2$ for some $u_3$. Now $u_2$ is in a triangle (namely $u_1$, $u_2$, $u_3$) and we can use the same reasoning to show that $u$ is in a triangle. Applying the argument once more, we get $u_4$ such that $u \sarr u_4 \sarr u'$ -- the required common neighbor.
\end{proof}

We call a triple $(U_0, U_1, U_2)$ of subsets of $U$ a \emph{strong configuration} if
	
\begin{enumerate}[label=(\alph*)]
\item\label{it:u1} each $U_i$ is tree-definable,
\item\label{it:u2} each $U_i$ is independent (i.e., $\psarr{U_i} \cap U_i=\emptyset$), 
\item\label{it:u3} $\psarr{U_i} \supseteq U_j$ for all $i,j$ with $i \neq j$, and
\item\label{it:u5} $\OG|_{U_0\cup U_1}$ and $\OG|_{U_0\cup U_2}$ are both (weakly) connected.
\end{enumerate}

We start with a strong configuration $(U_0, U_1, U_2)$ = $(\{u_0\},\{u_1\},\{u_2\})$, where $u_0$, $u_1$, and $u_2$ form a triangle. 
The strategy now is to gradually enlarge the sets $U_i$~%
(preserving~\cref{it:u1}--\cref{it:u5}) so that, eventually,  $U_0 \cup U_1 \cup U_2 = U$. 
If this is achieved, then $(\bigcup U_0, \bigcup U_1, \bigcup U_2, \bigcup U=A)$ clearly forms a triangle configuration for $\arr$ (note that the $U_i$ are disjoint by the other conditions)  and all the four sets are pp-definable from $\arr$ and 1-orbits of $\group G$ --- a contradiction would be reached. 

We grow the sets by applying~\Cref{blowup} or~\Cref{lem:INFsecond};
more precisely we apply~\Cref{blowup} as long as it enlarges a set.
If~\Cref{blowup} fails to enlarge the configuration, 
we apply~\Cref{lem:INFsecond}.
If neither operation enlarges the configuration $\bigcup U = A$ and we obtained our goal.

We assume that
\begin{itemize}
\item $(U_0, U_1, U_2)$ is a strong configuration, 
\item some $u_i \in U_i$ form a triangle, and 
\item $U_0 \cup U_1 \cup U_2$ is a proper subset of $U$,
\end{itemize}
and present the first lemma:

\begin{lemma}\label{blowup}
	 Let $i\in \{0,1,2\}$, and let  $U_i' =  \psarr{U_{i-1}} \cap \psarr{U_{i+1}}$ and $U_j'\coloneqq U_j$ for $j \neq i$ (indices are computed modulo 3). Then $(U_0',U_1',U_2')$ is a strong configuration and $U_j \subseteq U'_j$ for each $j$.
\end{lemma}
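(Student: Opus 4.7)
My plan is, by symmetry assuming $i=0$, to verify each of the four conditions defining a strong configuration for the triple $(U_0', U_1, U_2)$ with $U_0' = \psarr{U_1} \cap \psarr{U_2}$ and $U_j' = U_j$ for $j\neq 0$. The containment $U_0 \subseteq U_0'$ will be immediate from condition (c) of the original configuration, since $U_0 \subseteq \psarr{U_1}$ and $U_0 \subseteq \psarr{U_2}$; the other containments $U_j \subseteq U_j'$ for $j \neq 0$ are trivial.

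I will dispatch conditions (a), (c), and (d) quickly. Condition (a) holds because $\psarr{\cdot}$ and intersection preserve tree-definability. For (c), the containments $U_0' \subseteq \psarr{U_1}$ and $U_0' \subseteq \psarr{U_2}$ are by definition of $U_0'$; the containments $U_1 \subseteq \psarr{U_0'}$ and $U_2 \subseteq \psarr{U_0'}$ follow by monotonicity of $\psarr{\cdot}$ from $U_1 \subseteq \psarr{U_0}$ and $U_2 \subseteq \psarr{U_0}$ combined with $U_0 \subseteq U_0'$; and the containments between $U_1$ and $U_2$ carry over unchanged from the original configuration. For (d), since $\OG|_{U_0 \cup U_1}$ is connected and every $v \in U_0' \setminus U_0$ has a neighbor in $U_1$ (being a member of $\psarr{U_1}$), the graph $\OG|_{U_0' \cup U_1}$ remains connected; the argument for $\OG|_{U_0' \cup U_2}$ is symmetric.

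The only real difficulty, and the main obstacle I expect, is condition (b), the independence of $U_0'$. My approach will be by contradiction: assuming there exist $v, w \in U_0'$ with $v \sarr w$, I plan to construct a proper tree-definable subset of $U$ whose induced subgraph is non-bipartite, contradicting \Cref{lem:tree_def_bipartite}. The starting observation is that if $a, a' \in U_1$ are chosen with $a \sarr v$ and $a' \sarr w$, then $a \sarr v \sarr w \sarr a'$ is a walk of odd length $3$, and concatenating with an even-length walk from $a'$ to $a$ inside the bipartite connected graph $\OG|_{U_0 \cup U_1}$ produces an odd closed walk in $\OG$ whose vertices all lie in $U_0' \cup U_1$. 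The hard part is that $U_0' \cup U_1$ need not itself be tree-definable, since tree-definability is closed under $\psarr{\cdot}$ and intersection but not under union. My intention is to apply tree-definable operations to the nonempty tree-definable set $U_0' \cap \psarr{U_0'}$ (nonempty by the contradictory assumption) and related sets, exploiting the singletons $\{u_0\}, \{u_1\}, \{u_2\}$ of the triangle as well as the connectedness assumptions in (d), so as to trap an odd cycle inside a proper tree-definable subset and reach the contradiction.
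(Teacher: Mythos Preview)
Your treatment of conditions (a), (c), (d), and the containments $U_j\subseteq U_j'$ is correct and essentially identical to the paper's. One small caveat: ``by symmetry assuming $i=0$'' is slightly loose, because condition (d) singles out the index $0$; the paper handles this by choosing the auxiliary index carefully, but your argument for $i=0$ adapts without difficulty to the other cases, so this is not a real issue.

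The genuine gap is in your argument for (b). You correctly construct an odd closed walk inside $U_0'\cup U_1$ (the length-3 walk $a\sarr v\sarr w\sarr a'$ followed by an even walk from $a'$ back to $a$ in the connected bipartite graph $\OG|_{U_0\cup U_1}$). What you miss is that no further work is needed: $U_0'\cup U_1$ is already contained in a tree-definable proper subset of $U$, namely $\psarr{U_2}$. Indeed, $U_0'=\psarr{U_1}\cap\psarr{U_2}\subseteq\psarr{U_2}$ by definition, and $U_1\subseteq\psarr{U_2}$ by condition (c) of the original configuration; moreover $\psarr{U_2}$ is tree-definable and is proper because $U_2$ is independent (so $U_2\cap\psarr{U_2}=\emptyset$, and $U_2\ni u_2$ is nonempty). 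Your odd closed walk therefore lies inside the proper tree-definable set $\psarr{U_2}$, contradicting \Cref{lem:tree_def_bipartite} immediately. This is exactly how the paper proceeds.

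Your proposed detour through $U_0'\cap\psarr{U_0'}$ and related sets is unnecessary; it is also left as a vague intention rather than an argument, so as written the proof of (b) is incomplete. Once you insert the one-line containment $U_0'\cup U_1\subseteq\psarr{U_2}$, your proof becomes complete and matches the paper's.
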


\begin{proof}
Let $j,k$ be such that $\{i,j,k\}=\{0,1,2\}$ and $k \neq 0$. Thus $U'_i = \psarr{U_j} \cap \psarr{U_k}$, $U'_j = U_j$, $U'_k=U_k$, and $\OG_{U_i \cup U_j}$ is connected.

The sets $\psarr{U_j}$ and $\psarr{U_k}$ both contain $U_i$, therefore so does $U_i'$. 
Moreover, $\psarr{(U_i')} \supseteq \psarr{U_i} \supseteq U_j, U_k$ and $\psarr{U_j},\psarr{U_k} \supseteq \psarr{U_j} \cap \psarr{U_k} = U'_i$. Since $U_i'$ is tree-definable, conditions~\ref{it:u1}, \ref{it:u3} and the inclusions in the statement are verified.

Since $\OG_{U_i \cup U_j}$ is connected and $U_i$, $U_j$ are independent, any two vertices in $U_j$ are connected  by a walk in $U_i \cup U_j$ of even length. Observe that every vertex of $U_i' \subseteq \psarr{U_j}$ is adjacent to a vertex in $U_j$. It follows that $\OG_{U_i' \cup U_j}$ is connected but also that $U_i'$ is independent. Indeed, otherwise we obtain a walk of odd length in $\psarr{U_k} \supseteq U_i' \cup U_j$, which is a proper subset of $U$ (as $U_k$ is independent), a contradiction with~\Cref{lem:tree_def_bipartite}.

We have verified condition~\ref{it:u2} and a half of~\ref{it:u5}, the other half of~\ref{it:u5} readily follows.
\end{proof}

\noindent
If one of the inclusions in~\Cref{blowup} is proper, we succeeded in expanding our configuration. Assume, therefore, that 
$$
U_i = \psarr{U_{i-1}} \cap \psarr{U_{i+1}} \ \mbox{ for each } i \enspace.
$$
Set
$$
V_i\coloneqq \psarr{U_i} \ \setminus \ (U_{i-1}\cup U_{i+1}) \ \mbox{ for } i\in \{0,1,2\}
$$
and note that we do not claim $V_i$ is tree-(or pp-)definable.

Now the graph has the following structure. All the $U_i$ and $V_i$ are pairwise disjoint, each $u \in U_i$ has an edge to  $U_{i-1}$ and to $U_{i+1}$, and has no edge to $V_{i-1}$ or $V_{i+1}$. Every $v \in V_i$ has an edge to $U_i$ (and no edge to $U_{i-1}$ or $U_{i+1}$). The next lemma shows that $v \in V_i$ has an edge to $V_{i-1}$ and an edge to $V_{i+1}$. 
Its proof, given in~\Cref{apps:cutoff}, is exceptional in that it works with the original digraph $\relstr{A}$, unlike all the other lemmata.
This is, in a way, necessary, because a triangle in the bowtie graph provably cannot be properly expanded by means of tree definitions.

\begin{restatable}{lemma}{cutoff}\label{cutoff}
	If $i\neq j$, then $\psarr{V_j} \supseteq V_i$.
\end{restatable}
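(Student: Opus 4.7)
I proceed by contradiction. Fix distinct $i, j \in \{0, 1, 2\}$ and suppose some $v \in V_i$ has no $\sarr$-neighbor in $V_j$; up to renaming take $j = i + 1$. My first step is to constrain $v$'s $\OG$-neighborhood: from $v \in \psarr{U_i}$ and the saturation identity $U_{i-1} = \psarr{U_i} \cap \psarr{U_{i+1}}$ (which holds since blowup no longer expands), an $\sarr$-edge from $v$ into $U_{i+1}$ would place $v$ in $U_{i-1}$, contradicting $v \in V_i$. Symmetrically no $\sarr$-edge from $v$ reaches $U_{i-1}$. Combined with the standing hypothesis this yields $\psarr{\{v\}} \subseteq U_i \cup V_{i-1} \cup V_i \cup W$ where $W := U \setminus (U_0 \cup U_1 \cup U_2 \cup V_0 \cup V_1 \cup V_2)$.

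Next, I descend into $\relstr A$ itself, which (as the hint about the bowtie indicates) is essential: tree-definable manipulations on $\OG$ alone cannot suffice. Fix $a \in v$. Smoothness together with the symmetry of $\OG$ implies that $a$ has $\relstr A$-neighbors in both directions whose orbits exhaust $\psarr{\{v\}}$. Pick $u \in U_i \cap \psarr{\{v\}}$ and $b \in u$ with $a \to b$ in $\relstr A$. Since $u \in U_i \subseteq \psarr{U_{i+1}}$, there exist $b^* \in u$ and $c \in u'$ for some $u' \in U_{i+1}$ with $b^* \to c$; translating by $\phi \in \group G$ satisfying $\phi(b^*) = b$ gives $\phi(c) \in u'$ and $b \to \phi(c)$, producing an $\relstr A$-walk $a \to b \to \phi(c)$ whose orbit sequence lies in $V_i \times U_i \times U_{i+1}$.

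The main obstacle is converting this 2-step $\relstr A$-walk into a 1-step $\sarr$-edge from $v$ into $V_{i+1}$. Because $\psarr{\{v\}}$ is independent of the choice of $a \in v$, no individual representative can furnish a direct $V_{i+1}$-adjacency under the standing assumption; the contradiction must therefore come indirectly. My plan is to iterate the construction of the previous paragraph by moving around the triangle $u_0^*, u_1^*, u_2^*$ using $\group G$-conjugation to build a closed $\relstr A$-walk whose orbit projection onto $\OG$ forces an edge from $v$ into some orbit of $V_{i+1}$; equivalently, I would extract from $\psarr{\{v\}}$---a proper, hence bipartite by \Cref{lem:tree_def_bipartite}, tree-definable subset of $U$---a parity contradiction with the presence of the triangle and the connectivity of $\OG|_{U_0 \cup U_k}$ guaranteed by the strong configuration. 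The delicate point, and the reason $\relstr A$ itself must enter, is the precise interlock between $\group G$-translations of $\relstr A$-edges and orbit-level adjacency in $\OG$, which tree-definability alone cannot access.
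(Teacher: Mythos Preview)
Your proposal is incomplete: the first two paragraphs set up observations that are either already recorded before the lemma (no edges from $V_i$ to $U_{i\pm 1}$) or carry no new information (the two-step walk $a\to b\to \phi(c)$ merely re-expresses the $\OG$-adjacencies $v\sarr u\sarr u'$ with $u\in U_i$, $u'\in U_{i+1}$, which is known). The third paragraph is a plan, not an argument: you invoke a ``parity contradiction'' from bipartiteness of $\psarr{\{v\}}$ and ``iterate the construction \dots\ to build a closed $\relstr A$-walk,'' but never say which closed walk, nor why any odd cycle would be forced into the tree-definable set $\psarr{\{v\}}$. Knowing $\psarr{\{v\}}$ is bipartite does not by itself conflict with anything you have established, since you have not placed an odd cycle inside it.

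The paper's proof supplies exactly the missing idea. It works in $\relstr{A}$ with the pp-definable sets $P_i=\bigcup U_i$ and defines
\[
B(x)\equiv \exists y_1,y_2,y_3\ \ x\arr y_1\leftarrow y_2\arr y_3\leftarrow x\ \wedge\ \plusarr{P_1}(y_1)\wedge\plusarr{P_1}(y_2)\wedge\plusarr{P_0}(y_3).
\]
One checks directly that $B$ contains every element of the triangle orbits $u_0,u_1,u_2$, so by the minimality argument (as in \Cref{lem:tree_def_bipartite}) $B=A$. Now for $x\in v\in V_0$, the witnesses $y_1,y_2,y_3$ yield orbits $v_1,v_2,v_3$ with $v\sarr v_1\sarr v_2\sarr v_3\sarr v$, $v_1,v_2\in\psarr{U_1}$, $v_3\in\psarr{U_0}$. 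If $v_1\notin V_1$, the constraints force successively $v_1\in U_0$, $v_2\in U_2$, $v_3\in U_1$, contradicting $v_3\sarr v\in V_0$. The specific shape of the formula (a closed $4$-walk with prescribed $\psarr{P_j}$-memberships) is what makes the case analysis terminate; your iterated two-step walks do not impose comparable constraints.
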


\begin{lemma} \label{lem:vi_nonvoid}
Each $V_i$ is nonempty. 
\end{lemma}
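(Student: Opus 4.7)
The plan is to argue by contradiction: suppose some $V_i$ is empty; by cyclic symmetry of the indices, I may assume $V_0 = \emptyset$. Combined with condition~\ref{it:u3} of strong configurations, which gives $\psarr{U_0} \supseteq U_1 \cup U_2$, this assumption forces the exact equality $\psarr{U_0} = U_1 \cup U_2$, i.e.\ $U_0$ has no neighbors in $W := U \setminus (U_0 \cup U_1 \cup U_2)$. Note that $W$ is nonempty by the standing assumption $U_0 \cup U_1 \cup U_2 \subsetneq U$.

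First I would extract basic structural consequences using \Cref{lem:common_neighbors}. For any $w \in W$ and any fixed $u_0 \in U_0$, the common neighbor $y$ of $w$ and $u_0$ must lie in $\psarr{U_0} = U_1 \cup U_2$, so $w$ is adjacent to some vertex of $U_1 \cup U_2$. Since $w \notin S := U_0 \cup U_1 \cup U_2$, this places $w \in V_1 \cup V_2$, giving $W \subseteq V_1 \cup V_2$. Next, using that blowup has stabilized, $U_j = \psarr{U_{j-1}} \cap \psarr{U_{j+1}}$ for each $j$, the decomposition $\psarr{U_i} = U_{i-1} \cup U_{i+1} \cup V_i$ combined with $V_i \subseteq W$ forces pairwise disjointness of the $V_i$'s; thus $W = V_1 \sqcup V_2$, and the $V_i$ form a rigid partition of $W$. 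Moreover the iterated tree-definable set $\psarr{\psarr{U_0}}$ contains the triangle $u_0, u_1, u_2$ (each is at walk-distance $2$ from $U_0$ via a neighbor in $U_1 \cup U_2$), hence is non-bipartite, and by \Cref{lem:tree_def_bipartite} must therefore equal $U$.

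The main and hardest step is to turn this rigidity into an outright contradiction with the minimality of the counterexample. The goal is either to exhibit a proper tree-definable subset of $U$ that is non-bipartite (violating \Cref{lem:tree_def_bipartite}), or to directly construct a tree-definable strong configuration strictly larger than $(U_0, U_1, U_2)$ (violating blowup-stabilization). The rigid structure forced by $V_0 = \emptyset$ — namely, $U_0$ is seen only by $U_1 \cup U_2$, while $W$ is seen only by $U_1 \cup U_2$ and by itself, and $V_1 \cap V_2 = \emptyset$ — should let one define, via a careful combination of iterated $\psarr$'s and intersections applied to singletons of the initial triangle, a tree-definable set that captures part of $W$ but omits $U_0$; this would either grow one of the $U_i$ or serve as the proper non-bipartite subset witnessing the contradiction. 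The symmetric argument then handles $V_1 = \emptyset$ and $V_2 = \emptyset$.
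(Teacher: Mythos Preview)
Your setup and the structural observations in the first two paragraphs are correct, but the proof never actually lands: the final paragraph is a description of what you hope will happen (``should let one define \dots a tree-definable set that captures part of $W$ but omits $U_0$''), not a construction. You neither exhibit such a set nor argue that one exists, and it is not clear that assuming a \emph{single} $V_i$ empty leads to a contradiction by tree-definability arguments alone in the orbit graph. Indeed, the paper's proof of \Cref{cutoff} has to leave the orbit graph and work in $\relstr A$ itself, which strongly suggests your purely orbit-graph approach cannot close the gap as stated.

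The paper's argument is much shorter because it decouples the two difficulties. First it shows that not \emph{all} $V_i$ can be empty: if they were, then $\psarr{U_i}=U_{i-1}\cup U_{i+1}$ for every $i$, so $\psarr{\psarr{U_1}}=U_0\cup U_1\cup U_2$ is a proper tree-definable subset containing a triangle, contradicting \Cref{lem:tree_def_bipartite}. Second, it invokes the already-established \Cref{cutoff} ($\psarr{V_j}\supseteq V_i$ for $i\neq j$): any vertex of a nonempty $V_i$ then has a neighbor in each $V_j$, so every $V_j$ is nonempty. You should reorganize along these lines; in particular, the key input you are missing is \Cref{cutoff}, which does the real work of propagating nonemptiness from one $V_i$ to the others.
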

\begin{proof}
   Some $V_i$ must be nonempty, since otherwise $\psarr{U_i} = U_{i-1} \cup U_{i+1}$ for each $i$, and then $U_1^{\sarr\sarr} = U_1 \cup U_2 \cup U_3$ is a proper tree-definable subset of $U$ that contains a triangle, a contradiction to~\Cref{lem:tree_def_bipartite}.
   By~\Cref{cutoff}, each vertex in $V_i$ has a neighbor in $V_j$ for any $j \neq i$; in particular, $V_j$ is nonempty.
\end{proof}

While each $U_i$ is independent,
we now observe that 
the $V_i$ are quite different.

\begin{lemma}\label{lem:good_neighbour} 
Each common neighbor of $v \in V_i$ and $u \in U_i$ is in $V_i$. In particular, each $v \in V_i$ has a neighbor in $V_i$. 
\end{lemma}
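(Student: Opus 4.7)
The plan is to extract both statements from one structural observation: for $j \neq i$, the set $V_i$ is disjoint from $\psarr{U_j}$. Granting this, the first claim falls out at once. Writing $w$ for a common neighbor of $v\in V_i$ and $u\in U_i$, the fact that $u\in U_i$ forces $w\in \psarr{U_i} = U_{i-1}\cup U_{i+1}\cup V_i$ (this is the very decomposition by which $V_i$ was defined). If $w$ were in $U_{i-1}$ or $U_{i+1}$, then from $v\sarr w$ we would obtain $v\in \psarr{U_{i-1}}$ or $v\in \psarr{U_{i+1}}$, contradicting the disjointness. Hence $w\in V_i$.

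To establish that disjointness, I would argue by contradiction. Suppose $v\in V_i\cap \psarr{U_j}$ with $j\neq i$, and let $k$ be the third index so that $\{i,j,k\}=\{0,1,2\}$. Unfolding $\psarr{U_j} = U_i\cup U_k\cup V_j$, the vertex $v$ must lie in one of these three sets. It cannot lie in $U_{i-1}$ or $U_{i+1}$ since those indices are excluded in the definition of $V_i$, and it cannot lie in $U_i$ because $U_i$ is independent while $v\in \psarr{U_i}$. Therefore $v\in V_j$, and then $v\in \psarr{U_i}\cap\psarr{U_j} = U_k$ by the standing saturation equality $U_k=\psarr{U_{k-1}}\cap\psarr{U_{k+1}}$ (which holds since \Cref{blowup} no longer enlarges the configuration); this contradicts $v$ lying in none of the $U_\ell$.

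The ``in particular'' conclusion is then a one-line consequence: any $v\in V_i$ admits a neighbor $u\in U_i$ because $V_i\subseteq \psarr{U_i}$; \Cref{lem:common_neighbors} supplies a common neighbor $w$ of $v$ and $u$; and the first part places $w$ in $V_i$, yielding the desired neighbor of $v$ inside $V_i$.

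The only delicate point is the index bookkeeping in the disjointness argument --- specifically, recognising that $\{i,j\}$ comprises exactly the two neighbours of $k$ modulo $3$, so that the saturation equality applies to give $\psarr{U_i}\cap \psarr{U_j}=U_k$. Once this is isolated as a small lemma, the rest is a direct unpacking of the definitions of $U_i$ and $V_i$.
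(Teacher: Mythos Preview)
Your proof is correct and follows essentially the same approach as the paper, which simply invokes the structural fact (stated just before the lemma) that there are no edges between $V_i$ and $U_{i\pm 1}$ --- exactly your disjointness $V_i \cap \psarr{U_j} = \emptyset$ for $j \neq i$. Your disjointness argument can be shortened: from $v \in V_i \cap \psarr{U_j}$ you already have $v \in \psarr{U_i} \cap \psarr{U_j} = U_k$, a contradiction, so the detour through $v \in V_j$ is unnecessary.
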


\begin{proof}
   The common neighbor is in $\psarr{U_i} = U_{i-1} \cup U_{i+1} \cup V_i$. But there are no edges between $V_i$ and $U_{i-1} \cup U_{i+1}$. The second part follows from \Cref{lem:common_neighbors}.
\end{proof}

We have all the necessary structural information to expand our configuration.
By \Cref{lem:vi_nonvoid}, there exists a vertex  $v_1 \in V_1$. %
We fix such a vertex and inductively define
$$
W = \psarr{\{v_1\}} \cap \psarr{U_1}, \ 
S_0 := \psarr{W} \cap \psarr{U_0}, \ 
S_{n+1} := \psarr{S_n} \cap \psarr{U_0}
$$
for $n=1, 2, \dots$. We will show that $(U_0, S_n, S_{n+1})$ is a strong configuration properly extending $(U_0, U_1, U_2)$ for a sufficiently large even $n$. Observe first that each $S_n$ is tree-definable and let's move on to more interesting facts.

\begin{lemma} \label{lem:inclusions}
The following inclusions hold.
\begin{eqnarray*}
U_1 \varsubsetneq S_0 \subseteq S_2 \subseteq \dots \subseteq S_{2n} \subseteq \dots \subseteq U_1 \cup V_0 \\ 
U_2 \varsubsetneq S_1 \subseteq S_3 \subseteq \dots \subseteq S_{2n+1} \subseteq \dots  \subseteq U_2 \cup V_0
\end{eqnarray*}
\end{lemma}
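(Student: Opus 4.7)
The plan is to decompose each $S_n$ into a fixed part in $U_1$ or $U_2$ and a variable part $E_n:=S_n\cap V_0$, and to reduce monotonicity to an iteration inside $V_0$. I would first record three structural facts. By the symmetry of the strong configuration, for each $i$ we have $\psarr{U_i}=U_{i-1}\cup U_{i+1}\cup V_i$: the inclusion $\supseteq$ is clear; any neighbor of $U_i$ outside $U_{i-1}\cup U_{i+1}$ lies in $V_i$ by definition; and an edge from $U_i$ to some $v\in V_j$ with $j\neq i$ would force $v\in\psarr{U_i}\cap\psarr{U_j}=U_\ell$ for the third index $\ell$, contradicting $v\in V_j$. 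Second, from the comment preceding \Cref{cutoff}, $V_0$ has no edges to $U_1\cup U_2$. Third, \Cref{lem:good_neighbour} guarantees that every vertex of $V_0$ has a neighbor in $V_0$.

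With these in hand, the upper bounds follow. Since $W\subseteq V_1$, the first two facts give $\psarr{W}\cap\psarr{U_0}\subseteq U_1\cup V_0$, hence $S_0\subseteq U_1\cup V_0$. Analogously, $\psarr{U_i\cup V_0}\cap\psarr{U_0}\subseteq U_j\cup V_0$ for $\{i,j\}=\{1,2\}$, so induction produces the alternating upper bounds. For the lower bounds, $U_1\subseteq S_0$ because each $u\in U_1$ and $v_1$ share a common neighbor by \Cref{lem:common_neighbors}, which must lie in $V_1$ by \Cref{lem:good_neighbour}, and hence in $W$; the inductive step is then immediate from the cross-edges $U_1\sarr U_2$ and $U_i\sarr U_0$.

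For strictness, $W$ is nonempty by the preceding argument; any $w\in W\subseteq V_1$ has a neighbor $v\in V_0$ by \Cref{cutoff}, and $v$ lies in $S_0\cap V_0=E_0$, proving $U_1\varsubsetneq S_0$. Applying the third fact to an element of $E_0$ produces a vertex of $V_0\cap\psarr{E_0}\subseteq E_1$, and hence $U_2\varsubsetneq S_1$.

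The main obstacle will be monotonicity $S_n\subseteq S_{n+2}$. The established bounds force $S_{2n}=U_1\cup E_{2n}$ and $S_{2n+1}=U_2\cup E_{2n+1}$. Because of the second fact, a vertex $v\in V_0$ belongs to $\psarr{S_n}$ if and only if it has a neighbor in $E_n$; thus $E_{n+1}$ is precisely the $V_0$-neighborhood of $E_n$. Given $x\in E_n$, the third fact provides a neighbor $v'\in V_0$, and since $v'\sarr x\in E_n$ we get $v'\in E_{n+1}$, so $x\in V_0\cap\psarr{E_{n+1}}=E_{n+2}$. Combined with the constant $U_i$-components, this yields the two monotone chains.
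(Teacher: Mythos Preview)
Your proof is correct and follows essentially the same approach as the paper. Your explicit decomposition $S_n = U_i \cup E_n$ with the iteration inside $V_0$ is a slight repackaging of the paper's one-line inductive step (``every vertex in $S_n$ has a neighbor in $S_{n+1}$ --- use \Cref{lem:good_neighbour} for vertices in $V_0$''), but the underlying use of \Cref{lem:common_neighbors}, \Cref{lem:good_neighbour}, and \Cref{cutoff} is identical.
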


\begin{proof}
  We begin by proving $U_1\varsubsetneq S_0\subseteq U_1\cup V_0$:
Each vertex in $W$ is adjacent to $v_1 \in V_1$ and a vertex in $U_1$, so it must belong to $V_1$ by~\Cref{lem:good_neighbour}. Since $\psarr{U_0} = U_1 \cup U_2 \cup V_0$ and there are no edges between $U_2$ and $V_1$, we get $S_0 \subseteq U_1 \cup V_0$.   

By~\Cref{lem:common_neighbors},  $v_1$ and each $u_1 \in U_1$ have a common neighbor, which  belongs to $W$, therefore $u_1 \in \psarr{W}$. Since $u_1 \in \psarr{U_0}$, we have shown that $U_1 \subseteq S_0$.
Moreover, the inclusion is proper as each vertex in $W \subseteq V_1$ has a neighbor in $V_0$ by~\Cref{cutoff}; this neighbor belongs to $S_0$. 

The proof is finished by induction: e.g., for even $n$, $S_{n+1} \subseteq \psarr{S_n} \cap \psarr{U_0} \subseteq \psarr{(U_1 \cup V_0)} \cap \psarr{U_0} \subseteq U_2 \cup V_0$, and every vertex in $S_n$ has a neighbor in $S_{n+1}$ (use~\Cref{lem:good_neighbour} for vertices in $V_0$), so $S_{n} \subseteq S_{n+2}$. 
\end{proof}

\begin{lemma} \label{lem:s_is_independent}
Every $S_n$ is independent.
\end{lemma}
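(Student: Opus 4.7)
My plan is to proceed by induction on $n$. Two structural reductions are used throughout. First, by~\Cref{lem:inclusions}, $S_n \subseteq U_1 \cup V_0$ for even $n$ and $S_n \subseteq U_2 \cup V_0$ for odd $n$; combined with the structural facts established after the failure of~\Cref{blowup} to grow the configuration (namely, $U_1$ and $U_2$ are independent and there are no edges between $V_0$ and $U_1 \cup U_2$), any edge inside $S_n$ must have both endpoints in $V_0 \cap S_n$. It therefore suffices to prove that $V_0 \cap S_n$ is independent.

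Second, $\psarr{U_0} = U_1 \cup U_2 \cup V_0$ is a tree-definable proper subset of $U$ (proper because $U_0$ is nonempty and independent, so $U_0 \cap \psarr{U_0} = \emptyset$). By~\Cref{lem:tree_def_bipartite}, $\OG|_{\psarr{U_0}}$ is bipartite, and since its only edges involving $V_0$ are internal to $V_0$, the induced subgraph $\OG|_{V_0}$ is bipartite as well. The same reasoning applied to $\psarr{U_1}$ and $\psarr{U_2}$ yields bipartiteness of $\OG|_{V_1}$ and $\OG|_{V_2}$.

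For the inductive step, I would suppose $s \sarr s'$ with $s, s' \in S_n \cap V_0$. By the recursive definition of $S_n$, each of $s, s'$ has a neighbor in $S_{n-1}$, and, again because $V_0$ has no edges to $U_1 \cup U_2$, this neighbor must lie in $V_0 \cap S_{n-1}$. Thus I obtain $t, t' \in V_0 \cap S_{n-1}$ together with a walk $t \sarr s \sarr s' \sarr t'$ of length $3$ inside the bipartite graph $\OG|_{V_0}$, forcing $t$ and $t'$ to lie in opposite colour classes within their connected component. The contradiction would then come from a parallel parity invariant maintained by the induction: every element of $S_m \cap V_0$ appearing in a given component of $\OG|_{V_0}$ lies on a single side of that component's bipartition, the side being determined by the parity of $m$ and the position of the base set $W \subseteq V_1$ relative to the fixed vertex $v_1$.

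The principal obstacle is precisely formulating and verifying this parity invariant, since $\OG|_{V_0}$ is bipartite but in no canonical way (its colourings of distinct components are chosen independently). I would anchor the invariant by tracking colour classes simultaneously in $V_0$ and $V_1$, exploiting the bipartiteness of a proper tree-definable set containing $V_0 \cup V_1$, and then propagate it through the recursion $S_{n+1} = \psarr{S_n} \cap \psarr{U_0}$, in the same spirit as the walk-parity arguments already employed in the proofs of~\Cref{blowup} and~\Cref{lem:inclusions}. Once the parity invariant is in place, the inductive step closes immediately, as $t$ and $t'$ cannot simultaneously belong to the same colour class (by the invariant) and to opposite colour classes (by the length-$3$ walk).
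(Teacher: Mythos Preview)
Your reduction to showing that $S_n \cap V_0$ is independent is correct and matches the paper. But the inductive parity argument you sketch is incomplete, and the obstacle you flag is a real gap: your plan requires a proper tree-definable set containing $V_0 \cup V_1$, so that parity can be carried from $W \subseteq V_1$ across to $V_0$. Tree-definable sets are closed only under intersection and $\psarr{(\cdot)}$, not under union; $\psarr{U_0}$ contains $V_0$ but not $V_1$, $\psarr{U_1}$ the reverse, and by~\Cref{lem:common_neighbors} one has $\psarr{\psarr{\{v\}}} = U$ for every vertex $v$, so iterated neighborhoods overshoot immediately. I see no candidate set, and without one the base case already fails: two vertices of $S_0 \cap V_0$ are linked to one another only through $W \subseteq V_1$, and there is no bipartition of $\OG|_{V_0 \cup V_1}$ available to compare their colours.

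The paper sidesteps all of this with a direct argument you nearly have in hand. If $v \sarr v'$ with $v,v' \in S_n \cap V_0$, then each is a neighbor of the other (so lies in $\psarr{S_n}$) and lies in $V_0 \subseteq \psarr{U_0}$, hence $\{v,v'\} \subseteq S_{n+1}$ as well. Since one of $S_n, S_{n+1}$ is contained in $U_1 \cup V_0$ and the other in $U_2 \cup V_0$ (\Cref{lem:inclusions}), and $U_1 \cap U_2 = \emptyset$, we get $S_n \cap S_{n+1} \subseteq V_0$. Then $\psarr{(S_n \cap S_{n+1})}$ is tree-definable and misses $U_1$ (as $V_0$ has no edges to $U_1$), so it is proper; yet it contains the triangle on $v$, $v'$, and their common neighbor furnished by~\Cref{lem:common_neighbors}, contradicting~\Cref{lem:tree_def_bipartite}. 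No induction and no parity bookkeeping are needed.
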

\begin{proof}
We know that $S_n \subseteq U_i \cup V$ (where $i \in \{1,2\}$) by~\Cref{lem:inclusions}, that $U_i$ is independent, and that there are no edges between $V_0$ and $U_i$. It is therefore enough to verify that there are no edges in $S_n \cap V_0$.
  Assume to the contrary that $v,v'$ are adjacent and $\{v,v'\}\subseteq S_n\cap V_0$;
  clearly $\{v,v'\}\subseteq S_{n+1}\cap V_0$.
  On the one hand, $\psarr{(S_n\cap S_{n+1})}$ contains a triangle by~\Cref{lem:common_neighbors}.
  On the other hand, $S_n\cap S_{n+1}\subseteq V_0$ and therefore $\psarr{(S_n\cap S_{n+1})}\cap U_1 = \emptyset$,
  which contradicts~\Cref{lem:tree_def_bipartite}.
\end{proof}

\Cref{lem:inclusions} shows that the triple $(U_0, S_n, S_{n+1})$ properly extends $(U_0, U_1, U_2)$ for any even $n$. The following lemma thus finishes the proof.

\begin{lemma}\label{lem:INFsecond}
The triple $(U_0, S_n, S_{n+1})$ is a strong configuration for every sufficiently large even $n$.
\end{lemma}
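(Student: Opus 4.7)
The plan is to verify, one by one, the four defining conditions~\cref{it:u1}--\cref{it:u5} of a strong configuration for the triple $(U_0, S_n, S_{n+1})$. The role of ``sufficiently large $n$'' is to ensure that the ascending chains $S_0\subseteq S_2\subseteq\dots$ and $S_1\subseteq S_3\subseteq\dots$, which by \Cref{lem:inclusions} are bounded above by $U_1\cup V_0$ and $U_2\cup V_0$ respectively inside the finite set $U$, have stabilized. I would pass to these stabilized limits $S=\bigcup_n S_{2n}$ and $S'=\bigcup_n S_{2n+1}$ at the outset so that the pair $(S_n,S_{n+1})$ equals $(S,S')$ throughout the verification; the defining recursion then gives the simultaneous relations $S=\psarr{S'}\cap\psarr{U_0}$ and $S'=\psarr{S}\cap\psarr{U_0}$, which is useful for the symmetric treatment of $S_n$ and $S_{n+1}$.

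Condition~\cref{it:u1}, tree-definability, is immediate from the inductive construction of $S_n$ and $S_{n+1}$ out of the tree-definable ingredients $\{v_1\}$, $U_0$, $U_1$ by the allowed operations $\psarr{(\cdot)}$ and intersection (\Cref{ppt_alter}). Condition~\cref{it:u2}, independence, holds for $U_0$ by the original strong configuration and for $S_n$, $S_{n+1}$ by \Cref{lem:s_is_independent}. For condition~\cref{it:u3}, six containments have to be checked. The containments $S_n, S_{n+1}\subseteq \psarr{U_0}$ are built into the construction, which immediately yields $\psarr{U_0}\supseteq S_n, S_{n+1}$. By \Cref{lem:inclusions} we have $U_1\subseteq S_n$ and $U_2\subseteq S_{n+1}$, so combined with the original $\psarr{U_1}\supseteq U_0$ and $\psarr{U_2}\supseteq U_0$ we obtain $\psarr{S_n}\supseteq U_0$ and $\psarr{S_{n+1}}\supseteq U_0$. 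The containment $\psarr{S_n}\supseteq S_{n+1}$ is literally the definition of $S_{n+1}$, and the symmetric $\psarr{S_{n+1}}\supseteq S_n$ is obtained from $S_n\subseteq S_{n+2}\subseteq \psarr{S_{n+1}}$, which is established inside the induction step of \Cref{lem:inclusions}; this is precisely where stabilization (large $n$) is convenient, as at the limit the forward step never shrinks the set.

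The remaining condition~\cref{it:u5} is connectedness of $\OG|_{U_0\cup S_n}$ and $\OG|_{U_0\cup S_{n+1}}$. Here the argument is purely structural: the original configuration already guarantees that $\OG|_{U_0\cup U_1}$ is connected, and $U_0\cup U_1\subseteq U_0\cup S_n$ by \Cref{lem:inclusions}; every further vertex $v\in S_n\setminus U_1$ lies in $V_0\subseteq \psarr{U_0}$, so $v$ has a neighbor inside $U_0$, and is therefore attached to the connected core through an edge that survives in the induced subgraph. The argument for $\OG|_{U_0\cup S_{n+1}}$ is symmetric, using $U_0\cup U_2$ together with $S_{n+1}\setminus U_2\subseteq V_0\subseteq \psarr{U_0}$.

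The step I expect to be the only genuine obstacle is the reverse neighbor containment $S_n\subseteq\psarr{S_{n+1}}$. It is precisely to make this clean that I would work with stabilized $S$ and $S'$: at the fixed point, $S=\psarr{S'}\cap\psarr{U_0}\subseteq\psarr{S'}$ immediately. Everything else is mostly bookkeeping with the containments catalogued in \Cref{lem:inclusions}, and I do not anticipate any further technical difficulty.
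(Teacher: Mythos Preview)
Your proposal is correct and follows essentially the same approach as the paper: pick $n$ large enough that the increasing chains of \Cref{lem:inclusions} have stabilized, then verify \cref{it:u1}--\cref{it:u5} using tree-definability of the $S_n$, \Cref{lem:s_is_independent} for independence, the recursion $S_{m+1}=\psarr{S_m}\cap\psarr{U_0}$ together with the inclusions $U_1\subseteq S_n$, $U_2\subseteq S_{n+1}$ for the neighbor containments, and the fact that every vertex of $S_n$ (resp.\ $S_{n+1}$) is adjacent to $U_0$ for connectivity. One small remark: your chain argument $S_n\subseteq S_{n+2}\subseteq\psarr{S_{n+1}}$ for the reverse containment actually works for \emph{every} $n$, not just large ones, so stabilization is not truly needed there (the paper instead uses $S_{n+2}=S_n$); and for connectivity the detour through $V_0$ is unnecessary since $S_n\subseteq\psarr{U_0}$ already gives each vertex a neighbor in $U_0$ directly.
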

\begin{proof}
  We have already observed that each $S_n$ is tree-definable, so condition~\ref{it:u1} holds.

  It follows from the inclusions in \Cref{lem:inclusions} that $S_n = S_{n+2}$ for every sufficiently large $n$. Pick such an even $n$. 
By~\Cref{lem:s_is_independent}, $S_n$ and $S_{n+1}$ are independent, proving condition~\ref{it:u2}. As for the inclusions in condition~\ref{it:u3}, we have that $\psarr{U_0}$ contains both $S_n$ and $S_{n+1}$ by definitions, that $\psarr{S_n}$ contains $U_0$ (as it contains $\psarr{U_1}$ by~\Cref{lem:inclusions}) and $S_{n+1}$ (by definition of $S_{n+1}$), and that $\psarr{S_{n+1}}$ contains $U_0$ and $S_{n+2}$, which is equal to $S_n$. 
The remaining, connectivity condition~\ref{it:u5} is also simple: $\OG_{U_0 \cup S_n}$ is connected, because every vertex of $S_n$ is adjacent to a vertex of $U_0$ (as $\psarr{U_0} \supseteq S_n$ by definition), $S_n$ contains $U_1$, and $\OG_{U_0 \cup U_1}$ is connected. For a similar reason, $\OG_{U_0 \cup S_{n+1}}$ is connected as well, and the proof is concluded.  
\end{proof}

\subsection{Weakest pseudoloop conditions}

\section{Countably categorical structures without pseudo-wnu  polymorphisms}\label{sect:no-pwnu}
In this section we construct an \oc model-complete core structure $\fa$ that does not pp-interpret $K_3$ with parameters, and whose polymorphism clone does not contain any pseudo-WNU operation of any arity. This provides a counterexample to~\cite[Problem 14.2.6 (21)]{Book}.
In fact, we show the following stronger result: let $\Sigma$ be a %
 balanced minor condition, and let  $\bigvee_{i\in\omega}\Delta_i$ %
  be a \emph{weak equational condition}, i.e., %
  a disjunction of equational conditions $\Delta_i$.  %
   We prove that if for each fixed $i\in\omega$, the satisfaction of $\Sigma$  does not imply the satisfaction of $\Delta_i$ over finite idempotent polymorphism clones, then there exists an \oc model-complete core structure $\fa$ %
such that $\pol(\fa)$ satisfies $\overline\Sigma$ while omitting every $\Delta_i$;  that is, $\pol(\fa)$ does not satisfy the disjunction of the $\Delta_i$ even if this disjunction might well be implied by $\Sigma$ over finite idempotent polymorphism clones.

Moreover, the \emph{orbit growth}, i.e., the growth of the number of $n$-orbits as $n$ increases, can be taken to be smaller than doubly exponential.
It was shown in \cite{BKOPP,BKOPP-equations} that if $\relstr A$ is an $\omega$-categorical model-complete core whose orbit growth is smaller than $2^{2^n}$, then $\relstr A$ pp-interprets $K_3$ with parameters if, and only if, there exists a finite subset of $A$ on which $\pol(\relstr A)$ does not satisfy any  non-trivial minor condition.
Thus, our structure in~\Cref{thm:three} locally admits polymorphisms satisfying non-trivial minor conditions while still avoiding pseudo-WNU polymorphisms.

We now define some basic notions that we borrow from model theory.
A structure $\relstr A$ is \emph{homogeneous} if for every finite set $B\subseteq A$ and every embedding $f\colon\relstr B\to\relstr A$ of the structure $\relstr B$ induced by $\relstr A$ on $B$,  there exists an automorphism $\alpha$ of $\relstr A$ such that $\alpha|_B=f$.
Homogeneous structures are uniquely identified by the class of their finite substructures, which is called their \emph{age}.
Moreover, a classical result by Fra\"iss\'e's states that a countable class of finite relational structures $\mathcal C$ is the age of a homogeneous structure $\relstr C$ iff $\mathcal C$ is closed under taking substructures and satisfies the so-called \emph{amalgamation property}: for all structures $\relstr B,\relstr C_1,\relstr C_2\in\mathcal C$, and all embeddings $f_i\colon \relstr B\to \relstr C_i$, there exists a structure $\relstr D\in\mathcal C$ and embeddings $e_i\colon\relstr C_i\to\relstr D$ such that $e_1\circ f_1=e_2\circ f_2$. The structure $\relstr C$ is called the \emph{Fra\"iss\'e limit} of $\mathcal C$.
In the case that the embeddings $e_1,e_2$ can always be chosen so that $e_1(C_1)\cap e_2(C_2)=e_1(f_1(B))$ holds, then we say that $\mathcal C$ has the \emph{strong amalgamation property (SAP)}.

Fix a finite relational structure $\fa$ with domain $\{1,\dots,n\}$, and let $k\geq 2$. For an arbitrary set $B$, let $[B]^k$ be the set of tuples $(b_1,\dots,b_k)\in B^k$ with pairwise distinct entries. Let $\sigma$ be a relational signature that contains a symbol $R^k$ of arity $kr$ for every relation $R$ of arity $r$ of $\fa$, together with a $2k$-ary symbol $\sim$, and $k$-ary symbols $P_1,\ldots, P_n$.
Let $\mathcal C(\fa,k)$ be the class of all finite substructures of  $\sigma$-structures $\fb$ satisfying the following conditions:
\begin{itemize}
	\item $\sim$ is an equivalence relation on $[B]^k$ with $n$ classes $P_1,\ldots,P_n$; 
	\item identifying $P_i$ with $i$, and denoting by $[x]_\sim$ the equivalence class of $x$ for every $x\in [B]^k$ we have:  
	for every relation $R$ of $\fa$, say of arity $r$, and for every $b^1,\dots,b^r\in B^k$, one has $(b^1,\dots,b^r)\in R^k$ if, and only if, $b^1,\dots,b^r\in [B]^k$ and $([b^1]_\sim,\dots,[b^r]_\sim)\in R$. %
\end{itemize}
Note that any structure in $\mathcal C(\fa,k)$ is uniquely determined by $P_1,\ldots,P_n$.  
It can be seen that $\mathcal C(\fa,k)$ is nonempty and has the SAP, and therefore its Fra\"iss\'e limit $\fa^{\otimes k}$ is a homogeneous structure without algebraicity.
Observe that the factor map corresponding to $\sim$ is a pp-interpretation of $\fa$ in $\fa^{\otimes k}$ (again identifying $P_i$ with $i$, as we shall often do in the following); here, we use that $\neq$ is pp-definable in $\fa^{\otimes k}$ as a projection of $\sim$, and therefore the domain of this factor map is pp-definable in $\fa^{\otimes k}$.
 Moreover, if two $m$-tuples $a,b$ from $\fa^{\otimes k}$ are such that they satisfy the same equalities among their components and are such that $[a']_\sim=[b']_\sim$ for all projections $a',b'$ of $a,b$ onto the same $k$ coordinates which are injective, then $a$ and $b$ belong to the same orbit under the action of $\aut(\fa^{\otimes k})$ on $k$-tuples.
In particular, $\fa^{\otimes k}$ is $\omega$-categorical. The proof of the following is deferred to~\Cref{app:pwnu}.%

\begin{restatable}{proposition}{corecover}\label{prop:mc_core}
If $\fa$ is a core, then $\fa^{\otimes k}$ is a model-complete core.
\end{restatable}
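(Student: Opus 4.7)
The plan is to verify the standard characterization of model-complete cores: for an $\omega$-categorical structure $\relstr B$, being a model-complete core is equivalent to the condition that every endomorphism of $\relstr B$ preserves the orbits of $\aut(\relstr B)$ on finite tuples. By homogeneity of $\fa^{\otimes k}$, this in turn is equivalent to saying that for every finite $F\subseteq A^{\otimes k}$, the restriction $e|_F$ coincides with some automorphism. Using the orbit description of $\fa^{\otimes k}$ recalled just before the proposition (same equality pattern among components, same $\sim$-class on each injective $k$-projection), this reduces to showing, for every endomorphism $e$ of $\fa^{\otimes k}$, that (i) $e$ is injective and (ii) $e$ preserves the $\sim$-class of every injective $k$-subtuple.

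Condition (ii) will be immediate from the way $\sigma$ is set up: the symbols $P_1,\dots,P_n$ naming the $\sim$-classes are part of the signature, so any homomorphism maps $P_i$ into $P_i$ by preservation of positive atomic formulas. For condition (i), the key observation is that $\sim$ is interpreted as an equivalence relation on $[B]^k$ only, so the reflexive instance $(a,a)\in{\sim}$ holds iff $a$ is injective; since $e$ preserves $\sim$, it must send injective $k$-tuples to injective $k$-tuples. If $e(x)=e(y)$ for some distinct $x,y$, I would extend $(x,y)$ to an injective $k$-tuple $(x,y,z_3,\dots,z_k)$; its image under $e$ then has two equal components while being forced to lie in $[B]^k$, a contradiction, so $e$ is globally injective.

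With (i) and (ii) established, every endomorphism preserves the two orbit invariants, and $\fa^{\otimes k}$ is a model-complete core. The hypothesis that $\fa$ is itself a core enters only if one prefers to work with $\fa^{\otimes k}$ in the reduced signature $\{R^k,{\sim}\}$ without the $P_i$: there, $e$ descends to an endomorphism $\tilde e$ of the quotient $\fa^{\otimes k}/{\sim}\cong\fa$, which must be an automorphism of $\fa$ because $\fa$ is a finite core, and then composing $e$ with an automorphism of $\fa^{\otimes k}$ realising $\tilde e^{-1}$ reduces to the situation where each $\sim$-class is stabilised setwise, to which the preceding argument applies. The main technical point is the derivation of global injectivity from the reflexivity of $\sim$ on $[B]^k$; once this is in place, the remaining atomic content of $\fa^{\otimes k}$ -- both the $P_i$'s and the $R^k$'s, whose truth depends only on the injectivity of the relevant subtuples and on their $\sim$-classes -- is handled uniformly, which is why I expect no further obstacle in the argument.
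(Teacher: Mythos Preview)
Your argument is correct. Both you and the paper first establish injectivity of an arbitrary endomorphism $e$ from the fact that $\sim$ lives on $[B]^k$ (the paper phrases this as ``$\neq$ is pp-definable as a projection of $\sim$''; your reflexivity argument is the same mechanism), and then argue that $e$ preserves all orbit-determining data, so that by homogeneity it agrees with an automorphism on every finite set. The one genuine divergence is in step~(ii): you use that the $P_i$ are relation symbols in the signature, so $e$ automatically fixes each $\sim$-class setwise; the paper instead passes to the quotient, obtains an induced endomorphism $e'$ of $\fa\cong\fa^{\otimes k}/{\sim}$, invokes the core hypothesis on $\fa$ to conclude that $e'$ is an automorphism, and from bijectivity of $e'$ shows that $e$ preserves the complements of all relations and is hence an embedding. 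Your route is more direct and, as you correctly point out, does not actually require $\fa$ to be a core once the $P_i$ are part of the signature---with them present, the induced map $e'$ is forced to be the identity on $\{1,\dots,n\}$ regardless.
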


\begin{proposition}\label{prop:structontuples}
Let $\Sigma$ be a balanced minor condition that is satisfiable in $\pol(\fa)$ by idempotent operations.
Then $\pol(\fa^{\otimes k})$ contains injective functions satisfying $\overline\Sigma$.
\end{proposition}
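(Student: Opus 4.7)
The plan is to build, for each function symbol $f$ of $\Sigma$ of arity $m$, with interpretation $f^{\fa}\in\pol(\fa)$ witnessing $\Sigma$ idempotently, an injective operation $\hat f\colon (A^{\otimes k})^m\to A^{\otimes k}$ whose induced action on $\sim$-classes equals $f^{\fa}$. Since every $R^k$ is pulled back from $\fa$ along the pp-interpretation induced by $\sim$, preservation of $R^k$ by $\hat f$ is then automatic from $f^{\fa}\in\pol(\fa)$, so $\hat f\in\pol(\fa^{\otimes k})$.

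I would construct $\hat f$ by a Fra\"iss\'e back-and-forth along an enumeration $(x^{(1)},x^{(2)},\dots)$ of $(A^{\otimes k})^m$. At step $n$, pick the value $\hat f(x^{(n)})\in A^{\otimes k}$ to be a \emph{fresh} element (which forces injectivity) such that, for every $k$-tuple of previously processed inputs whose coordinatewise application yields a $k$-tuple in $[A^{\otimes k}]^k$, the resulting output $k$-tuple lies in the class $P_j$ dictated by applying $f^{\fa}$ coordinatewise to the $\sim$-classes of the input $k$-tuples. Such an extension is always available because $\mathcal C(\fa,k)$ has the strong amalgamation property and $\fa^{\otimes k}$ is its Fra\"iss\'e limit: any finite partial substructure prescribed in this way embeds into $\fa^{\otimes k}$ on fresh points.

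For $\overline\Sigma$, fix an identity $s\approx t$ of $\Sigma$, say $s=f(x_{\sigma(1)},\dots,x_{\sigma(m)})$ and $t=g(x_{\tau(1)},\dots,x_{\tau(p)})$, both over the common variable set $\{x_1,\dots,x_n\}$ guaranteed by balancedness. The minors $s^{\hat f},t^{\hat g}\colon(A^{\otimes k})^n\to A^{\otimes k}$ are injective, and by construction they induce on $\sim$-classes the same $\fa$-function $f^{\fa}(x_{\sigma(\bullet)})=g^{\fa}(x_{\tau(\bullet)})$. Hence the map $\phi\colon s^{\hat f}(\bar x)\mapsto t^{\hat g}(\bar x)$ is a well-defined bijection between the images of the two minors that preserves $\sim$ and every $P_i$, and therefore each $R^k$; that is, $\phi$ is a partial isomorphism of $\fa^{\otimes k}$. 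By the homogeneity, $\omega$-categoricity, and absence of algebraicity of $\fa^{\otimes k}$ (all consequences of the SAP of $\mathcal C(\fa,k)$), a standard back-and-forth extends $\phi$ to an automorphism $\alpha\in\aut(\fa^{\otimes k})$. The equation $\alpha\circ s^{\hat f}=t^{\hat g}$ then realizes the pseudo-version of $s\approx t$ inside $\pol(\fa^{\otimes k})$, so the collection of $\hat f$'s together with such automorphisms witnesses $\overline\Sigma$.

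The main technical delicacy is verifying that $\phi$ is a genuine partial isomorphism and that it extends to a global automorphism despite being defined on an infinite domain; both reduce to the observation that on any finite tuple of inputs, the induced substructures on the two sides of the identity are isomorphic, and this is what the balancedness of $\Sigma$ is doing: it forces both minors to be functions of the same tuple of variables and to induce the same $\fa$-function on $\sim$-classes, giving the same $P_j$-pattern on every output $k$-tuple derived from a common input tuple. Without balancedness the two sides would live over different variable sets and the matching of outputs would not even be well-defined.
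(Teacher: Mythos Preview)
Your overall strategy matches the paper's: lift each idempotent $f^{\fa}\in\pol(\fa)$ to an injective $\hat f\in\pol(\fa^{\otimes k})$ whose induced action on $\sim$-classes agrees with $f^{\fa}$, then exploit homogeneity to align the two sides of each identity. The paper realizes the lift by defining a suitable $\sigma$-structure on the set $B^m$ directly (via a map $q_f\colon[B^m]^k\to\{1,\dots,n\}$) and embedding it into $\fa^{\otimes k}$; your back-and-forth is a dynamic version of the same construction.

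There is, however, a genuine gap in your argument that $\phi$ is a partial isomorphism. Take distinct $\bar x^{(1)},\dots,\bar x^{(k)}\in B^n$. The outputs $s^{\hat f}(\bar x^{(1)}),\dots,s^{\hat f}(\bar x^{(k)})$ form an injective $k$-tuple and hence lie in some $P_j$. Your back-and-forth only prescribes this $j$ when every coordinate projection $(\bar x^{(1)}_i,\dots,\bar x^{(k)}_i)$ is itself injective (so that its $\sim$-class is defined). When some projection is not injective---which happens whenever two of the $\bar x^{(\ell)}$ agree in a coordinate---the class of the output tuple is left unconstrained, and there is no reason the choices made while building $\hat f$ and $\hat g$ should agree. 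Hence $\phi$ need not preserve the $P_i$. The paper repairs this by extending $q_f$ to the degenerate inputs with the uniform default value $1$; balancedness then guarantees that both sides of the identity fall into the default case under exactly the same circumstances (the images of $\sigma$ and $\tau$ coincide), so the $\sim$-classes match everywhere.

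A smaller issue: even once $\phi$ is a genuine partial isomorphism, extending it to a global automorphism is not the routine back-and-forth you describe, since $\dom(\phi)$ is infinite and extensions to points outside $\dom(\phi)$ may conflict with later prescribed values of $\phi$. The paper sidesteps this entirely: it shows that for every finite tuple of inputs the two output tuples lie in the same $\aut(\fa^{\otimes k})$-orbit, and then invokes a standard compactness argument to obtain embeddings $u,v$ with $u\circ s^{\hat f}=v\circ t^{\hat g}$. Embeddings, rather than a single automorphism, are all that $\overline\Sigma$ requires.
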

\begin{proof}
As before, let $\{1,\dots,n\}$ be the domain of $\fa$, let $B$ be the domain of $\fa^{\otimes k}$, and identify each equivalence class $P_i$ of $\sim$ with $i$, for all $i\in \{1,\ldots,n\}$.
For every symbol $s$ appearing in $\Sigma$, we set $C_s:=B^{r}$, where $r$ is the arity of $s$. %
We also use $s$ to denote an idempotent operation in $\pol(\fa)$ witnessing the fact that $\pol(\fa)$ satisfies $\Sigma$. Our goal is to assign a value in $B$ to each element of $C_s$, thus obtaining a function $s'\colon B^r\to B$; the functions thus obtained will, together with embeddings for the new unary symbols, witness the satisfaction of $\overline \Sigma$.

In order to do that, we first define a partial mapping $q_s\colon [C_s]^k\to\{1,\dots,n\}$ by setting, for any pairwise distinct tuples $c^1,\ldots,c^k\in C_s=B^r$ with the property that the tuples $(c^1_i,\dots,c^k_i)$ are injective for all $i\in\{1,\ldots,r\}$ $$q_s(c^1,\dots,c^k):=s([(c^1_1,\dots,c^k_1)]_\sim,\dots,[(c^1_r,\dots,c^k_r)]_\sim).$$
We then extend $q_s$ to a total function on $[C_s]^k$ by setting its value to be $1$ elsewhere.  
Identifying the classes of its kernel with elements of the set $\{1,\ldots,n\}$, we see that $q_s$ induces a structure $\fc_s$ on $C_s$ whose finite substructures belong to $\mathcal C(\fa,k)$.
Since $\fa^{\otimes k}$ is $\omega$-categorical, we obtain that there exists an embedding $s'\colon\fc_s\to\fa^{\otimes k}$.  
 Since $s$ is an idempotent polymorphism of $\fa$, the identity map on $B^r=C_s$ is a homomorphism from  $(\fa^{\otimes k})^r$ to $\fc_s$. Hence, $s'$ is, viewed as the composition of an embedding with that identity map, a polymorphism of $\fa^{\otimes k}$; moreover, it is injective.

Let $s,t$ be symbols of arities $r_s, r_t$ which appear in $\Sigma$, and let  $\sigma\colon[r_s]\to[r]$ and $\tau\colon[r_t]\to [r]$, where $r\geq 1$.
We prove that if $s^\sigma\approx t^\tau$ is an identity in $\Sigma$, then $u\circ s'^\sigma = v\circ t'^\tau$ holds for some embeddings $u,v$ of $\fa^{\otimes k}$.
Let $m\geq k$, and let $b_1,\dots,b_r$ be $m$-tuples of elements of $\fa^{\otimes k}$.
Observe that the $m$-tuples $(s')^\sigma(b_1,\dots,b_r)$ and $(t')^\tau(b_1,\dots,b_r)$ satisfy the same equalities since $\Sigma$ is balanced and since both $s'$ and $t'$ are injective.
Let $i_1,\dots,i_k\in\{1,\ldots,m\}$ be distinct, and let $c_j$ be the $k$-tuple obtained by projecting $b_j$ onto $i_1,\dots,i_k$, for all $j\in\{1,\ldots,r\}$.
We claim that if $(s')^\sigma(c_1,\dots,c_r)$ is injective, then $(s')^\sigma(c_1,\dots,c_r)$ and $(t')^\tau(c_1,\dots,c_r)$ belong to the same $\sim$-class. If that is the case, then  $(s')^\sigma(b_1,\dots,b_r)$ and $(t')^\tau(b_1,\dots,b_r)$ are in the same orbit under $\aut(\fa^{\otimes k})$, by the definition of $\fa^{\otimes k}$ and its homogeneity. With this, and since $m\geq k$ was arbitrary, a standard compactness argument (see e.g.~\cite[Lemma~3]{canonical}) yields the existence of embeddings $u,v$ of $\fa^{\otimes k}$ satisfying $u\circ (s')^\sigma = v\circ (t')^\tau$.
This proves that $\overline\Sigma$ is satisfiable in $\pol(\fa^{\otimes k})$.

For the claim that remains to be proven, let $q_s,q_t$ be as in the definition of $s',t'$ from $s,t$. Note that by definition of $q_s$  we have that $q_s(c_{\sigma(1)},\ldots,c_{\sigma(r_s)})$ equals $1$ whenever at least one of the tuples $c_1,\dots,c_r$ is not injective; the analogous statement is true for $t$ and $\tau$. If, on the other hand, all of  $c_1,\dots,c_r$ are injective, then $q_s(c_{\sigma(1)},\ldots,c_{\sigma(r_s)})$ is equal to  $s([(c^1_{\sigma(1)},\dots,c^k_{\sigma(1)})]_\sim,\dots,[(c^1_{\sigma(r_s)},\dots,c^k_{\sigma(r_s)})]_\sim)$, which is $s^\sigma([(c^1_{1},\dots,c^k_{1})]_\sim,\dots,[(c^1_{r},\dots,c^k_{r})]_\sim)$. Since this  equals  $t^\tau([(c^1_{1},\dots,c^k_{1})]_\sim,\dots,[(c^1_{r},\dots,c^k_{r})]_\sim)$ by the identities in $\Sigma$, going back the same argument with $t$ and $\tau$ we see that $q_s(c_{\sigma(1)},\ldots,c_{\sigma(r_s)})$ and $q_t(c_{\tau(1)},\ldots,c_{\tau(r_t)})$ are equal, as in the non-injective case. But $s', t'$ preserve $\sim$-classes, and so it follows that the $\sim$-classes of $s'(c_{\sigma(1)},\ldots,c_{\sigma(r_s)})$ and $t'(c_{\tau(1)},\ldots,c_{\tau(r_t)})$ agree. Since  $(s')^\sigma(c_1,\dots,c_r)=s'(c_{\sigma(1)},\ldots,c_{\sigma(r_s)})$, and since the analogous statement holds for $t$ and $\tau$, our claim follows.
\end{proof}

Let $(\relstr A_i)_{i\in\omega}$ be a sequence of structures, and $(k_i)_{i\in\omega}$ be a sequence of positive natural numbers.
Let $\sigma_i$ be the signature used in the construction of $\mathcal C(\relstr A_i,k_i)$, and assume without loss of generality that the signatures $(\sigma_i)_{i\in\omega}$ are all disjoint.
The \emph{superposition} of the classes $\mathcal C(\relstr A_i,k_i)$ is the class of structures $\fb$ in the signature $\bigcup\sigma_i$ whose $\sigma_i$-reduct belongs to $\mathcal C(\relstr A_i,k_i)$ for all $i\in\omega$.
It is a standard fact that the superposition of classes that have SAP has itself SAP.
A straightforward modification of the proof of~
\Cref{prop:structontuples} shows that the Fra\"iss\'e limit of the superposition of classes $\mathcal C(\relstr A_i,k_i)$ also admits injective polymorphisms satisfying $\overline{\Sigma}$.
The proof can be found in~\Cref{app:pwnu} for the convenience of the reader.
\begin{restatable}{proposition}{superposition}\label{prop:superposition}
	Let $(\fa_i)_{i\in\omega}$ be a sequence of structures, each having idempotent polymorphisms satisfying $\Sigma$.
	Let $(k_i)_{i\in\omega}$ be a sequence of positive natural numbers.
	The Fra\"iss\'e limit of the superposition of all the classes $\mathcal C(\fa_i,k_i)$ has injective polymorphisms satisfying $\overline\Sigma$.
\end{restatable}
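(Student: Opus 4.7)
The strategy is to mimic the proof of Proposition~\ref{prop:structontuples}, running the construction in parallel across all indices $i\in\omega$. Denote by $\fa$ the Fra\"iss\'e limit of the superposition, with domain $A$. For each $r$-ary function symbol $s$ appearing in $\Sigma$, set $C_s := A^r$, and fix for every $i$ an idempotent operation $s_i\in\pol(\fa_i)$ witnessing that $\fa_i$ satisfies $\Sigma$. Exactly as in Proposition~\ref{prop:structontuples}, for each $i$ define a partial map $q_{s,i}\colon [C_s]^{k_i}\to A_i$ on $k_i$-tuples $c^1,\dots,c^{k_i}$ of $C_s$ with injective coordinate projections via $s_i$ applied to the corresponding $\sim_i$-classes, and extend arbitrarily elsewhere. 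Identifying the classes of its kernel with the elements of $A_i$, this $q_{s,i}$ induces a $\sigma_i$-structure on $C_s$ whose finite substructures lie in $\mathcal{C}(\fa_i,k_i)$. Performing this construction for every $i$ produces a single structure $\fc_s$ on $C_s$ whose finite substructures belong to the superposition class.

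Because $\fa$ is the Fra\"iss\'e limit of this superposition class, every finite substructure of $\fc_s$ embeds into $\fa$; a standard back-and-forth / compactness argument then yields an embedding $s'\colon\fc_s\to\fa$. Since each $s_i$ is idempotent, the identity map on $C_s$ is a homomorphism from the $\sigma_i$-reduct of $\fa^r$ to the $\sigma_i$-reduct of $\fc_s$ for every $i$; composing with $s'$ shows that $s'$ is an injective polymorphism of $\fa$.

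The verification that the operations $s'$ satisfy $\overline\Sigma$ proceeds just as in Proposition~\ref{prop:structontuples}, now reduct by reduct. For an identity $s^\sigma\approx t^\tau$ of $\Sigma$ and tuples $b_1,\ldots,b_r$ from $\fa$, balancedness and injectivity force the tuples $(s')^\sigma(b_1,\ldots,b_r)$ and $(t')^\tau(b_1,\ldots,b_r)$ to satisfy the same coordinate equalities; and for each $i$ and each injective $k_i$-projection, the analysis from Proposition~\ref{prop:structontuples} shows that the resulting two $k_i$-subtuples lie in the same $\sim_i$-class. Homogeneity of $\fa$, together with the fact that the orbits of tuples in $\fa$ are determined by the combination of their equality type and the $\sim_i$-class of every injective $k_i$-projection ranging over all $i$, then places the two tuples in the same orbit of $\aut(\fa)$. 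The compactness argument of~\cite[Lemma~3]{canonical} finally delivers embeddings $u,v$ of $\fa$ satisfying $u\circ(s')^\sigma = v\circ(t')^\tau$, establishing $\overline\Sigma$.

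The only real obstacle is bookkeeping: one must check that the orbit characterization for the superposition limit $\fa$ is exactly the conjunction of the reduct-wise characterizations used in Proposition~\ref{prop:structontuples}. This is automatic because the signatures $\sigma_i$ are pairwise disjoint and the superposition class imposes no coherence requirement across indices, so the isomorphism type of a finite substructure of $\fa$ is determined reduct-by-reduct, and homogeneity of the Fra\"iss\'e limit then transports this type-identification to orbit-identification.
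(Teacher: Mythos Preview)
Your proposal is correct and follows exactly the approach the paper sketches (in three sentences): run the construction of Proposition~\ref{prop:structontuples} in parallel over all $i$, embed the resulting $\fc_s$ into the superposition limit, and verify $\overline\Sigma$ reduct by reduct. The only quibble is that ``extend arbitrarily elsewhere'' should read ``extend by a fixed default value'' (the paper uses~$1$), since the non-injective branch of the $\sim_i$-class comparison relies on $q_{s,i}$ and $q_{t,i}$ taking the \emph{same} value on tuples with a non-injective coordinate projection.
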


\begin{theorem}\label{thm:counterexample_full}
	Let $\Sigma$ be a %
	 balanced minor condition, and let $\bigvee_{i\in\omega} \Delta_i$ be a weak equational condition such that for every $i\in\omega$, there exists a finite idempotent polymorphism clone satisfying $\Sigma$ and not satisfying $\Delta_i$. Then 
	there exists an \oc homogeneous model-complete core structure $\fa$ with small orbit growth such that $\pol(\fa)$ satisfies $\overline\Sigma$ and does not satisfy  $\Delta_i$ for any $i\in\omega$.
\end{theorem}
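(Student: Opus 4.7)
The plan is to take $\fa$ to be the Fraïssé limit of the superposition of carefully chosen classes $\mathcal{C}(\fa_i,k_i)$, where each $\fa_i$ and $k_i$ is tailored to block satisfaction of the single condition $\Delta_i$.

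For each $i\in\omega$, use the hypothesis to fix a finite idempotent polymorphism clone $\clo C_i$ satisfying $\Sigma$ but not $\Delta_i$, and let $\fa_i$ be a finite core structure whose idempotent polymorphism clone is $\clo C_i$ (such a structure exists by standard clone-relation duality). Choose $k_i$ slowly growing, say $k_i = i+2$, but large enough to exceed both the cardinality of $\fa_i$ and the arities of all function symbols appearing in $\Delta_0,\dots,\Delta_i$. Let $\fa$ be the Fraïssé limit of the superposition of all the classes $\mathcal{C}(\fa_i,k_i)$; since the strong amalgamation property is preserved under superposition, $\fa$ is well-defined, homogeneous, and $\omega$-categorical. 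An $n$-orbit in $\fa$ is determined by its equality type together with the $\sim_i$-classes of all its $k_i$-subtuples as $i$ ranges over the indices with $k_i \le n$; counting these yields an orbit growth that stays well below $2^{2^n}$. That $\fa$ is a model-complete core follows by applying \Cref{prop:mc_core} to each $\sigma_i$-reduct together with the homogeneity of $\fa$, and that $\pol(\fa)$ satisfies $\overline\Sigma$ is immediate from \Cref{prop:superposition}.

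It remains to show that $\pol(\fa)$ does not satisfy $\Delta_i$ for any $i$. Fix $i$ and suppose for contradiction that polymorphisms $(f_s)_s$ of $\fa$ witness the satisfaction of $\Delta_i$. Since $\sim_i$ is pp-definable in $\fa$, it is preserved by each $f_s$, so passing to the quotient produces operations on the domain of $\fa_i$; a calculation analogous to that in the proof of \Cref{prop:structontuples} shows these operations are polymorphisms of $\fa_i$. The unary function symbols in $\Delta_i$, by the model-complete-core property of $\fa$, may be instantiated by automorphisms of $\fa$, which descend to automorphisms of $\fa_i$. Composing the induced operations with suitable automorphisms of $\fa_i$ yields idempotent polymorphisms of $\fa_i$ satisfying $\Delta_i$, contradicting the choice of $\clo C_i$.

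The main obstacle is the final passage from polymorphisms of $\fa$ to idempotent polymorphisms of $\fa_i$: one must verify both that the induced operations on the $\sim_i$-quotient are globally defined as polymorphisms (handled by choosing $k_i > |\fa_i|$, so that tuples of pairwise distinct $\sim_i$-classes form a rich enough test set) and that the idempotency reduction remains compatible with the pattern of unary substitutions in $\Delta_i$ (handled by the model-complete-core properties of both $\fa$ and $\fa_i$, combined with a compactness argument to coordinate across the finitely many function symbols of $\Delta_i$).
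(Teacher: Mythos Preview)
Your overall strategy coincides with the paper's: both build $\fa$ as the Fra\"iss\'e limit of the superposition of the classes $\mathcal C(\fa_i,k_i)$, invoke \Cref{prop:mc_core} and \Cref{prop:superposition} for the positive properties, and use the interpretation of $\fa_i$ in $\fa$ to block $\Delta_i$. Two points deserve correction.

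\textbf{Orbit growth.} Your suggestion $k_i=i+2$ is too slow. With that choice, for a given $n$ roughly $n$ of the classes contribute, and the $i$-th one alone allows on the order of $|A_i|^{n^{k_i}}$ possibilities for the $\sim_i$-type of an $n$-tuple; already the single term with $k_i$ close to $n$ exceeds $2^{2^n}$. The paper simply takes $k$ ``increasing sufficiently fast'' and defers the computation to the literature; whatever lower bounds you additionally want on $k_i$ are harmless, but you must not cap the growth from above.

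\textbf{Blocking $\Delta_i$.} This is where your argument goes off the rails. You pick $\fa_i$ to be a core whose \emph{idempotent} polymorphisms form $\clo C_i$, allowing $\pol(\fa_i)\supsetneq\clo C_i$; you then try to push the induced operations back into the idempotent part by invoking the model-complete-core property to replace unary polymorphisms by automorphisms, and by composing with automorphisms of $\fa_i$. None of this works: unary polymorphisms of an $\omega$-categorical model-complete core need not be automorphisms (they are only \emph{locally} automorphisms), $\Delta_i$ need not contain any unary symbols at all, and there is no general way to ``idempotize'' a witness of an arbitrary equational condition. Your side conditions $k_i>|\fa_i|$ and $k_i$ larger than the arities in $\Delta_i$ do not address any genuine obstruction either.

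The fix is to avoid the issue entirely. Take $\fa_i$ with $\pol(\fa_i)=\clo C_i$ on the nose; since $\clo C_i$ is idempotent, such an $\fa_i$ is rigid and in particular a core. The factor map along $\sim_i$ is then a pp-interpretation of $\fa_i$ in $\fa$ (the paper observes this right after defining $\mathcal C(\fa,k)$), and pp-interpretations yield a clone homomorphism $\pol(\fa)\to\pol(\fa_i)$ which preserves \emph{all} equational conditions, not just minor ones. Hence if $\pol(\fa)$ satisfied $\Delta_i$, so would $\pol(\fa_i)=\clo C_i$, a contradiction. This is exactly what the paper does in one line, citing~\cite{Topo-Birk}; your direct computation of the induced operations is fine as a substitute for that citation, but the subsequent idempotency discussion should be deleted.
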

\begin{proof}
	Let $(\fa_i)_{i\in\omega}$ be a sequence of finite structures such that $\pol(\fa_i)$ is a finite idempotent clone that satisfies $\Sigma$ and that does not satisfy $\Delta_i$.
	Let $k\colon\mathbb N\to\mathbb N$ be a function increasing sufficiently fast.
	Let $\fa$ be the Fra\"iss\'e limit of the superposition of all the classes $\mathcal C(\fa_i,k(i))$. 
	Then $\fa$ is $\omega$-categorical and even has small orbit growth (see~\cite{TopologyIsRelevant}, or Lemma 5.6 in~\cite{TopoRelevant}). By definition, it is homogeneous, and as in~\Cref{prop:mc_core}, one sees that it is a model-complete core since the Fra\"{i}ss\'{e} limit of each of the superposed classes is. By~\Cref{prop:structontuples} and \Cref{prop:superposition}, $\fa$ satisfies $\overline{\Sigma}$.
	
	Now for any $i\in\omega$, since $\pol(\fa)$ pp-interprets $\fa_i$, we can refer to~\cite{Topo-Birk} to conclude that %
	 $\pol(\fa)$ does not satisfy $\Delta_i$.
\end{proof}

\begin{proof}[Proof of~\Cref{thm:three}]
For any fixed $n\geq 3$, the set of identities stipulating the existence of an $n$-ary pseudo-WNU function is not implied by the minor condition of containing a Siggers function over finite idempotent polymorphism clones. The structure $\relstr A$ obtained by applying~\Cref{thm:counterexample_full} cannot pp-interpret  $K_3$ with parameters, since the pseudo-Siggers identity satisfied by its polymorphisms prevents this~\cite{Topo}; here we use the fact that it is a model-complete core.

The example can be made to have a finite signature by using the Hrushovski-encoding from~\cite{HrushovskiEncoding, GJKMP-conf}: the encoding is a structure $\fb$ with a finite relational signature such that $\pol(\fb)$ satisfies every pseudo-variant of any minor condition that is satisfied in $\pol(\fa)$ by injections~\cite[Proposition~3.16]{HrushovskiEncoding}, in particular the pseudo-Siggers condition.
On the other hand, there exists a pp-interpretation of $\fa$ in   $\fb$ by~\cite[Proposition 3.13]{HrushovskiEncoding} combined with~\cite{Topo-Birk}, 
which yields the absence of pseudo-WNU operations in $\pol(\fb)$. 
The structure $\fb$  still has slow orbit growth~\cite[Proposition 3.12]{HrushovskiEncoding}. Since the structure $\fa$ is a homogeneous model-complete core, its encoding $\fb$ can be made so that it is a model-complete core as well: being homogeneous itself $\fa$ need not be homogenized for the encoding, and the new relations used in the  encoding can be enriched by relations for their complement (the proof of SAP in~\cite[Lemma 3.5]{HrushovskiEncoding} still works). Then all endomorphisms of $\fb$ are embeddings, and the \emph{decoding blow up} of $\fb$   is a homogeneous expansion by pp-definable relations; thus, $\fb$ is a model-complete core. Hence, as was the case before the encoding, it cannot pp-interpret $K_3$ with parameters by~\cite{Topo}. This situation would not change for an expansion of $\fb$ by orbits, since $\fb$ is a model-complete core and all orbits are pp-definable anyway. The signature of $\fb$ can further be reduced to a single relation by replacing its relations, say $R_1,\dots,R_k$, by %
$R_1\times\dots\times R_k$, giving the hypergraph from~\Cref{thm:three}.
\end{proof}

\section{Conclusion} \label{sec:conclusion}

We conclude with %
open problems related to our goal to further develop a structural theory amenable to infinite structures.

\subsubsection{Directed graphs} \Cref{thm:two} and \Cref{thm:one_a} apply to  digraphs, but only finite ones. On the other hand, \Cref{thm:one_b} applies to  infinite graphs, but the factor graph has to be symmetric. A tantalizing  direction is to generalize the latter theorem to non-symmetric factor digraphs of algebraic length 1. Our results suggest three approaches toward this goal: to improve our novel relational approach to the finite theorems, to further exploit our infinite-to-finite reduction technique applied in \Cref{thm:triangle_config}, and, finally, to develop from our proofs an alternative, algebraic approach. The last outcome would be the most desired one, since the algebraic techniques, so powerful in the finite (and substantially influenced by the non-symmetric generalization~\cite{Cyclic} of \cite{HellNesetril}), remain relatively weak in the infinite.

\subsubsection{A uniform identity}

\Cref{thm:ugly_terms} gives us, for every finite domain, a single identity  satisfied in %
 the polymorphism clone of any core structure on this domain which does not pp-interpret all finite structures.  It would be desirable to obtain a single identity for all finite domains, independently of the size. This could pave the way for a positive answer to the open problem stated in~\cite{Topo} and~\cite{BPP-projective-homomorphisms} whether  the failure of any non-trivial algebraic invariant for $\fa$ leads to  the existence  of a pp-interpretation of $K_3$~\cite{Topo-Birk}. In fact, this could  yield a single algebraic  invariant for $\omega$-categorical model-complete cores  witnessing the failure of the pp-interpretation (without parameters!) of EVERYTHING.

\subsubsection{Pseudo-WNUs} While the negative result in~\Cref{thm:three} might disappoint hopes 
sparked by the result of Barto and Pinsker~\cite{Topo} for an algebraic theory of polymorphism clones of $\omega$-categorical structures, not all is lost for the smaller subclass of first-order reducts of finitely bounded homogeneous structures for which a CSP complexity dichotomy has been conjectured. All complexity classifications within that class have shown the existence of pseudo-WNUs in the tractable cases (see e.g.~\cite{Book, infinitesheep} for a recent account of results), and recently more general algebraic methods have been developed for these classifications~\cite{SmoothApproximations}. One of the main open problems is whether this situation generalizes to the entire class.

\bibliographystyle{IEEEtran}
\bibliography{local}

\clearpage
\appendices
\crefalias{section}{appendix}
\crefalias{subsection}{appendix}
\section{Small auxiliary results} \label{app:aux}

\subsection{Proof of~\Cref{prop:OR} from~\Cref{sec:prelim}}

\ORprop*
\begin{proof}
  Observe that $\alpha$ and $B$ are both pp-definable from $\relstr{A}$ by $\OR(\alpha,\alpha)(x,y,x,y)$ and $\alpha(x,x)$, respectively. Hence, the factor map on $B$ yields a (1-dimensional) pp-interpretation of the relation $\OR(=_C,=_C)$ on the domain $C = A/\alpha$ in $\relstr{A}$. Since $\relstr{A}$ is a core, the same map even pp-interprets a  
  structure $\relstr{C}$ on the domain $C$ containing $\OR(=_C,=_C)$ such that $\relstr{C}$ is a core. Note that as $\alpha$ is proper, $C$ has at least two elements. 

  We claim that every polymorphism of $\relstr{C}$ depends on exactly one coordinate. First note that every polymorphism depends on at least one coordinate as $\relstr{C}$ is a core. The fact that any function preserving the relation  $\OR(=_C,=_C)$ depends on at most variable is well known, and shown e.g.~in~\cite[Lemma 6.1.17]{Book}; we give the argument for the convenience of the reader. Let $n\geq 2$, and suppose that  $f$ is a function of arity $n$ preserving $\OR(=_C,=_C)$ that depends on two distinct coordinates, say without loss of generality the first two. Then there exist elements $a_1',a_1,\ldots,a_n\in C$ and $b_2',b_1,\ldots,b_n\in C$ such that $a' = f(a_1',a_2,\dotsc,a_n)\neq f(a_1,\dotsc,a_n) = a$ and such that 
    and $b = f(b_1,\dotsc,b_n)\neq f(b_1,b_2',b_3,\dotsc,b_n) = b'$. Clearly $(a_i,a'_i,b_i,b'_i)\in\OR(=_C,=_C)$ for every $i\in\{1,\ldots,n\}$, while $(a,a',b,b')\notin\OR(=_C,=_C)$, a contradiction.

If $\relstr{D}$ is any finite structure, then the map which sends every polymorphism $f$ of $\relstr{C}$ to the projection on $D$ of the same arity and onto the unique coordinate on which $f$ depends is a clone homomorphism. Hence, $\relstr{C}$ pp-interprets $\relstr{D}$, and whence so does $\relstr{A}$.
  \end{proof}

\subsection{Proof of \Cref{lem:pp_from_rpp} from \Cref{sec:finite}}

\PPfromRPP*

We instead prove a more general version.

 \begin{lemma}
Let $\relstr{A}=(A;\arr)$ be a digraph, $g \in \aut(\relstr{A})$, $\mathcal{R}$ a set of relations on $A$, and $S$ a relation on $A$. Let $\arr' = \arr + g$. If $\arr'_{01}$ and 0-ranked relations from $\mathcal{R}$ rpp-define 0-ranked $S$, then $\arr$ and $\{g^i(R): R \in \mathcal{R}, i \in \Int\}$ pp-define $S$ . 
\end{lemma}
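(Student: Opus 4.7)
The plan is to convert an rpp-formula into an ordinary pp-formula by a coordinate-wise ``twist'' of assignments by powers of $g$. Concretely, given an rpp-formula $\varphi(y_1,\dots,y_n)$ with ranking $r\colon V\to\Int$ that rpp-defines the $0$-ranked relation $S$, I would build a pp-formula $\psi(y_1,\dots,y_n)$ over $\{\arr\}\cup\{g^i(R):R\in\mathcal R,\ i\in\Int\}$, and prove equivalence via the bijection on variable assignments $\val\mapsto \val'$ given by $\val'(x)=g^{-r(x)}(\val(x))$. Since every free variable $y_i$ has rank $0$ (because $S$ is $0$-ranked), $\val$ and $\val'$ agree on the free variables; combined with the fact that the twist is a bijection on assignments to the full variable set, this will handle the existential quantifiers essentially for free.

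The syntactic recipe for $\psi$ is: keep the same prefix of existential quantifiers and the same free variables; replace each atomic subformula $\arr'_{01}(x,y)$ (for which necessarily $r(y)=r(x)+1$) by $\arr(x,y)$; replace each atomic subformula $R(x_1,\dots,x_n)$ from $\mathcal R$ whose variables share a common rank $k$ (this is the only shape allowed since $R$ is $0$-ranked) by $g^{-k}(R)(x_1,\dots,x_n)$; and leave equality atoms unchanged, which is legal because the two variables in an equality atom of an rpp-formula must share a rank. The resulting $\psi$ is a pp-formula over the advertised signature.

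The verification amounts to checking that each atomic constraint is preserved by the twist. For the link with $\arr'_{01}$, I would unfold $\arr'=\arr+g$: the condition $\val(x)\arr'\val(y)$ is equivalent to $\val(x)\arr g^{-1}(\val(y))$, and substituting $\val(x)=g^{r(x)}(\val'(x))$ and $\val(y)=g^{r(x)+1}(\val'(y))$ and using that $g\in\aut(\relstr{A})$ exactly yields $\val'(x)\arr\val'(y)$. For an atomic subformula $R(x_1,\dots,x_n)$ with all ranks equal to $k$, the equivalence $(\val(x_1),\dots,\val(x_n))\in R\iff(\val'(x_1),\dots,\val'(x_n))\in g^{-k}(R)$ is immediate from the definition of $g^{-k}(R)$. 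Equality atoms are preserved because their variables share a rank, so the same power of $g$ is applied on both sides.

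There is no real obstacle here; the only point requiring attention is the arithmetic of the shifts for $\arr'_{01}$, where it is essential that $g$ is an automorphism of $\arr$ (so that the $g^{r(x)}$ on both sides can be cancelled). Finally, to recover the main-text statement from this more general appendix version, specialize $\mathcal R$ to the set of singleton parameter relations $\{\{a\}:a\in A\}$; since $g^i(\{a\})=\{g^i(a)\}$ is again a singleton, the ``with parameters'' clause is automatically preserved.
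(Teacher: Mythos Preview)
Your proof is correct and follows essentially the same approach as the paper: the same syntactic transformation (replace $\arr'$ by $\arr$ and each rank-$k$ occurrence of $R\in\mathcal R$ by $g^{-k}(R)$) and the same twist $\val'(x)=g^{-r(x)}(\val(x))$ on assignments. The only cosmetic difference is that the paper packages the verification for the $\arr$-atoms as the statement that $(g^{-i})_{i\in\Int}$ and $(g^{i})_{i\in\Int}$ are ranked homomorphisms between $(A;\arr'_{01})$ and $(A;\arr_{01})$, whereas you carry out the same computation directly.
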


\begin{proof}
Let $\phi'(x_1, \dots, x_n)$ be a 0-ranked rpp-formula over $\arr'_{01}$ and 0-ranked relations from $\mathcal{R}$ that defines $S$. Let $r$ be the ranking of $\phi'$. We change  each  conjunct $R'(x_1, \dots, x_k)$ with $R' \in \mathcal{R}$  to $R(x_1, \dots, x_k)$, where $R=g^{-r(x_1)}(R')$ (note that $r(x_1)=\cdots=r(x_k)$ since the relations in $\mathcal{R}$ are 0-ranked), and claim that this new formula $\phi$ is a pp-definition  of $S$ in $\relstr{A}$.

Assume that $\phi'(a_1, \dots, a_n)$. Consider a witnessing evaluation of all variables in $\phi'$ and apply to it the ranked mapping $(g^{-i})_{i \in \Int}$. The obtained evaluation of variables in $\phi$ satisfies (in $\relstr{A}$) each  conjunct involving $\arr$, since $(g^{-i})_{i\in\Int}$ is a ranked homomorphism from $(A;\arr'_{01})$ to $(A;\arr_{01})$, and each other conjunct by the definition of $\phi$. Therefore, it is a valid evaluation of $\phi'$ and we conclude that $\phi(g^{-r(x_1)}(a_1), \dots, g^{-r(x_n)}(a_n))$. But $r(x_1)= \dots = r(x_n)=0$, so $\phi(a_1, a_2, \dots, a_n)$. 

The converse implication $\phi(a_1, \dots, a_n) \Rightarrow \phi'(a_1, \dots, a_n)$ is proved analogically, using the ranked homomorphism $(g^i)_{i\in\Int}$ from $(A;\arr_{01})$ to $(A;\arr'_{01})$. 

\end{proof}

\section{Missing proofs from \Cref{sec:finite}}

\subsection{Identities for finite nonidempotent clones} \label{apps:uglyterms}

\ThmUglyTerms*

\begin{proof}
  Since the identity is nontrivial (it is not satisfied by the idempotent polymorphisms of $K_3$ -- the projections), the upward implication is trivial.
  
  For the other direction, we assume that $\relstr{A}$ does not pp-interpret some finite structure.
  Let $M$ be the subset of $A^{A^3}$ consisting of all the ternary elements of $\pol(\relstr A)$;
  note that the three projections $\pi^3_1,\pi^3_2,\pi^3_3$ belong to $M$ and that
  $\pol(\relstr A)$ acts naturally (coordinatewise) on $M$.
  The unary part of the clone is a group~%
  (since $\relstr A$ is a core)
  and, as a part of $\pol(\relstr A)$, acts coordinatewise on $M$ as well.
  Let $\group G$ be the image of this action in the group of transformations of $M$.

  In the next step, we define a digraph $\arr$ on $M$:
  we put $\pi^3_i\arr \pi^3_j$ for all $i\neq j$,
  then add $\alpha(\pi^3_1)\arr\pi^3_2$ for every unary $\alpha$ in $\pol(\relstr A)$ and finally close the relation $\arr$ under the action of $\pol(\relstr A)$ on the pairs of elements from $M$.

  We plan to apply~\Cref{thm:two_refined} to $(M,\arr)$ and $\group G$ but need to verify the assumptions first.
  The digraph $(M,\arr)$ is smooth as it includes the graph obtained by closing the triangle on the three projections under $\pol(\relstr A)$ and this one is smooth by construction. 
  Clearly, the group $\group G$ consists of automorphisms of $(M,\arr)$.
  Finally, to see that the digraph is linked we start with an arbitrary $f\in M$.
  Note that $f=f(\pi^3_1,\pi^3_2,\pi^3_3)$
  and we have
  \begin{equation*}
     f(\pi^3_1,\pi^3_2,\pi^3_3) \arr
     f(\pi^3_2,\pi^3_3,\pi^3_2) \larr
     f(\pi^3_1,\pi^3_1,\pi^3_1).
  \end{equation*}
  The last element arrows $\pi^3_2$~%
  (this edge is one of the generators as $f(\pi^3_1,\pi^3_1,\pi^3_1) = \alpha(\pi^3_1)$
  for appropriate $\alpha\in\group G$)
  and finally $\pi^3_1\arr\pi^3_2$.
  Thus every element is linked to $\pi^3_1$ and the whole digraph is linked; assumptions of~\Cref{thm:two_refined} hold.

  \Cref{thm:two_refined} cannot be satisfied with $\OR(\alpha,\alpha)$.
  Indeed in such case $(M,\arr)$ together with orbits of $\group G$ would pp-interpret, by~\Cref{prop:OR},
  every finite structure.
  As $\relstr A$ is a core the orbits of $\aut(\relstr A)$ are pp-definable in $\relstr A$ and thus orbits of $\group G$ are pp-definable as well.
  Putting things together we would be able to pp-interpret every finite structure in $\relstr A$ --- this directly contradicts our assumption.

  Thus~\Cref{thm:two_refined} holds due to a loop in $(M,\arr)$, and the loop is obtained by applying an operation from $\pol(\relstr A)$ to the generators of $(M,\arr)$. 
  This very operation~%
  (modulo standard, cosmetic changes)
  satisfies the identity in the statement of our theorem and the proof is finished.
\end{proof}

\subsection{Pseudoloops with pseudo assumptions} \label{apps:one_a_refined}

\OneARefined*

    \begin{proof}
      Let $\relstr{A}$ and $\group G$ be as in the statement, and let $\orbiteq$ be the $1$-orbit equivalence of $\group G$. We assume that $\relstr{A}/\group{G}$ has no loop and aim to prove the second item in the statement.
      
      Take a closed walk in $\relstr{A}/\group G$ of algebraic length 1:
      $$
      \quotient{a_0} \ \epsilon_0 \ \quotient{a_1} \ \epsilon_1 \ \dots \ \epsilon_{n-1} 
 \ \quotient{a_n}, \ \ \ \quotient{a_0} =\quotient{a_n} \enspace.
      $$
      Since $\arr$ is invariant under $\group G$, the representatives $a_i$ can be chosen so that in $\relstr{A}$ we have the walk
      $$
      a_0 \ \epsilon_0 \ a_1 \ \epsilon_1 \ \dots \ \epsilon_{n-1} \ a_n, \ \ \ a_0 \sim a_n \enspace,
      $$
      which is not necessarily closed. Pick $g \in \group G$ such that $g(a_n)=a_0$ and define $\arr' = \arr + g$ and $\relstr{A}'=(A;\arr')$. Let $D$ ($D'$, respectively) be the weak component of $\relstr{A}$ ($\relstr{A}'$, respectively)  containing $a_0$. We show that $\relstr{A}'|_{D'}$ is finite and has algebraic length 1.

      To see that $D'$ is finite, notice that $D$ is finite by the assumptions and that $a_n$ is in the same weak component of $\relstr{A}$ since $g(a_n)=a_0$. It follows that $D$ is invariant under $g$ which implies, by definition of $\arr'$, that $D' \subseteq D$. 
      
      In order to prove algebraic length 1, we take the ranking $\arr_{01}$ and regard $x_0 \ \epsilon_0 \ x_1 \ \dots \ x_n$ as a ranked formula over $\arr_{01}$. The ranking $r$ of the variables can be chosen so that $r(x_0)=0$ and $r(x_n)=1$ since the walk has algebraic length 1. By applying the ranked homomorphism $(g^i)_{i \in \Int} $ mapping $(A; \arr_{01})$ to $(A; \arr'_{01})$ (used already in \Cref{lem:pp_from_rpp}) to the above walk, we obtain a walk in $\relstr{A}'$ of algebraic length 1 from $g^{r(x_0)}(a_0) = a_0$ to $g^{r(x_n)}(a_n)=g(a_n)=a_0$, as required.

Now $\relstr{B} = \relstr{A}'|_{D'}$ is finite, smooth,  $k$-linked for some $k$ (see the final paragraph of \Cref{subsec:prelim_digraphs}), and has no loops, since a loop $a \arr' a$ in $\relstr{A}'$ means $a \arr g^{-1}(a)$ in $\relstr{A}$ -- a loop in $\relstr{A}/\group G$. Moreover, recall from \Cref{subsec:ranked} that $D'$ is rpp-definable with parameters from $\arr_{01}$. 

If $\kpower = (D')^2$,  \cite[Claim 3.11]{Cyclic}  shows that $\arr_{01}$ rpp-defines with parameters a proper subset of $D'$ on which $\relstr{A}_0$ has a smooth weak component of algebraic length 1. Such a component, call it $D''$, is rpp-definable with parameters from $\arr_{01}$. The digraph $\relstr{A}'|_{D''}$ is finite, smooth, $k$-linked and without loops. We replace $\relstr{B}$ by this strictly smaller digraph  and continue.
      
If, on the other hand, $\kpower \neq (D')^2$,  we  apply \Cref{thm:main_finite} to $\relstr{B}$ and the trivial ranked automorphism group $\group H$ containing only $(\id)_{i\in\Int}$. The orbits of the projection of $\group H$ are pp-definable from the singleton unary relations, so the theorem shows that $\arr_{01}$ rpp-defines with parameters either $D'' \varsubsetneq D'$  such that $\relstr{A}'|_{D''}$ is still finite, smooth, $k$-linked (and without loops), or it rpp-defines 0-ranked $\OR(\alpha,\alpha)$ for a proper equivalence $\alpha$ on a subset of $A$. In the former case, we replace $\relstr{B}$ by $\relstr{A}'|_{D''}$ and continue.
In the latter case,  \Cref{lem:pp_from_rpp} finishes the proof. 
\end{proof}

\section{Missing proofs from \Cref{sec:main_finite_proof}}

\subsection{Obtaining P-central or PQ-central relations}\label{Pinsker}
  
   This section is devoted to the proof of~\Cref{prop:gettingTSR}
  and~\Cref{prop:Rosenberg}.
    Both proofs are based on a single construction.
    We take a relation $R\subseteq A^n$ on a finite $A$ and define
    \begin{equation*}
      \upRel {R}{k}(x_1,\dotsc,x_{k}) = \exists y\ \bigwedge_{1\leq i_1<\dotsc <i_{n-1}\leq k} R(y,x_{i_1},\dotsc,x_{i_{n-1}}).
    \end{equation*}
    We will state several properties of the constructions, their proofs are straightforward and are omitted.
    First we note that if $R$ is totally symmetric, then so is $\upRel R k$.
    
    For  $n=2$  we have the following properties.
    \begin{itemize}
      \item $\upRel R{k}$ is ``have common in-neighbour''. 
      \item If $R$ is linked, then so is $\upRel R 2$.
      \item $\upRel R k$ is totally symmetric,
      \item If $R$ is subdirect, then $\upRel R 2$ is reflexive~%
        (which for arity 2 means the same as totally reflexive),
      \item If $\upRel R k$ is full, then $\upRel R {k+1}$ is totally reflexive.
      \item For any ranking of $R$, $\upRel R k$ is defined by a $0$-ranked formula in $R$.
    \end{itemize}
    If $R$ is totally reflexive of arbitrary arity $n$, then 
    \begin{itemize} 
      \item $\upRel R{n-1} = A^{n-1}$,
      \item $R\subseteq \upRel R n$ (use the first entry of a tuple as a witness for $y$), and 
      \item if $\upRel R{k} = A^k$, then $\upRel R{k+1}$ is totally reflexive.
    \end{itemize}

    \gettingTSR*

    \begin{proof}
      Proof is a direct application of the construction from previous subsection with $n=2$. Let $R$ be binary, subdirect, linked, but not central.  
      We find the smallest $k\geq 2$ such that $\upRel R {k}\neq A^{k}$
      and take $\upRel R {k}$ to be our TSR relation.
      By the properties above, we have that this is indeed a TSR relation and that the pp-formula is $0$-ranked.
      Moreover, if $k=2$, then $\upRel R 2$ is linked. 
      
      We need to show that $\upRel R {k}\neq A^{k}$ for some $k$.
      Indeed, $\upRel R {|A|} \neq A^{|A|}$,
      otherwise we would have $(a_1,\dotsc,a_{|A|})\in \upRel R {|A|}$~%
      (where the tuple lists all the elements of $A$).
      Therefore all the vertices have a common in-neighbour and $R$ is central,  a contradiction.
    \end{proof}
    
 \PorPQ*

    \begin{proof}
      Let $R$ be a proper TSR relation of arity $n$, which is additionally linked if $n=2$.
      In order to prove the statement, we repeatedly apply the construction from the previous subsection.
      In each step, if $\upRel R {n} = R$, we stop.
      Otherwise, we find the smallest $k\geq n$ such that $\upRel R {k}\neq A^{k}$;
      if there is no such $k$, we stop.
      In the remaining case, we substitute $\upRel R {k}$ for $R$ (thus  $n$ changes to $k$)
      and repeat.

      Note first that the starting $R$ is a proper TSR relation and so is the new $R$ after each replacement. This follows from the properties of the construction in the previous subsection. 

      The next thing to note is that the process must stop.
      Indeed,  if $k=n$, then we add at least one tuple to the relation,
      and otherwise we increase the arity.
      Since $A$ is finite, we must stop,
      as every totally reflexive relation over $A$ of arity $>|A|$ is full. 

      Let's say we stopped with $R$ of arity $n$. 
      We first consider the case that each $\upRel R k$ is full. 
      Then $\upRel R {|A|}$ is full, and in particular includes the tuple $(a_1,\dotsc,a_{|A|})$ 
      enumerating the elements of $A$.
      The witness~%
      (the value of $y$ from the pp-definition of $\upRel R k$),
      say $a$, for this tuple is in the P-center of $R$ and thus $R$ is P-central.
      Indeed, a tuple $(a,a_1,\dotsc,a_{n-1})$ is in $R$ whenever  $a_i=a_j$ for $i\neq j$, since $R$ is totally reflexive.
      If all the $a_i$ are different, the tuple is in $R$, since $a$ was a witness and $R$ is totally symmetric.

      In the remaining case, we have $\upRel R n = R$.
      If $n=2$, then $R$ is an equivalence. But
      this cannot happen, since the original relation was linked and proper and the iterative process keeps these properties until it increases the arity.

      Now we have $n>2$. We show $R$ is PQ-central.
      Let $\alpha$ be defined from $R$ as in~\Cref{def:2center}.
      Since $R$ is totally symmetric, $\alpha$ is symmetric,
      and since $R$ is totally reflexive, $\alpha$ is reflexive.
      It remains to prove that $\alpha$ is transitive.
      Take $(a,b),(b,c)\in\alpha$. 
      We will show that $(a,c,a_1,\dotsc,a_{n-2})\in\upRel R n$ for any $a_1,\dotsc,a_{n-2}$;
      since $\upRel R n = R$, this will conclude the proof.
      We take the value for $y$ to be $b$. By total symmetry, it is now enough to show 
      that if we switch one element $(a,c,a_1,\dotsc,a_{n-2})$ for $b$, the resulting tuple is in $R$.
      But this is obvious as the resulting tuple always either has  both $b$ and $c$, or both $b$ and $a$, 
      and $(a,b),(b,c)\in\alpha$.
    \end{proof}

\subsection{Second step, central case} \label{apps:step_two_b}

      \LemmaWalking*
      
      \begin{proof} 
      We first observe that from any $B' \subseteq A$ we can rpp-define the largest subset $B \subseteq B'$ such that $\relstr{A}|_B$ is smooth, the \emph{smooth part of $B$}. Indeed, $B$ consists of all the elements that have an infinite outgoing and an infinite incoming walk in $B'$. Since $B'$ is finite, a walk of length $|B'|$ suffices, and we can use the following ranked formula with free variable $x$. 
      \begin{align*}
        &\exists x_1,\dotsc,x_{|B'|},x'_1,\dotsc,x'_{|B'|}\\ &B'(x)\wedge\bigwedge_i B'(x_i)\wedge\bigwedge_i B'(x'_i) \wedge \\
        &x_1\rightarrow x_2\rightarrow\dotsb\rightarrow x_{|B'|} \rightarrow x\rightarrow x'_1\rightarrow\dotsb\rightarrow x'_{|B'|}
      \end{align*}
      
      It is therefore enough to rpp-define a proper $B'$ with nonempty smooth part. 
        Let $\varphi$ be a ranked formula in $\arr$  that defines the $k$-linkness equivalence for $|A|$-element digraphs (recall~\Cref{subsec:prelim_digraphs,subsec:ranked}). It is equal to $A^2$ as $\relstr{A}$ is linked.
        We name the variables $x_0$, \dots, $x_n$ so that $x_0$ and $x_n$ are the free variables and, for each $i$, $x_i \arr x_{i+1}$ or $x_{i+1} \arr x_i$ is a conjunct. Let $\psi_i$ be the ranked formula obtained from $\varphi$ by keeping only the variables $x_0$, $x_1$, \dots, $x_i$ (removing the other conjuncts), making $x_i$ free, and adding the conjunct $C(x_0)$. Let $C_i$ be the subset of $A$ defined by $\psi_i$.
        Clearly, $C_0=C$ and $C_n=A$.  It follows that, for some $i$, $\plusarr{C_i}=C_{i+1}=A$ or $\minusarr{C_i}=C_{i+1}=A$, while $C_i$ is proper. %
        
        We set $B'=C_i$ and observe that the smooth part of $B'$ is nonempty. Indeed, if $\plusarr{(B')}=A$, then there is an infinite backward walk in $B'$, and this implies at least one directed cycle in $\relstr{A}|_{B'}$. Each element of this cycle is in the smooth part.
        Noting that the case $\minusarr{(B')}=A$ is analogical, the proof is concluded.
      \end{proof}

      \StepTwoB*
      
     \begin{proof} 
        Let $C$ be the center of $R$ and $\psi$ be the rpp-formula defining $B$ from \Cref{lem:walking}, i.e., $B$ is a proper nonempty subset of $A$ and $\relstr{A}|_B$ is smooth.
        We assume that $\relstr{A}|_B$ is not $k$-linked; let $\alpha$ be the $k$-linkness equivalence relation on $B$. Our aim is to rpp-define $\OR(T,T)$ for a proper TSR $T$. 
        
        Take a 0-ranked rpp-formula $\varphi$ with two free variables defining the $k$-linkness relation. %
        Since $\relstr{A}$ is linked, it defines $A^2$. 
        Let $\varphi'$  be obtained from $\varphi$ by adding a conjunct $B(x)$ for every variable~%
        (i.e., both the quantified and the free variables).
        Now $\varphi'$ means ``being $k$-linked in $\rightarrow$ restricted to $B$'', so it is a $0$-ranked pp-definition of $\alpha$.

        Next, we define $\varphi''$ by replacing in $\varphi'$  each conjunct $B(x)$ on a quantified variable $x$ by the formula $\psi(x)$. Clearly, the formula $\varphi''$ still defines  $\alpha$. Also note that if we remove all the conjuncts $C(x)$ (they all come from quantified variables), then all the restrictions on the original quantified variables of $\varphi'$ are dropped (see the remark before the lemma), so the obtained formula defines $B^2$. 
        
        In $\varphi''$ we remove, one by one, the conjuncts $C(x)$. %
        At some point we arrive to a formula with a selected, quantified variable $x$ that defines a subset of $B^2$ strictly larger than $\alpha$, but if we added back the conjunct $C(x)$, it would define $\alpha$. By making $x$ free, we get an rpp-definition (using $B$, $C$, $\arr_{01}$) of a ternary relation $S$ such that   

        \begin{itemize}
          \item $\exists x\ S(y,y',x)$ is a subset of $B^2$ strictly larger than $\alpha$, and
          \item $\exists x\ S(y,y',x)\wedge C(x)$ is $\alpha$.
        \end{itemize}
        By shifting the ranking $r$ if necessary, we can assume that $r(1)=r(2)=0$, while  $r(3)$ can be different.%

        The remaining reasoning is somewhat similar to  what was done in \Cref{lem:orfromQ}. We choose $a,b\in B$ such that:
        \begin{itemize}
          \item $(a,b)\notin \alpha$ and
          \item there exists $d$ such that $S(a,b,d)$ and $I = d+R$ is maximal~(under inclusion) among similar sets defined for other $(a,b)\notin\alpha$ and $d$. %
        \end{itemize}
        Note that $I\neq A$ as $S(a,b,d)$ for $(a,b)\notin\alpha$ requires $d$ outside of $C$ by the second property of $S$.
        We assume that $A = \{1, 2, \dots, n\}$, $a=1$, and $b=2$.

    We choose $k$ and $T$ exactly as was done for PQ-centers. The number $k$ is such that  every $(k-1)$-element subset of $A$ is included in $g(I)$ for some $g \in \group G$, and  some $k$-element subset of $A$ is not included in $g(I)$ for any $g \in \group G$.
        Our $T$ will consists of tuples $(c_1,\dotsc,c_k)$ such that
        $\{c_1,\dotsc,c_k\}\subseteq g(I)$ for some $g$ from $\group{G}$. The relation $T$ is TSR and $T \neq A^k$. 
    
        Our rpp-formula with free variables $x_1^1,\dotsc,x_k^1, x_1^2,\dotsc,x_k^2$ is defined as follows.
        \begin{align*}
          &\exists y_0,y_1,y_2,X_1,X_2,z_1,\dotsc,z_n,z_1',\dotsc,z_n'\\  
          &y_0 = z_1 \wedge y_2 = z_2 \wedge
          O(z_1,\dotsc,z_n) \wedge O(z_1',\dotsc,x_n') \wedge\\
          &\bigwedge_{j=1}^2 \big( S(y_{j-1},y_j,X_j) \wedge \bigwedge_{i=1}^k R(X_j,x_i^j) \wedge 
          \bigwedge_{i\in I} R(X_j,z'_i) \enspace.
        \end{align*}
        By shifting the ranking $r$ if necessary, we have  $r(y_i) = r(z_i) = 0$ and $r(x_i^j) = r(z_i') =l$~%
        (for some suitably chosen $l$).

        It remains to verify that the formula works.
        First take $(a_1,\dotsc,a_k,b_1,\dotsc,b_k)$
        such that all entries of $(a_1,\dotsc,a_k)$ are in $g(I)$ for some $g \in \group G$~%
        (again, the case of $b_i$'s completely analogous).
        Take $\{g_i\}_i\in\group H$ such that $g_l=g$.

        We evaluate $y_0 \mapsto g_0(1)$, $y_1,y_2\mapsto g_0(2)$,  $X_1\mapsto g_{r(X_1)}(d)$, $X_2$ to $d'$, an element of $C$  
        such that $S(g_0(2),g_0(2),d')$ holds.
        Finally, we evaluate $z_i\mapsto g_0(i)$ and $z'_i\mapsto g_l(i)$. 
        
        The first four conjuncts hold by construction.
        Let's focus on the last, complex conjunct. 
        For $j=2$, we evaluated $X_2$ as $d'$ so that the $S$-conjunct holds and $d'$ is in the center of $R$, and thus the whole conjunct holds for any $b_i$'s and any values of $z'_i$.
        For $j=1$, 
        note that $S(1,2,d) \wedge R(d,i)$ and thus $S(g_0(1),g_0(2),f_{r(X_1)}(d))\wedge R(g_{r(X_1)}(d),g_l(i))$ 
        for every $i\in I$ and therefore the conjunct holds as well.%

        For the opposite direction, let $\val$ denote a satisfying evaluation of variables. 
        Define $g_l:i\mapsto z'_i$ and extend it to $\{g_i\}_i\in\group H$.
        A new evaluation of variables is obtained by composing the old one with the inverse of $\{g_i\}_i$: $\val'(x) = g_{r(x)}^{-1}(\val(x))$ 
        The map $\val'$ satisfies all the conjuncts in a formula and,
        additionally, $\val'(z'_i)=i$ for all $1\leq i \leq n$.

        Note that $(\val'(y_0),\val'(y_2))\notin\alpha$.
        Indeed we have $O(\val'(y_0),\val'(y_2),\val'(z_3),\dotsc,\val'(z_n))$, i.e., there is an element of $\group G$
        such that $1\mapsto \val'(y_0)$ and $2\mapsto \val'(y_2)$.
        An inverse of this function is also in $\group G$, but then $(\val'(y_0),\val'(y_2))\in\alpha$
         would imply $(1,2)\in\alpha$, and we know this is not the case.

        Therefore for $j$ equal to $1$ or $2$ we have $(\val'(y_{j-1}),\val'(y_j))\notin\alpha$ and since
        \begin{equation*}
          S(\val'(y_{j-1}),\val'(y_j),\val'(X_j)) \wedge R(\val'(X_j),\val'(z_i')=i) 
        \end{equation*}
        holds for every $i\in I$ we conclude that
        \begin{itemize}
          \item $\val'(X_j)$ is not in $C$, and 
          \item $\val'(x_i^j)\in I$ for every $i\in\{1,\dotsc,k\}$, as otherwise 
            $(\val'(y_{j-1}),\val'(y_j))$ together with $\val'(X_j)$ would be a strictly better choice than $(a,b)$ and $d$.
        \end{itemize}
        As $\val(x_i^j) = g_l(\val'(x_i^j))$ we proved the second inclusion and the lemma holds.
      \end{proof}

\subsection{Improving OR} \label{apps:step_three}

     We start by proving \ref{prop:gettingPorPQor}.
     The proof is based on the following auxiliary result.

      \begin{lemma}\label{prop:changeOR}
        Let $R,S,R'$ be relations such that $R'$ has an equality-free pp-definition in $R$.
        Then $\OR(R,S)$ pp-defines $\OR(R',S)$.
      \end{lemma}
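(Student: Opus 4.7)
The plan is to take the given equality-free pp-definition of $R'$ over $R$ and mechanically rewrite each $R$-atom as an $\OR(R,S)$-atom, introducing fresh variables to carry the second disjunct. Concretely, I would write
$$R'(y_1,\dots,y_m)\equiv\exists z_1,\dots,z_k\ \bigwedge_{i=1}^{\ell} R(\vec v_i),$$
where each $\vec v_i$ is a tuple of variables drawn from $\{y_1,\dots,y_m,z_1,\dots,z_k\}$ and no equality atoms appear. Letting $n$ denote the arity of $S$, I introduce fresh variables $w_1,\dots,w_n$ and propose
$$\psi(y_1,\dots,y_m,w_1,\dots,w_n)\equiv\exists z_1,\dots,z_k\ \bigwedge_{i=1}^{\ell}\OR(R,S)(\vec v_i,w_1,\dots,w_n)$$
as the pp-definition of $\OR(R',S)$ from $\OR(R,S)$.

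The verification then splits into two routine directions. For the forward direction, if $R'(\vec y)$ holds then its original witnesses $\vec z$ make every $R(\vec v_i)$ true, hence every $\OR(R,S)$ atom true; and if instead $S(\vec w)$ holds, any assignment to the $z_i$ (from a nonempty domain) satisfies all the $\OR$-atoms via their $S$-component. For the backward direction, fix witnesses $\vec z$ making $\psi$ true and split on whether $S(\vec w)$ holds: if it does, we are done, and if not, the $S$-side of every conjunct fails, which forces $R(\vec v_i)$ for every $i$, and these witnesses then satisfy the original formula, giving $R'(\vec y)$.

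The only delicate point, and the reason I expect the hypothesis to be phrased the way it is, is the equality-free condition. An equality atom between a free variable and another variable in $\phi$ cannot be recovered after replacing $R$-atoms by $\OR(R,S)$-atoms, since the $S$-side can always be satisfied independently and hence cannot \emph{force} any identification of arguments. Simply deleting an equality atom would in general enlarge the defined relation, so the construction truly needs $\phi$ to consist of $R$-atoms only. Once that assumption is in place, the construction above is essentially a substitution and the argument is straightforward; I do not foresee any additional obstacle.
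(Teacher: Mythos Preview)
Your proposal is correct and follows essentially the same construction and verification as the paper's proof: rewrite each $R$-atom in the equality-free pp-definition of $R'$ as an $\OR(R,S)$-atom sharing a common block of fresh free variables for the $S$-side, then check both inclusions by a case split on whether $S(\vec w)$ holds. Your closing remark on why the equality-free hypothesis is needed is a useful addition not spelled out in the paper.
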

      \begin{proof} %
        Let $\phi$ be an equality free formula defining $R'$ from $R$.
        To obtain a pp-definition of $\OR(R',S)$ we rewrite $\phi$. 
        Say the arity of $R$ is $n$, of $R'$ is $n'$, and of $S$ is $k$. We start with $\phi$ and then
        \begin{itemize}
          \item add $k$ fresh free variables $y_i$, and
          \item substite each occurrence of $R(z_1,\dotsc,z_n)$ with $\OR(R,S)(z_1,\dotsc,z_n,y_1,\dotsc,y_k)$
        \end{itemize}
        Let the new formula define relation $T$. We show $T = \OR(R',S)$ by verifying the two inclusions.

        If $T(a_1,\dots,a_{n'},b_1,\dotsc,b_k)$,  we have two cases.
        If $S(b_1,\dotsc,b_k)$, then we are done~%
        (as the longer tuple is certainly in $\OR(R,S)$).
        On the other hand, if it is not, then every occurrence of 
        $\OR(R,S)(z_1,\dotsc,z_{n},y_1,\dotsc,y_k)$
        forces $R(z_1,\dotsc,z_n)$ and we get $\OR(R',S)(a_1,\dotsc,b_k)$ due to $R'(a_1,\dotsc,a_{n'})$.

        Let $\OR(R',S)(a_1,\dotsc,a_{n'},b_1\dotsc,b_k)$.
        If this happens due to $R'(a_1,\dotsc,a_{n'})$, we evaluate quantified variables according to $\phi$ defining $R'$ from $R$. Then
         every conjunct of our new formula holds due to $R$. 
        If, on the other hand, $\OR(R',S)(a_1,\dotsc,a_{n'},b_1,\dotsc,b_k)$ holds due to $S(b_1,\dotsc,b_k)$,
        then every conjunct of the new relation holds no matter what values the first $n$ variables take. %
      \end{proof}

       \GettingPorPQor*

      \begin{proof}
        Let $n$ be the arity of $T$;
        if $n=1$ we already have $\OR(S,S)$ for unary $S=T$.

        If $n=2$ we have two cases: we know that $T$ is reflexive and symmetric,
        but we don't know if it is linked.
        If it is not linked we let $\alpha$ to be the transitive closure of $T$;
        clearly $\alpha$ is an equivalence on $A$ and can be defined without equality from $T$.
        We apply~\Cref{prop:changeOR} twice~%
        (once to the first $T$ and then, after permuting coordinates, to the second $T$)
        to get $\OR(\alpha,\alpha)$ --- we are done.
  
        In the final case $n=2$ and $T$ is linked or $n>2$;
        in this case the equality free formula is provided by~\Cref{prop:Rosenberg},
        and applying~\Cref{prop:changeOR} twice we get $\OR(S,S)$ for a TSR P-central or TSR PQ-central $S$.
      \end{proof}

   Before the final lemma of the proof, we observe   that $\OR(C,C)$, where $C$ is unary,  pp-defines $C$ (by the formula $C(x,x)$) and $\OR(C^k,C^k)$ for any $k$ by the formula
      \begin{multline*}
        \OR(C^k,C^k)(x_1,\dotsc,x_k,y_1,\dotsc,y_k)\equiv \\
        \bigwedge_{i=1}^k \bigwedge_{j=1}^k \OR(C,C)(x_i,y_j).
      \end{multline*}

\FinalOR*

      \begin{proof}
        We begin as in the proof of \Cref{lem:orfromcenter}. We take a rpp-formula $\psi$,
        which is provided by~\Cref{lem:walking},
        and uses $C$~%
        (defined by $\OR(C,C)(x,x)$)
        and $\arr_{01}$  
        to define 
        $\emptyset \neq B \varsubsetneq A$ such that $\relstr{A}|_B$ is smooth.
        We assume that $\relstr{A}|_B$ is not $k$-linked and denote $\alpha'$ be the $k$-linkness equivalence relation on $B$. 
        
        We define the formula $\varphi''$ as before and start releasing the conjuncts $C(x)$ for quantified $x$. This time, however, with a different criterion for termination: we stop when our binary relation becomes linked. By making the key variable free, we get a ternary relation $S$ such that
        \begin{itemize}
          \item $\exists X\ S(y,y',X) \wedge C(X)$ is contained in a non-trivial equivalence on $B$, and
          \item $\exists X\ S(y,y',X)$ is not contained in any proper equivalence on $B$.
        \end{itemize}
        The new formula is ranked with the first two variables ranked $0$ (after shifting if necessary). %

        Both binary relations $\exists X\ S(y,y',X) \wedge C(X)$ and $\exists X\ S(y,y',X)$ are %
     reflexive on $B$ as both contain $\alpha'$ and
        the smallest equivalence containing a reflexive relation is pp-definable from it.
        In fact, we can find a formula in a binary symbol $Q$ which defines the symmetric and transitive closure for both of the binary relations at the same time~%
        (depending on the interpretation of the symbol $Q$).
        We replace each binary conjunct of $Q(y,y')$ by $S(y,y',X_i)$, with a fresh $X_i$, and obtain an rpp-definition of a relation $U$ such that 
        $\exists X_1,\dotsc,X_p\ U(y,y',X_1,\dotsc,X_p)$ defines $B^2$,
        while $\exists X_1,\dotsc,X_p\ U(y,y',X_1,\dotsc,X_p) \wedge \bigwedge_i C(X_i)$ defines a proper equivalence on $B$, call it $\alpha$.
        The rank of the first two variables in $U$ is $0$ and all the $X_i$  have equal rank.

        Recall that $\OR(C,C)$ pp-defines $\OR(C^p,C^p)$.
        Now we have 
        \begin{align*}
          \OR(\alpha,\alpha)\equiv
          &\exists X_1,\dotsc,X_p, Y_1,\dotsc,Y_p \\
          &\OR(C^p,C^p)(X_1,\dotsc,X_p,Y_1,\dotsc,Y_p) \wedge \\
           &U(y,y',X_1,\dotsc,X_p) \wedge U(z,z',Y_1\dotsc,Y_p).
        \end{align*}
        and the relation is $0$-ranked --- we are done.
      \end{proof}

\section{Missing proofs from \Cref{sec:infinite}}

\subsection{Infinite pseudoloop lemma}  \label{apps:one_b_refined}

\OneBRefined*

Let $\relstr{A}$ and $\group{G}$ be as in the statement and assume additionally that $\relstr{A}/\group{G}$ has no loop. 
We will show that $\arr$ and orbits of $\group{G}$ 1-dimensionally pp-interpret digraph $\relstr{B}$ such that $\relstr{B}$ with the natural action, call it $\group{H}$, of $\group{G}$ on the universe satisfy the following conditions.
\begin{itemize}
    \item $\relstr{B}$ is smooth,
    \item $\relstr{B}/\group{H}$ has algebraic length 1,
    \item $\relstr{B}/\group{H}$ has no loop, and
    \item all weak components of $\relstr{B}$ are finite.
\end{itemize}
In this situation, \Cref{thm:one_a_refined} finishes the proof.

To construct $\relstr{B}$, 
we start by applying~\Cref{thm:triangle_config} and obtain $k$ and a triangle configuration $(P,P_0,P_1,P_2)$ for $(A;\kpower)$ and $\group G$.
Let $\sim$ be a maximal equivalence relation on $P$ which is
pp-definable from $\kpower$ and the four sets and which is contained in  $P_0^2 \cup P_1^2 \cup P_2^2$ (the equivalence with equivalence classes $P_0$, $P_1$, $P_2$; this one, unfortunately, need not be pp-definable). Such an equivalence relation exists since every (pp)-definable equivalence relation is a union of 2-orbits of $\group{G}$, and by our assumption $\fa$ has finitely many 2-orbits. 
We set $\relstr{B} = (\relstr{A}|_P)/{\!\sim}$ and observe that $\relstr{B}$ is indeed smooth (as $\plusarr{P}$ and $\minusarr{P}$ both contain $P$), $\relstr{B}/\group{H}$ has algebraic length 1 (as $\relstr{A}|_P/\group{G}$ is non-bipartite), and $\relstr{B}/\group{H}$ has no loop (as $\sim$ is contained in $P_0^2 \cup P_1^2 \cup P_2^2$ and the $P_i$ are independent).
It is therefore enough to verify the fourth condition, that all weak components of $\relstr{B}$ are finite.

In order to do that, we will first make several simplifying assumptions.
First, we can clearly assume that $k=1$ (by simply renaming the relation), that $A=P$ (by working with $\relstr{A}|_P$ instead of $\relstr{A}$), and that $\sim$ is the trivial equivalence (by working with $\relstr{A}/{\!\sim}$ instead of $\relstr{A}$). Now we are in the following position.
\begin{itemize}
\item $(A;\arr)$ is smooth, $A$ is the disjoint union of the $P_i$,
\item $\group{H}$ is a subgroup of $\aut(A;\arr,P_1,P_2,P_3)$ that has finitely many 2-orbits,
\item $\plusarr{P_i} = P_{i-1} \cup P_{i+1}$ for each $i$, and
\item there is no nontrivial equivalence relation $\sim$ which is pp-definable from the $P_i$ and $\arr$, and which is contained in $P_0^2\cup P_1^2\cup P_2^2$.
\end{itemize}

The crucial fact is the following. 

\begin{lemma} \label{lem:asdf}
	If $v,v' \in P_i$ for some $i$, and they have a common in-neighbor or a common out-neighbor, then $v=v'$.
\end{lemma}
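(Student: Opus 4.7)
The strategy is to argue by contradiction and produce a non-trivial pp-definable equivalence relation on $A$ contained in $\bigcup_j P_j^2$, contradicting the maximality of the trivial $\sim$. Suppose there exist distinct $v, v' \in P_i$ with a common out-neighbor; the case of a common in-neighbor is handled identically by reversing all arrows, since the triangle-configuration hypotheses are symmetric under this reversal.

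The first step is to define the pp-definable binary relation
\[
R(x,y) \;\equiv\; P_i(x)\wedge P_i(y)\wedge \exists z\,(x\arr z \wedge y\arr z),
\]
which is symmetric, reflexive on $P_i$ by smoothness of $\relstr{A}$, contained in $P_i^2$, and satisfies $R(v,v')$ by assumption. Since $\group H$ has only finitely many 2-orbits on $A$, the set of $\group H$-invariant binary relations (and hence of pp-definable ones) is finite, so the ascending chain $R \subseteq R^2 \subseteq R^3 \subseteq \cdots$ stabilizes at some step $N$, and the transitive closure $\widetilde R = R^N$ is therefore pp-definable. It is a pp-definable equivalence relation on the subset $P_i$, contained in $P_i^2\subseteq\bigcup_j P_j^2$, and non-trivial because $\widetilde R(v,v')$ holds with $v \neq v'$.

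The main obstacle is to turn $\widetilde R$ into a non-trivial pp-definable equivalence on all of $A$: its set-theoretic extension $\widetilde R\cup\Delta_A$ is the right candidate, but pp-formulas admit no disjunction, so adding the diagonal on $A\setminus P_i$ is not immediate. The intended resolution is to perform the same construction in each of the three blocks, obtaining pp-definable equivalences $\widetilde R_0,\widetilde R_1,\widetilde R_2$, and to combine them into a single pp-formula by exploiting the rigid geometry of the triangle configuration. Specifically, the conditions $\plusarr P_j = P_{j-1}\cup P_{j+1}$ together with the independence of each $P_j$ force any pp-gadget built from common out/in-neighbors together with prescribed additional edges among those neighbors to succeed on a pair $(x,y)$ only when $x$ and $y$ lie in a common block, thereby encoding the three-way disjunction $\widetilde R_0\cup\widetilde R_1\cup\widetilde R_2$ conjunctively. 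Once such a formula is exhibited, it defines a non-trivial pp-definable equivalence relation on $A$ contained in $\bigcup_j P_j^2$, contradicting the hypothesis and establishing the lemma.
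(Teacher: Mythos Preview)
Your proposal has a genuine gap at exactly the point you flag as ``the main obstacle''. You correctly observe that $\widetilde R$ is only an equivalence on $P_i$, not on $A$, and that promoting it to an equivalence on $A$ contained in $\bigcup_j P_j^2$ is the real difficulty. But your proposed resolution is only a promise: you assert that some pp-gadget will encode the disjunction $\widetilde R_0\cup\widetilde R_1\cup\widetilde R_2$ because common-neighbor gadgets ``succeed on a pair $(x,y)$ only when $x$ and $y$ lie in a common block'', and you never exhibit the gadget. Worse, that assertion is false for the basic building block: if $x\in P_0$ and $y\in P_1$, then since $\plusarr{P_0}=P_1\cup P_2$ and $\plusarr{P_1}=P_0\cup P_2$, the pair $x,y$ can perfectly well share an out-neighbor $z\in P_2$. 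So the common-out-neighbor relation already contains cross-block pairs, and the mere ``rigid geometry'' of the configuration does not by itself prevent this.

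The paper's proof is precisely the missing gadget, and designing it is the entire content of the lemma. For each $j\in\{0,1,2\}$ one sets $\overline{P_j}=A\setminus P_j=\plusarr{P_j}$ (hence pp-definable) and
\[
S_j(x,y)\;\equiv\;\exists z_1,z_2,z_3\;\bigl(x\leftarrow z_1\arr z_2\leftarrow z_3\arr y\bigr)\wedge z_1,z_2\in\overline{P_j}\wedge z_3\in\overline{P_{j+1}}.
\]
A case analysis shows that each $S_j$ is reflexive on \emph{all} of $A$ and satisfies $S_j(v,v')\wedge S_j(v',v)$, while a chain-of-implications through the block constraints shows $S_j\cap(P_{j-1}\times P_j)=\emptyset$. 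Taking $S=\bigcap_j S_j\cap\bigcap_j S_j^{-1}$ therefore yields a pp-definable, reflexive (on $A$), symmetric relation contained in $\bigcup_j P_j^2$ with $S(v,v')$; its transitive closure is then the forbidden nontrivial equivalence. The point is that one does not attempt to union three block-local relations at all---one builds a single relation that is globally reflexive from the start, and excludes each of the six cross-block types by intersecting carefully chosen $S_j$'s and their inverses. Your outline does not supply this step.
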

	
	\begin{proof}
	We assume that $v,v' \in P_1$ and $u \in P_0$ are such that $v\leftarrow u\arr v'$, and we aim to show that $v=v'$. The cases of  different positions and common out-neighbor are proved analogically.
	
	We write $\overline{P_i}$ for $A\setminus P_i$. Note that $\overline{P_i}=P_{i-1} \cup P_{i+1} = P_i^+$, in particular, the subsets $\overline{P_i}$ are pp-definable from $\arr$ and $P_i$.

	We consider the following binary relations $S_j$, $j = 1,2,3$. 
\begin{align*}
S_j(x,y) \equiv \exists& z_1,z_2,z_3 \ \ x\leftarrow z_1\arr z_2\leftarrow z_3\arr y\\ & \wedge \ z_1\in \overline{P_j}\wedge z_2\in \overline{P_j}\wedge z_3\in \overline{P_{j+1}} \enspace.
\end{align*}

	We claim that for all $j$ the following hold.
	
\begin{enumerate}[label=(\alph*)]
\item The relation $S_j$ is reflexive.
\item $S_j(v,v')$ and $S_j(v',v)$.
\item $S_j\cap (P_{j-1}\times P_j)=\emptyset$.
\end{enumerate}

	To prove item (a), let $a\in A$ be arbitrary. The evaluation witnessing $S_j(a,a)$ is found as follows. 

\begin{itemize}
\item If $a\in P_j$, we pick $z_1\in P_{j-1}$ and $z_2\in P_{j+1}$ such that $a\leftarrow z_1\arr z_2$, and we put $z_3 = z_1$. The choices are possible by $\plusarr{P_{j-1}} \supseteq P_j \ni a$ and $\minusarr{P_{j+1}} \supseteq P_{j-1} \ni z_1$.
\item If $a\in P_{j+1}$ we pick $z_1\in P_{j-1}$ such that $a\leftarrow z_1$, and we put $z_2 = a$ and $z_3 = z_1$.
\item If $a\in P_{j-1}$ we pick $z_1\in P_{j+1}$ and $z_3\in P_j$ such that $a\leftarrow z_1\arr a$ and $a\leftarrow z_3\arr a$, and we put $z_2=a$.
\end{itemize}

	In order to prove item (b), first observe that the role of $v$ and $v'$ is symmetric, so it is enough to show that $S_j(v,v')$. This time, we distinguish 2 cases.
	
\begin{itemize}
\item If $j=0$ or $j=2$, then we pick $z_1=u, z_2=v'$ and $z_3\in P_2$ such that $v'\leftarrow z_3$.
\item If $j=1$, then we pick $z_1=z_3=u$ and $z_2\in P_2$ such that $u\arr z_2$.
\end{itemize}

	In order to prove (c), let $a\in P_{j-1}$ and  $S_j(a,b)$. We will show that $b \not\in P_j$. Consider again an evaluation witnessing $S_j(a,b)$. Then, since $a \larr z_1$ and $z_1 \in \overline{P_j}$, we have $z_1\in \overline{P_{j-1}}\cap \overline{P_j}=P_{j+1}$. Continuing similarly, we obtain that $z_2\in P_{j-1}$, then $z_3\in P_j$, and finally $b\in \overline{P_j}$. 
	
	\smallskip
	
	Now, when properties (a), (b), and (c) are verified, we can finish the proof as follows.
	We define $S =  \bigcap_{j}S_j\cap \bigcap_{j}S_j^{-1}$. By definition, $S$ is symmetric. From (b) it follows that $S(v,v')$, and from (c) it follows that  $S\subseteq P_0^2\cup P_1^2\cup P_2^2$. It is also clear from our construction that $S$ is pp-definable from $\arr$ and the $P_i$. Let $\sim$ be the transitive closure of $S$. The reflexivity of $S$ implies that   $S\subseteq 2S \subseteq 3S \subseteq \dots$. On the other hand, all $kS$ are invariant under $\group H$, so they are unions of 2-orbits of $\group H$. Since $\group H$  has finitely many 2-orbits, we have  $\sim=S^{\circ k}$ for some finite number $k$. In particular, $\sim$ is  pp-definable from $\arr$ and the $P_i$. There are no such nontrivial equivalences, therefore $\sim$ is trivial, whence  $v=v'$, which finishes the proof.
	\end{proof}
	
	We are ready to finish the proof of \Cref{thm:one_b_refined} by showing that  every weak component of $(A;\arr)$ is finite.  Let $C$ be a weak component of $(A,\arr)$, and let $c\in C$ be arbitrary. We claim that, for all $a,b\in C$, if the pairs $(c,a)$ and $(c,b)$ are in the same orbit of $\group H$, then $a=b$. Since $\group H$ has only finitely many 2-orbits, this fact implies that $C$ must be finite. Let $\alpha\in \group H$ be such that $\alpha(c)=c$ and $\alpha(a)=b$. Let  $w$ be a walk $c = c_0\ \epsilon_0\ \dots \  \epsilon_{n-1} \ c_n=a$ from $c$ to $a$. Then $\alpha(w')=\alpha(c_0) \ \epsilon_0 \ \dots \ \epsilon_{n-1}\ \alpha(c_n)$ is an oriented walk from $c$ to $b$. Since $P_i$ are invariant under $\group{G}$, each $c_i$ is in the same set $P_{j_i}$ as $\alpha(c_i)$. Lemma~\ref{lem:asdf} now gives us $c_1 = \alpha(c_1)$, $c_2 = \alpha(c_2)$, and, eventually, $a=c_n=\alpha(c_n)=b$, and the proof is concluded,

\subsection{Legs are adjacent} \label{apps:cutoff}

We prove here one of the lemmata in the proof of \Cref{thm:triangle_config} from \Cref{subsec:triangle_config}.

cutoff*

\begin{proof}
	We prove $\psarr{V_1} \supseteq V_0$, there is no difference for other $i,j$. That is, we will show that each $v \in V_0$ has a neighbor in $V_1$.

	We denote $P_{i} =  \bigcup{U_i}$ and 
	define a subset $B \subseteq A$ in $\relstr{A}$ by the following formula. 
\begin{align*}
B(x) \equiv  \exists& y_1,y_2,y_3 \ \ x\arr y_1\leftarrow y_2\arr y_3\leftarrow x\\ &\wedge \plusarr{P_1}(y_1) \wedge \plusarr{P_1}(y_2) \wedge  \plusarr{P_0}(y_3).
\end{align*}
	Note that the sets $P_i$ are pp-definable from $\arr$ and 1-orbits $\fa$. Therefore, so is $B$.
	
	Recall that there exists a triangle in $\OG$ with vertices $u_i \in U_i$.
	We claim that $B$ contains all elements of the orbits of $u_i \in U_i$.
	Indeed, for $x\in u_0\cup u_1$, a witnessing evaluation is  $y_1=y_3 \in \plusarr{x}\cap u_2$ and $y_2\in \minusarr{y_1}\cap u_0$; for $x\in u_2$ a witnessing evaluation is $y_1\in \plusarr{x}\cap u_0$, $y_2=x$, and $y_3\in \plusarr{x}\cap u_1$. The same argument as in~\Cref{lem:tree_def_bipartite} gives us $B=A$.

	Now let $x\in v \in V_0$ be arbitrary. Since $B=A$, we know that $B(x)$. Take  $y_1,y_2,y_3\in A$ witnessing this and let $v_1,v_2,v_3 \in U$ be their $\group G$-orbits. We have $v \sarr v_1 \sarr v_2 \sarr v_3 \sarr v$, $v_1,v_2 \in \psarr{U_1}$, and $v_3 \in \psarr{U_0}$. We claim that $v_1\in V_1$. 
	If not, then $v_1 \in U_0$ as $v_1 \in \psarr{U_1} = U_0 \cup U_2 \cup V_1$ and it is a neighbor of $v \in V_0$. 
	Then $v_2 \in U_2$ as $v_2 \in \psarr{U_1}$ and it is a neighbor of $v_1 \in U_0$. Finally, $v_3 \in U_1$ as $v_2 \in \psarr{U_0}$ and it is a neighbor of $v_2 \in U_2$. But then $v$ and $v_3$ cannot be neighbors, a contradiction. 
\end{proof}

\subsection{Weakest pseudoloop conditions} \label{apps:pseudo_conditions}

\ThmPseudoConditions*

	We first formulate a general statement which guarantees the satisfaction of a pseudoloop condition in a model-complete core.
	
\begin{lemma}\label{use_loops}
	Let $\relstr A$ be an $\omega$-categorical model-complete core, and let $\relstr C$ be a digraph such that the following holds. 

	For all digraphs $\relstr B=(B,\arr)$ and groups $\group G\leq \aut(\relstr B)$ with finitely many 1-orbits such that $\relstr B/{\group G}$ contains a (not necessarily induced) subgraph isomorphic to $\relstr C$ and $\relstr A$ pp-interprets $\relstr B$ together with the 1-orbits of $\group G$, $\relstr B$ has a pseudoloop modulo $\group G$.
	
	Then $\relstr A$ satisfies the $\relstr C$-pseudoloop condition.
\end{lemma}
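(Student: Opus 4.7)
The plan is to prove the contrapositive: assuming $\pol(\relstr A)$ does not satisfy the $\relstr C$-pseudoloop identity, construct a pair $(\relstr B,\group G)$ meeting all the preconditions of the stated hypothesis yet admitting no pseudoloop modulo $\group G$; this contradicts the hypothesis and thereby establishes the conclusion.

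For the construction I would let $n=|V(\relstr C)|$ with edges $(i_1,j_1),\dots,(i_m,j_m)$, take $B$ to be a suitable power of $A$ labeled by $V(\relstr C)$---most naturally $B=A^n$ with $\group G=\aut(\relstr A)$ acting diagonally, or a variant like $B=A\times V(\relstr C)$ with a group action mixing $\aut(\relstr A)$ and permutations of labels---and define $\arr$ to be the smallest binary relation on $B$ which is invariant under the componentwise action of $\pol(\relstr A)$ and which contains certain canonical pairs encoding each edge of $\relstr C$ (for instance, pairs obtained by projecting a fixed generating $n$-tuple onto the edge endpoints $i_k,j_k$). By $\omega$-categoricity of $\relstr A$, the group $\group G$ has only finitely many 1-orbits (essentially the $n$-orbits of $\aut(\relstr A)$), and by the Inv--Pol Galois connection for $\omega$-categorical structures, the relation $\arr$ is pp-definable in $\relstr A$; combined with pp-definability of the 1-orbits (since $\relstr A$ is a model-complete core), this supplies the required pp-interpretation. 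By design, $\relstr B/\group G$ contains $\relstr C$ as a (not necessarily induced) subgraph: the canonical pairs realize the $m$ edges of $\relstr C$ between $n$ distinguished 1-orbits.

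The next step is to extract the identity from a pseudoloop. Applying the lemma's hypothesis to $(\relstr B,\group G)$ furnishes some $b\arr b'$ with $b'=g(b)$ for some $g\in\group G$ whose $\aut(\relstr A)$-component is some automorphism $\alpha$. Unwinding the inductive definition of $\arr$ as a $\pol(\relstr A)$-orbit of the canonical pairs then traces the pseudoloop back to an $m$-ary polymorphism $s\in\pol(\relstr A)$ satisfying $\alpha(s(\bar a_I))=s(\bar a_J)$ uniformly over $\bar a\in A^n$, where $\bar a_I=(a_{i_1},\dots,a_{i_m})$ and $\bar a_J=(a_{j_1},\dots,a_{j_m})$. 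A standard local-approximation argument using that $\relstr A$ is a model-complete core replaces $\alpha$ and the identity by endomorphisms $u,v$ of $\relstr A$, yielding the $\relstr C$-pseudoloop identity $u(s(\bar a_I))=v(s(\bar a_J))$ and contradicting our assumption.

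The main obstacle I anticipate is engineering the arrow relation so that (a) the canonical pairs witness edges of $\relstr C$ between $n$ \emph{distinct} 1-orbits of $\group G$, even in the awkward case when $\aut(\relstr A)$ acts with few orbits on $A$, and (b) any pseudoloop really corresponds to a \emph{universal} identity in $\bar a$ rather than just a single-tuple witness. Both points are delicate bookkeeping: separating orbits by the $V(\relstr C)$-label (or equivalently, by the position of a distinguished coordinate in the $n$-power construction) handles (a), while choosing the canonical pairs and the $\pol$-closure mechanism so that "quantification over $\bar a$" is absorbed into the closure---rather than being fixed at a single witness---handles (b); this mirrors the indicator-structure constructions used in related loop-lemma arguments in the literature.
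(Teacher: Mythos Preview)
Your overall strategy---build an indicator digraph $\relstr B$ whose edges are the $\pol(\relstr A)$-closure of pairs of projections associated to the edges of $\relstr C$, then read off the identity from a pseudoloop---is exactly the right one, and it is what the paper does. But the specific instantiation you propose does not work, and the gap lies precisely at the point you flag as obstacle~(b).

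With $B=A^n$ (or $A\times V(\relstr C)$), your ``canonical pairs'' are not well-defined: projecting a fixed $n$-tuple onto coordinates $i_k,j_k$ yields a pair in $A\times A$, not in $B\times B$. More importantly, even under any reasonable interpretation, a single pseudoloop $b\arr\alpha(b)$ in $A^n$ unwinds to a polymorphism identity that holds only for the \emph{particular} tuples used to generate that edge, not uniformly for all $\bar a\in A^n$. Your claim that the $\pol$-closure ``absorbs'' the universal quantifier is the crux of the whole argument, and it is simply false for $B=A^n$: the closure ranges over polymorphisms, not over evaluations of the variables.

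The paper resolves this by taking $B=A^{F^n}$ for a \emph{finite} $F\subseteq A$, i.e.\ $B$ is the set of $n$-ary functions $F^n\to A$. Now the $n$ projections $\pi^n_1,\dots,\pi^n_n$ are honest elements of $B$, the generating edges are $(\pi^n_{i_k},\pi^n_{j_k})$, the diagonal $\aut(\relstr A)$-action has finitely many $1$-orbits (they are the $|F|^n$-orbits of $\relstr A$), and a pseudoloop unwinds to the $\relstr C$-identity holding for all $\bar a\in F^n$. One then passes from ``for every finite $F$'' to the global identity by a standard compactness argument. So the local-to-global step is needed not to replace $\alpha$ by $u,v$ (that is trivial: take $u=\alpha$, $v=\id$), but to upgrade the finite-support identity to a genuine one on $A$.
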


	The proof of~\Cref{use_loops} is a straightforward generalization of the proof of Lemma 4.3 in~\cite{Topo}. For the convenience of the reader we provide the argument below.

\begin{proof}
	We show that under the assumptions of the lemma the $\relstr C$-pseudoloop condition is satisfied locally in $\relstr A$, that is for all finite subset $F$ of $A$ there exists $f\in \pol(\relstr A) $ such that $f|_{F^n}$ satisfies the $\relstr C$-pseudoloop condition. The fact that the local satisfaction of a pseudoloop condition implies the global satisfaction follows from a standard compactness argument; for the details we refer the reader to Lemma 10.1.10 in~\cite{Book}.
	
	Let $[m]$ be the vertex set of $\relstr C$, and let $(\sigma_1,\tau_1),\dots,(\sigma_n,\tau_n)$ be an enumeration of the edges of $\relstr C$.
	
	The clone $\pol(\relstr A)$ has natural action of functions $A^{F^m}$, i.e., the set of $m$-ary functions from $F$ to $A$. We define a graph $\relstr B$ on $A^{F^3}$ whose edges are generated by $(\pi_{\sigma_j}^m,\pi_{\tau_j}^m): 1\leq j\leq n$ via the action of $\pol(\relstr A)$, and we define $\mathcal{G}$ be the group of coordinatewise action of $\aut(\relstr A)$ on $A^{F^3}$. By definition $\pol(\relstr A)$ preserves the edge relation on $\relstr B$. Moreover, the 1-orbits of $\mathcal{G}$ correspond to $|F^3|$-orbits of $\aut(\relstr A)$, thus there are finitely many of them, and, since $\relstr A$ is a model-complete core, they are all preserved by $\pol(\relstr A)$. By our construction we also know that $\relstr B$ contains a subgraph isomorphic to $\relstr C$. Thus, by the hypothesis of the lemma it follows that $\relstr B$ contains a pseudoloop modulo $\group G$. This means exactly that the $\relstr C$-pseudoloop condition is satisfied by $\pol(\relstr A)$ on $F$.
\end{proof}

\begin{proof}[Proof of~\Cref{thm:pseudo_conditions}]
    We first show the implication 1)$\rightarrow$3). Note that if $\relstr C$ is a subgraph of $\relstr C'$ then the $\relstr C$-pseudoloop condition implies the $\relstr C'$-pseudoloop condition. Therefore, it is enough to show that $\relstr A$ satisfies all $C_{2k+1}$-pseudoloop conditions where $C_n$ denotes the cyclic graph on $[n]$ with edges $1\sarr 2\sarr \dots \sarr n\sarr 1$.
    
	We check the hypothesis of~\Cref{use_loops} for $\relstr C=C_{2k+1}$. Let $\relstr B$ and $\group G$ be as in hypothesis of~\Cref{use_loops}. Let 
\[
S_0\coloneqq B, S_{i+1}\coloneqq S_i\cap \plusarr{S_i}\cap \minusarr{S_i} 
\]

	Since $\relstr B/\group{G}$ is finite it follows that $S_n=S_{n+1}$ for some $n\in \mathbb{N}$. Then the digraph ${\relstr B}|_{S_n}$ is smooth and it still contains a subgraph isomorphic to $C_{2k+1}$. Moreover, $S_n$ is pp-definable in $\relstr B$, and the 1-orbits of ${\group G}|_{S_n}$ are all 1-orbits of $\group G$. This implies by our assumption that digraph ${\relstr B}|_{S_n}$ together with the 1-orbits of ${\group G}|_{S_n}$ does not pp-interpret $K_3$ with parameters. By Theorem~\ref{thm:one_b_refined} we obtain that ${\relstr B}|_{S_n}$ and thus also $\relstr B$ contains a pseudoloop modulo $\group G$. Therefore, the hypothesis of~\Cref{use_loops} is satisfied which shows that $\relstr A$ satisfies the $C_{2k+1}$-pseudoloop condition.

    The implication 3)$\rightarrow$2) is obvious.

    Finally, the implication 2)$\rightarrow$1) follows from the argument presented in the first paragraph of the proof of Theorem 4.5 in~\cite{Topo}.
\end{proof}

\section{Missing proofs from~\Cref{sect:no-pwnu}}
\label{app:pwnu}
\corecover*
\begin{proof}
	Let $e$ be an endomorphism of $\fa^{\otimes k}$, and let $B$ be the domain of $\fa^{\otimes k}$.
	Since $\neq$ is pp-definable in $\fa^{\otimes k}$, $e$ is injective.
	Since $e$ preserves $\sim$, it induces an endomorphism $e'$ of $\fa$ as follows: for all $t\in [B]^k$, one has $e'([t]_\sim)=[e(t)]_\sim$ (here $e$ is applied componentwise).
	Since $\fa$ is a core, $e'$ is an automorphism of $\fa$.
	If $(t^1,\dots,t^r)\not\in R^k$ for some relation $R$ of $\fa$, then $([t^1]_\sim,\dots,[t^r]_\sim)\not\in R$ by definition, so $(e'([t^1]_\sim),\dots,e'([t^r]_\sim))\not\in R$, i.e., $([e(t^1)]_\sim,\dots,[e(t^r)]_\sim)\not\in R$.
	Again by definition, this means that $(e(t^1),\dots,e(t^r))\not\in R^k$, so that $e$ preserves the complement of every relation of $\fa^{\otimes k}$ and is therefore an embedding.
	Since $\fa^{\otimes k}$ is homogeneous, the restriction of  $e$ to any finite set can be extended to an automorphism of $\fa^{\otimes k}$, so that $\fa^{\otimes k}$ is a model-complete core.
\end{proof}

\superposition*
\begin{proof}
	Let $\fb$ be the Fra\"iss\'e limit of the superposition of all the classes $\mathcal C(\fa_i,k_i)$.
	The proof is the same as the previous one, one defines structures $\fc_s$ using maps $(q^i_s)_{i\in\omega}$, and obtains an embedding $s'$ from $\fc$ to $\fb$ which turns out to be an injective polymorphism of $\fb$.
	The satisfaction of $\overline\Sigma$ is checked similarly.
\end{proof}

\end{document}